\providecommand{\U}[1]{\protect\rule{.1in}{.1in}}
\newtheorem{theorem}{Theorem}[section]
\newtheorem{acknowledgement}{Acknowledgement}
\newtheorem{claim}[theorem]{Assumption}
\newtheorem{corollary}{Corollary}[section]
\newtheorem{definition}{Definition}[section]
\newtheorem{lemma}{Lemma}[section]
\newtheorem{proposition}{Proposition}[section]
\newenvironment{proof}[1][Proof]{\noindent\textbf{#1.} }{\ \rule{0.5em}{0.5em}}
\begin{document}

\title{Estimation of anthracnose dynamics by nonlinear filtering}
\author{David Jaures FOTSA MBOGNE$^{a,}$\thanks{Corresponding author's address: email:
mjdavidfotsa@gmail.com, P.O. Box 455, ENSAI}\\$^{a}${\small Department of Mathematics and Computer Science, ENSAI, The
University of Ngaoundere}}
\date{}
\maketitle

\begin{abstract}
In this paper, we apply the nonlinear filtering theory to the estimation of
the partially observed dynamics of anthracnose which is a phytopathology. The
signal here is the inhibition rate and the observations are the fruit volume
ant the rotted volume. We propose stochastic models based on the deterministic
models given in the references \cite{fotsa,fotsa3}, in order to represent the
noise introduced by uncontrolled variation on parameters and errors on the
measurements. Under the assumption of Brownian noises we prove the well-posedness
the models either they take into account the space variable or not. The
filtering problem is solved for the non-spatial model giving Zakai and
Kushner-Stratonovich equations satisfied respectively by the unnormalized and
the normalized conditional distribution of the signal with respect to the
observations. A prevision problem and a discrete filtering problem are also studied for the realistic cases of discrete and possibly incomplete observations. We illustrate the filter behaviour through numerical simulations corresponding to different scenarios\\
\textbf{KeyWords--- }Anthracnose modelling, State estimation, Nonlinear filtering.\\
\textbf{AMS Classification--- }60H15, 60H10, 93E11, 93E10.

\end{abstract}

\section{Introduction}

\qquad Anthracnose is a phytopathology which occurs on several commercial
tropical crops . Among them the coffee is concerned by the coffee berry
disease (CBD) caused by the \textit{Colletotrichum} \textit{kahawae} which is
an ascomycete fungus
\cite{bieysse,boisson,chen,jeffries,mouen09,muller70,wharton}. In order to
understand, predict and control the disease dynamics, several models have been
proposed in the literature
\cite{danneberger,dodd,duthie,mouen07,mouen09,mouen072,mouen03,mouen08,wharton}%
. Recently, in \cite{fotsa} and \cite{fotsa2}, an evolution model with spatial
diffusion has been studied for anthracnose control. Optimal strategies were
computed with respect to given cost functionals. The general model surveyed in
\cite{fotsa} was given by the following equations:%

\begin{equation}
\partial_{t}\theta=\alpha\left(  t,x\right)  \left(  1-w\left(  t,x\right)
\theta\right)  +\operatorname{div}\left(  A\left(  t,x\right)  \nabla
\theta\right)  \text{, on }%
\mathbb{R}
_{+}^{\ast}\times U
\end{equation}%
\begin{equation}
\partial_{t}v=\frac{\beta\left(  t,x,\theta\right)  }{\eta\left(  t,x\right)
v_{\max}}\left(  \eta\left(  t,x\right)  v_{\max}-\frac{v}{1-\theta}\right)
\end{equation}%
\begin{equation}
\partial_{t}v_{r}=\frac{\gamma\left(  t,x,\theta\right)  }{v}\left(
v-v_{r}\right)
\end{equation}%
\begin{equation}
\theta\left(  0,x\right)  \in\left[  0,1\right[  ,\text{ }x\in\overline
{U}\subseteq%
\mathbb{R}
^{3}%
\end{equation}%
\begin{equation}
\left(  v\left(  0,x\right)  ,v_{r}\left(  0,x\right)  \right)  \in\left]
0,v_{\max}\right]  \times\left[  0,v_{\max}\right]  \text{, }x\in\overline
{U}\subseteq%
\mathbb{R}
^{3}%
\end{equation}
and%
\begin{equation}
\left\langle A\left(  t,x\right)  \nabla\theta\left(  t,x\right)  ,n\left(
x\right)  \right\rangle =0\text{, on }%
\mathbb{R}
_{+}^{\ast}\times\partial U
\end{equation}
where $n\left(  x\right)  $ denotes the normal vector on the boundary at $x$
and
\begin{equation}
w\left(  t,x\right)  =\frac{1}{1-\sigma u\left(  t,x\right)  }\text{.}%
\end{equation}

In the model above $\theta$ denotes the inhibition rate. The state variables
$v$ and $v_{r}$ are respectively the fruit volume density and the rot volume
density. The density $v$ is upper bounded by a value $v_{\max}$ which models
the natural fact that the fruit growth is limited. Nonnegative functions
$\alpha,\beta,\gamma$ characterize the effects of environmental and climatic
conditions on the rate of change of inhibition rate, fruit volume, and
infected fruit volume respectively \cite{danneberger,dodd,duthie}. There is a
control parameter $u$ representing the chemical strategy consisting on the
effects after application of fungicides. The parameter $1-\sigma\in\left]
0,1\right[  $ models the positive inhibition rate corresponding to epidermis
penetration by hyphae. Once the epidermis has been penetrated, the inhibition
rate cannot fall below this value, even under maximum control effort $(u=1)$.
Without any control effort $(u=0)$, the inhibition rate should increase
towards $1$. The environmental and climatic conditions affect the maximum
fruit volume through the $\left]  0,v_{\max}\right]  -$valued function $\eta$.
The term $\operatorname{div}\left(  A\nabla\theta\right)  $ refers to the
spatial spread of the disease in the open domain $U\subset\mathbb{R}^{3}$
which is assumed of class $C^{1}$. The boundary condition $\left\langle
A\nabla\theta,n\right\rangle =0$ where $A$ is a $3\times3$-matrix $\left(
a_{ij}\right)  $ could be understood as the law steering migration of the
disease between $U$ and its exterior. For instance, if $A$ is reduce to $I$
the identity matrix then $\left\langle \nabla\theta,n\right\rangle =0$ means
that the domain $U$ has no exchange with its exterior. The model in
\cite{fotsa2} has a similar form with the model given above. However the
authors added a new control strategy by impulses representing the harvesting
of pathogens with a given frequency.

In several cases, especially for the results in \cite{fotsa,fotsa2} on
anthracnose disease, optimal control strategies are given such as a feedbacks
and need to know the current state of the system and parameters values. It is
difficult in general to know exactly the trajectory of dynamic system.
Unfortunately \ the dynamics of the inhibition rate of anthracnose is not
exempt from that fact. However, it is more easier to observe volumes $v$ and
$v_{r}$. On the other hand, the global evolution of the system is subject to
pertubations coming from several sources. For instance, parameters of the
models vary depending on random climatic conditions and are often estimated
such as statistical averages. We can also mention errors occuring even during
every measurement of any output of the pathosystem. Those perturbations could be
taken into account through stochastic noises. The stochastic framework
presents several advantagies related to the use of large tools developped in
probability theory. As said before a probabilistic model enables to introduce
the randomness for some events. It also permits based on several observations
to smooth the model with better parameters. Another interest of stochastic
model is the possibility of estimation either of parameters or hidden states
using the displayed other states of the system. That last issue has been
widely studied in the framework of hidden Markov processes \cite{elliot}. The
corresponding attempts of solution in the large literature on the topic has
been regrouped on the name "filtering" \cite{bain,elliot,burkholderpardoux}.

The aim of this paper is to propose a stochastically noised model of
anthranose and to apply the filtering theory for the estimation of the
inhibition rate assuming that volumes $v$ and $v_{r}$ are observed. In the
remainder there is the following organization. In the Section
\ref{NotationEtDefinition}, we recall useful definitions adopt some notations that will be used later. The Section \ref{NoisedDynamicsModelling}
focuses on modelling and studying the well-posedness of the noised dynamics of
anthracnose either for the spatially distributed model or not. In the Section
\ref{StateEstimation}, we apply the filtering theory in order to determine the
law of the inhibition rate conditionally to the fruit volume and the rotted
volume. However, we first give in the Subsection \ref{NoisedDynamicsModelling2} an
equivalent model which is more suitable for the filtering procedure. Filtering
equations are derived into the Subsection
\ref{SubsectionContinuousObservationSDE}. The Section \ref{OtherIssues} is concerned by
resolution of a prevision problem in Subsection \ref{SubsectionPredictionSDE} and a
discrete filtering problem in Subsection \ref{SubsectionDiscreteFiltering}. We realize and discuss
several simulations in Section \ref{NumericalEstimations} in order to illustrate the behaviour of the filter for different scenarios. Finally, the paper ends with a
global discussion in Section \ref{DiscussionFiltering1}.

\section{Preliminaries}\label{NotationEtDefinition}

\qquad In this Subsection and the remaining of the paper we consider a
probability space $\left(  \Omega,\mathcal{F},P\right)  $ with a filtration
$\left(  \mathcal{F}_{t}\right)  _{t\geq0}$ such that $\mathcal{F}_{0}$
contains all negligible sets. Let $E$ denote a Banach space, $\mathcal{B}_{E}$
(or simply $\mathcal{B}$ when there is not ambiguity) the Borel $\sigma
$-algebra on $E$ and $\lambda_{E}$ the Lebesgue's measure. In order to
alleviate notations we will note $\left(  \Omega,\mathcal{F},P\right)  $ and
$\left(  E,\mathcal{B}_{E},\lambda_{E}\right)  $ simply by $\Omega$ and $E$.
When $E\subseteq%
\mathbb{R}
$ we simply note $\lambda_{E}$ by $\lambda$. If $F$ is another Banach space
then we note $\mathcal{L}\left(  E;F\right)  $ the space of linear continous
applications from $E$ to $F$, $\mathcal{L}\left(  E\right)  =\mathcal{L}%
\left(  E;E\right)  $, $E^{\prime}=\mathcal{L}\left(  E;%
\mathbb{R}
\right)  $ and $\mathcal{L}^{0}\left(  \Omega;E\right)  $ the set of
$E$-valued random variables\footnote{See \cite{pardoux} for the definition and
properties of random variables valued in Banach spaces.}.

\begin{definition}
Let $X\in\mathcal{L}^{0}\left(  \Omega;E\right)  $, $p\in\left]
0,\infty\right[  $ and $t\geq0$.

\begin{enumerate}
\item[$\left(  i\right)  $] $X\in\mathcal{L}^{p}\left(  \Omega;E\right)  $ if
$E\left[  \left\Vert X\right\Vert _{E}^{p}\right]  \equiv\int\nolimits_{\Omega
}\left\Vert X\left(  \omega\right)  \right\Vert _{E}^{p}dP\left(
\omega\right)  <\infty$.

\item[$\left(  ii\right)  $] $L^{p}\left(  \Omega;E\right)  $ is the set of
classes in $\mathcal{L}^{p}\left(  \Omega;E\right)  $ such that $\left[
X\right]  =\left[  Y\right]  $ if $E\left[  \left\Vert X-Y\right\Vert _{E}%
^{p}\right]  =0$.

\item[$\left(  iii\right)  $] $X\in L_{t}^{p}\left(  \Omega;E\right)  $ if
$X\in L^{p}\left(  \Omega;E\right)  $ and $X$ is $\mathcal{F}_{t}$-measurable.
\end{enumerate}
\end{definition}

\begin{definition}
Let $X\in\mathcal{L}^{0}\left(  \Omega;E\right)  $ and $t\geq0$.

\begin{enumerate}
\item[$\left(  i\right)  $] $X\in\mathcal{L}^{\infty}\left(  \Omega;E\right)
$ if there is a negligible set $\mathcal{N}\in\mathcal{F}$ and a positive
number $m$ such that $\forall\omega\in\Omega\setminus\mathcal{N}$, $\left\Vert
X\left(  \omega\right)  \right\Vert _{E}\leq m$.

\item[$\left(  ii\right)  $] $L^{\infty}\left(  \Omega;E\right)  $ is the set
of classes in $\mathcal{L}^{\infty}\left(  \Omega;E\right)  $ such that
$\left[  X\right]  =\left[  Y\right]  $ if there is a negligible set
$\mathcal{N}\in\mathcal{F}$ such that $\forall\omega\in\Omega\setminus
\mathcal{N}$, $\left\Vert X\left(  \omega\right)  -Y\left(  \omega\right)
\right\Vert _{E}=0$.

\item[$\left(  iii\right)  $] $X\in L_{t}^{\infty}\left(  \Omega;E\right)  $
if $X\in L^{\infty}\left(  \Omega;E\right)  $ and $X$ is $\mathcal{F}_{t}$-measurable.
\end{enumerate}
\end{definition}

\begin{definition}
Let $I\subseteq%
\mathbb{R}
_{+}$ and $X=\left(  X_{t}\right)  _{t\in I}$ such that $\forall t\in I,$
$X_{t}:\left(  \Omega,\mathcal{F},P\right)  \rightarrow\left(  E,\mathcal{B}%
_{E}\right)  $ is a random variable. Then $X$ is called a stochastic process.
$X$ is said $\left(  \mathcal{F}_{t}\right)  $-adapted if $\forall t\in I,$
$X_{t}$ is $\mathcal{F}_{t}$-measurable.
\end{definition}

\begin{definition}
A stochastic process $X$ is said progressively measurable if $\forall t\geq0,$
$X_{t}$ is $\mathcal{B}_{\left[  0,t\right]  }\otimes\mathcal{F}_{t}$-measurable.
\end{definition}

\begin{definition}
\label{BMDefinition}A process $\left(  B_{t}\right)  _{t\geq0}$ is called a
Brownian motion\footnote{See \cite{pardoux} for more details on the topic.
Also see \cite{curtain} page 134, for Hilbert valued Brownian motions.} on
$E^{\prime}$ the dual space of $E$ if the following conditions are satisfied.

\begin{enumerate}
\item[$\left(  i\right)  $] $\forall t\geq0$, $B_{t}$ is a linear form on
$E^{\prime}$.

\item[$\left(  ii\right)  $] $\forall x\in E^{\prime}$, the process $\left(
B_{t}\left(  x\right)  \right)  _{t\geq0}$ is a real Brownian
motion\footnote{See Chapter 1, Section 1.3 in \cite{pardouxRascanu}.}.

\item[$\left(  iii\right)  $] There is a self adjoint positive linear and
continuous operator $K:E^{\prime}\rightarrow E$ such that $\forall x,y\in
E^{\prime},$ $\forall s,t\geq0,$
\[
E\left[  \left(  B_{t}\left(  x\right)  -B_{s}\left(  x\right)  \right)
\left(  B_{t}\left(  y\right)  -B_{s}\left(  y\right)  \right)  \right]
=\left\langle x,Ky\right\rangle \left(  t-s\right)
\]
$K$ is called the associated covariance operator.
\end{enumerate}
\end{definition}

\begin{definition}
Let $V$ and $H$ be two hilbert spaces such that $V\subseteq H$ and $H$ is
identified with its dual space. Let consider $F:\Omega\times%
\mathbb{R}
_{+}\times V\rightarrow H$, $G:\Omega\times%
\mathbb{R}
_{+}\times V\rightarrow H\otimes E^{\prime}$, $\left(  B_{t}\right)  _{t\geq
0}$ a Brownian motion\footnote{See \cite{pardoux} for more details on the
topic. Also see \cite{curtain} page 143.} on $E^{\prime}$ and stochastic
differential equation:
\begin{equation}
\left\{
\begin{array}
[c]{l}%
dX_{t}=F\left(  t,X_{t}\right)  dt+\left\langle G\left(  t,X_{t}\right)
,dB_{t}\right\rangle \\
X_{0}=\xi
\end{array}
\right.  \text{.} \label{GeneralEDPS}%
\end{equation}
A progressively measurable process $X$ is called a (strong) solution of
$\left(  \ref{GeneralEDPS}\right)  $ on $\left[  0,T\right]  $ if it
satisfies
\begin{equation}
\int\nolimits_{0}^{t}\left\Vert F\left(  s,X_{s}\right)  \right\Vert
_{H}ds+\int\nolimits_{0}^{t}\left\Vert G\left(  s,X_{s}\right)  ^{\ast
}G\left(  s,X_{s}\right)  \right\Vert _{H}^{2}ds<\infty
\end{equation}
and
\begin{equation}
X_{t}=\xi+\int\nolimits_{0}^{t}F\left(  s,X_{s}\right)  ds+\int\nolimits_{0}%
^{t}\left\langle G\left(  s,X_{s}\right)  ,dB_{s}\right\rangle \text{.}%
\end{equation}

\end{definition}

If $E$ has a finite dimension $n$ and $\Phi:E\rightarrow%
\mathbb{R}
$ is a functional of class $C^{k}$ then we note $D^{l}\Phi$ the differential
of order $l=\left(  l_{1},\cdots,l_{n}\right)  \in%
\mathbb{N}
^{n}$ with $\sum\nolimits_{i=1}^{n}l_{i}\leq k$. When $n=1$ we simply note
$D\Phi$ instead of $D^{1}\Phi$.

\section{Modelling of the anthracnose noised dynamics}\label{NoisedDynamicsModelling}

\qquad In this section we construct stochastic (partial) differential equation
models which reflect the random behaviour of the anthracnose dynamics. As we
said before, that dynamics is subject to many random pertubations and measurements
on the system are also noised. We make the common choice to represent the
randomness of the system by Brownian motions. Indeed, the Brownian motion has
some good properties and they are several well-known results in the literature
concerning stochastic differential equations with Brownian noise. For instance
the Brownian motion has a continuous version and is a martingale. Those
properties are usefull for the regularity of the solution and the last one is
particularly useful for filtering. We formally note $\theta=\left(  \theta
_{t}\right)  _{t\geq0}$ the stochastic process such that $\forall\omega
\in\Omega$, $\theta_{t}\left(  \omega\right)  $ is a space dependent function
defined on $U\subseteq%
\mathbb{R}
^{3}$ and representing the spatial distribution of anthracnose inhibition
rate. In the same manner we note $\left(  v_{t}\right)  _{\geq0}$ and $\left(
v_{t}^{r}\right)  _{\geq0}$ the spatial processes of fruits volumes and rotted
volumes. We set $\rho_{t}v_{t}=v_{t}^{r}$.

\qquad We adopt the following model for every $\left(  t,x\right)  \in$ $%
\mathbb{R}
_{+}^{\ast}\times U$,
\begin{equation}
d\theta_{t}\left(  x\right)  =\left(  f_{1}\left(  t,x,\theta_{t}\left(
x\right)  \right)  +\pounds _{t}\theta_{t}\left(  x\right)  \right)
dt+g_{1}\left(  t,x,\theta_{t}\left(  x\right)  \right)  dB_{t}^{1}\left(
x\right)  \label{StochasticPDE1}%
\end{equation}%
\begin{equation}
dv_{t}\left(  x\right)  =f_{2}\left(  t,x,v_{t}\left(  x\right)  ,\theta
_{t}\left(  x\right)  \right)  dt+g_{2}\left(  t,x,v_{t}\left(  x\right)
\right)  dB_{t}^{2}\left(  x\right)  \label{StochasticPDE2}%
\end{equation}%
\begin{equation}
d\rho_{t}\left(  x\right)  =f_{3}\left(  t,x,\overline{v}_{t}\left(  x\right)
,\rho_{t}\left(  x\right)  ,\theta_{t}\left(  x\right)  \right)
dt+g_{3}\left(  t,x,\rho_{t}\left(  x\right)  \right)  dB_{t}^{3}\left(
x\right)  \label{StochasticPDE3}%
\end{equation}%
\begin{equation}
\theta_{0},\rho_{0},\frac{v_{0}}{v_{\max}}\in L_{0}^{\infty}\left(
\Omega;L^{\infty}\left(  U;\left[  0,1\right[  \right)  \right)
\label{StochasticPDE4}%
\end{equation}%
\begin{equation}
\left\langle A_{t}\left(  x\right)  \nabla\theta_{t},n\left(  x\right)
\right\rangle =0\text{, on }%
\mathbb{R}
_{+}^{\ast}\times\partial U \label{StochasticPDE5}%
\end{equation}
where $\forall\omega\in\Omega$, $\forall x\in U$, $\forall y=\left(
y_{1},y_{2},y_{3}\right)  \in%
\mathbb{R}
^{3}$, $\forall i\in\left\{  1,3\right\}  $,
\begin{equation}
g_{i}\left(  t,x,y_{i}\right)  =\delta_{i}\left(  t,x\right)  \kappa
_{i}\left(  y_{i}\right)
\end{equation}%
\begin{equation}
g_{2}\left(  t,x,y_{2}\right)  =\delta_{2}\left(  t,x\right)  \kappa
_{2}\left(  \frac{y_{2}}{v_{\max}}\right)
\end{equation}%
\begin{equation}
f_{1}\left(  t,x,y_{1}\right)  =\alpha\left(  t,x\right)  \left(
1-y_{1}w\left(  t,x\right)  \right)  ,
\end{equation}%
\begin{equation}
f_{3}\left(  t,x,y,z\right)  =\gamma\left(  t,x,y\right)  \left(
1-y_{3}\right)  ,
\end{equation}%
\begin{equation}
\pounds _{t}\theta_{t}\left(  \omega\right)  \left(  x\right)
=\operatorname{div}\left(  A\left(  t,x\right)  \nabla\theta_{t}\left(
\omega\right)  \left(  x\right)  \right)  ,
\end{equation}
and
\begin{equation}
f_{2}\left(  t,x,y\right)  =\frac{\beta\left(  t,x,y_{1}\right)  }{\eta\left(
t,x\right)  v_{\max}}\left(  \eta\left(  t,x\right)  v_{\max}-\frac{y_{2}%
}{1+\varepsilon-y_{1}}\right)  .
\end{equation}

The positive term $\varepsilon$ (very smaller than $1$) has been already
introduced in the reference \cite{fotsa3} and models the fact that even the
inhibition rate is near to is maximal value the volume of the fruit is remains
greater than a smallest value. We guess that lower bound value is in the
neighborhood of $\varepsilon v_{\max}\underset{t}{\min}\left\{  \eta\left(
t\right)  \right\}  $. On the other hand the term $\varepsilon$ permits to
avoid singularities in the model. In order to take into account the impacts of
random climatic changes in the model, the parameters are assumed to depend on the time. For $i$ belonging to $\left\{
1,2,3\right\}  $, $\left(  B_{t}^{i}\right)  _{t\geq0}$ is an $\left(
\mathcal{F}_{t}\right)  _{t\geq0}$-adapted cylindrical Brownian motion on the
dual of an Hilbert space to make precise later and its covariance operator is the
identity operator. The system of initial conditions both with the Brownian
motions is assumed independent. Each $\delta_{i}$ is positive function giving
the range of noises and $\kappa_{i}$ is a nonnegative locally lipschitz
continuous functions modelling the dependence of the noises with respect to
the state.

\subsection{A lumped model}\label{LumpedModel}

\qquad In this subsection we survey the model $\left(  \ref{StochasticPDE1}%
\right)  -\left(  \ref{StochasticPDE5}\right)  $ assuming that the diffusion
operator $\pounds _{t}$ is null. That can correspond to a situation where
disease spreading is limited either by natural climatic and relief conditions
or by a control strategy. Since there is not diffusion, the study is
restricted at each point and the space variable can be forgotten. Each
Brownian motion $\left(  B_{t}^{i}\right)  _{t\geq0}$ \ is assumed to be a
standard real Wiener process starting from zero. The model is then simpler to
study and however could display average behaviours and give an idea on the way
to study the general model. We then keep the same notations, omit the space
variable and remove the diffusion term in equation $\left(
\ref{StochasticPDE1}\right)  $. We assume that all the parameters of the model
are not random. The following assumptions are considered.

\begin{claim}
\label{HypothesisFilter1_1}$\forall i\in\left\{  1,2,3\right\}  $, $\delta
_{i}$, $\alpha\in L_{loc}^{\infty}\left(
\mathbb{R}
_{+};%
\mathbb{R}
_{+}\right)  $.
\end{claim}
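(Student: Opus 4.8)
The statement labelled here is a standing hypothesis on the coefficients of the lumped model, not an assertion to be derived; strictly speaking there is nothing to prove, since it is simply imposed on the data. My plan is therefore to make explicit \emph{why} it is the natural hypothesis to impose and how one would check it in a concrete instantiation of the model, which is the only sense in which it admits a ``proof''.

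First I would unwind the statement: $\alpha,\delta_{i}\in L_{loc}^{\infty}(\mathbb{R}_{+};\mathbb{R}_{+})$ means that each of these functions admits a representative that is almost everywhere nonnegative and essentially bounded on every compact interval $[0,T]$, with no uniform bound required as $t\to\infty$. The nonnegativity is dictated by the modelling: $\alpha$ is the environmental forcing of the inhibition rate, assumed nonnegative already in \cite{fotsa,fotsa3}, while the $\delta_{i}$ are noise amplitudes; the local boundedness is precisely the minimal integrability needed so that, on each $[0,T]$, the time-dependent factors multiplying the state-dependent parts $\kappa_{i}$, $w$, $\beta$, $\gamma$ of the drifts and diffusions are under control. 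That is in turn what will feed the local-Lipschitz and linear-growth estimates behind the well-posedness result of the next subsection.

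To verify the hypothesis in an application one would specify the functional forms used for $\alpha$ and the $\delta_{i}$ --- in practice piecewise continuous in $t$, e.g. step functions reflecting seasonal climatic regimes, or continuous interpolations of field data --- and then simply note that any bounded, nonnegative, piecewise continuous function restricted to each $[0,T]$ lies in $L_{loc}^{\infty}(\mathbb{R}_{+};\mathbb{R}_{+})$. The only subtlety worth flagging is the choice of $L_{loc}^{\infty}$ rather than $L^{\infty}$: this is deliberate, so as not to exclude coefficients that drift with $t$, the well-posedness argument being designed to run on each finite horizon $[0,T]$ and then be extended. The genuine difficulty lies not here but in that later step, where this assumption must be combined with the local-Lipschitz hypotheses on the $\kappa_{i}$ and with a control of $\beta,\gamma,\eta$ away from the (artificial, $\varepsilon$-regularised) singularity at $y_{1}=1+\varepsilon$ in order to produce a global strong solution; the present assumption is only one of several ingredients needed there.
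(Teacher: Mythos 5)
You are right that this statement is an Assumption (the paper's \texttt{claim} environment is rendered as ``Assumption'') imposed on the model's coefficients, and the paper accordingly offers no proof of it; your reading of its content and of its role in the subsequent well-posedness argument matches how the paper uses it. Nothing further is required.
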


\begin{claim}
\label{HypothesisFilter1_2}$u,\eta\in L^{\infty}\left(
\mathbb{R}
_{+};\left[  0,1\right]  \right)  $ and $\forall t\geq0$, $\inf\left\{
\eta\left(  s\right)  ;s\in\left[  0;t\right]  \right\}  >0$.
\end{claim}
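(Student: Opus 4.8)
The plan is to establish the three assertions of the statement separately: the range and essential boundedness of the control $u$, the same two facts for the environmental factor $\eta$, and finally the strict positivity of the infimum of $\eta$ over each finite horizon $[0,t]$. The first two are immediate from the modelling meaning of $u$ and $\eta$, whereas the third is the only step carrying genuine content.

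For $u$: by construction $u(t)$ is the normalised chemical-control effort, i.e. the fraction of the maximal admissible fungicide dose applied at time $t$, so that $u=0$ codes the absence of treatment and $u=1$ the maximal one. Admissibility requires that $u$ be Lebesgue measurable with $0\le u(t)\le1$ for every $t$ (this also makes $w=1/(1-\sigma u)$ well defined and positive, since $\sigma u\le\sigma<1$); hence $\operatorname{ess\,sup}_{t\ge0}|u(t)|\le1<\infty$, which is exactly $u\in L^{\infty}(\mathbb{R}_{+};[0,1])$. For $\eta$: in the deterministic model of \cite{fotsa} the environmental factor is $]0,v_{\max}]$-valued, and here I use its normalisation $\eta/v_{\max}$ (still written $\eta$), which therefore takes values in $]0,1]\subset[0,1]$. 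Thus $0\le\eta(t)\le1$ for all $t$ and, $\eta$ being measurable, $\eta\in L^{\infty}(\mathbb{R}_{+};[0,1])$ with $\operatorname{ess\,sup}_{t\ge0}\eta(t)\le1$.

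The substantive point is $\inf_{s\in[0,t]}\eta(s)>0$ for every $t\ge0$. Pointwise positivity $\eta>0$ is by itself insufficient, since a merely measurable strictly positive function can have zero infimum on a compact set; the assertion really records that the attainable fruit volume never degenerates on a finite horizon. I would derive it from the time-continuity of $\eta$, which is the natural regularity since $\eta$ is a continuous function of the (continuously varying) climatic inputs. Granting continuity, I restrict $\eta$ to the compact interval $[0,t]$ and invoke the extreme value theorem: $\eta$ attains a minimum at some $s^{\ast}\in[0,t]$, whence $\inf_{s\in[0,t]}\eta(s)=\eta(s^{\ast})>0$ by strict positivity.

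The main obstacle is exactly this last step, and it is of a finite-horizon (compactness) nature: the infimum statement is false on all of $\mathbb{R}_{+}$, where $\eta$ may tend to $0$ at infinity, so the argument must be localised to $[0,t]$ and rests on the regularity assumed for $\eta$. If one wishes to work only with measurable $\eta$, continuity has to be replaced by the explicit hypothesis that $1/\eta\in L^{\infty}_{loc}(\mathbb{R}_{+})$ --- equivalently, that $\eta$ is essentially bounded below by a positive constant on each $[0,t]$ --- in which case the infimum is read as an essential infimum; this mirrors the $L^{\infty}_{loc}$ framework already imposed on $\delta_{i}$ and $\alpha$ in Assumption \ref{HypothesisFilter1_1}. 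Either way, this lower bound is the sole non-automatic ingredient, the remaining claims following at once from the ranges of the control and the environmental factor.
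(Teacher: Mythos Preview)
The statement you are trying to prove is not a result at all: in the paper the environment \texttt{claim} is declared via \verb|\newtheorem{claim}[theorem]{Assumption}|, so what is displayed is an \emph{Assumption}, introduced by the sentence ``The following assumptions are considered.'' The paper offers no proof because none is required --- it is a standing hypothesis on the data $u$ and $\eta$, on the same footing as Assumptions~\ref{HypothesisFilter1_1}, \ref{HypothesisFilter1_3}, and \ref{HypothesisFilter1_4}.

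What you have written is therefore not a proof but a heuristic justification for why the hypothesis is reasonable from the modelling standpoint. That discussion is sensible as motivation (and your observation that pointwise positivity of $\eta$ does not by itself give a positive infimum on $[0,t]$ without some regularity is correct), but it does not belong under the heading of a proof, and the paper neither proves nor attempts to prove this statement.
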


\begin{claim}
\label{HypothesisFilter1_3}$\forall i\in\left\{  1,2,3\right\}  $, $\kappa
_{i}$ is a nonnegative locally Lipschitz continuous function which is positive
on the set $\left]  0,1\right[  $ and null on $%
\mathbb{R}
\setminus\left]  0,1\right[  $.
\end{claim}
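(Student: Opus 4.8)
The plan is to read Assumption~\ref{HypothesisFilter1_3} as an admissibility (non-vacuity) requirement on the noise-shape functions $\kappa_i$ and to establish it by exhibiting a single explicit representative that meets all four demands at once; only one representative is needed, since the assumption constrains the common shape of the $\kappa_i$ rather than their individual identities, so one function handles every $i\in\{1,2,3\}$ simultaneously (taking, if desired, distinct positive scalar multiples). The four requirements---nonnegativity, local Lipschitz continuity, strict positivity on $\left]0,1\right[$, and vanishing on $\mathbb{R}\setminus\left]0,1\right[$---are in mild tension: positivity on the open interval together with continuity and nullity outside already force $\kappa_i(0)=\kappa_i(1)=0$, so the construction must glue a positive interior profile to the identically zero exterior precisely at the endpoints.

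First I would propose the prototype $\kappa(y)=\max\{y(1-y),0\}$, equivalently $\kappa(y)=\left(y\vee 0\right)\left(\left(1-y\right)\vee 0\right)$, so that $\kappa(y)=y(1-y)$ on $\left[0,1\right]$ and $\kappa\equiv 0$ on $\left(-\infty,0\right]\cup\left[1,\infty\right)$. The three algebraic conditions are then immediate: $\kappa$ is a product of two nonnegative factors, hence nonnegative; both factors are strictly positive on $\left]0,1\right[$, giving strict positivity there; and for $y\le 0$ the first factor vanishes while for $y\ge 1$ the second does, giving nullity on $\mathbb{R}\setminus\left]0,1\right[$ and in particular $\kappa(0)=\kappa(1)=0$.

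The only nonroutine point is local Lipschitz continuity, and for this I would use the representation $\kappa=\max\{p,0\}$ with $p(y)=y(1-y)$: the polynomial $p$ is $C^\infty$ and the constant $0$ is trivially Lipschitz, and the pointwise maximum of finitely many locally Lipschitz functions is locally Lipschitz, which yields the claim directly. The step deserving the most care is what happens at the two gluing corners $y=0$ and $y=1$, the only places where regularity could fail: away from them $\kappa$ coincides with a smooth quadratic or with a constant, while at them the one-sided derivatives are all finite ($\kappa'(0^-)=0$, $\kappa'(0^+)=1$, $\kappa'(1^-)=-1$, $\kappa'(1^+)=0$), so $\kappa$ has genuine corners rather than cusps or jumps and its difference quotients remain bounded on every compact set. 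Hence $\kappa$ is (locally, indeed globally) Lipschitz, and Assumption~\ref{HypothesisFilter1_3} is satisfied by the exhibited $\kappa_i$.
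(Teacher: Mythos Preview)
Your reading is correct that this statement is an \emph{assumption}, not a theorem: note the preamble line \texttt{\textbackslash newtheorem\{claim\}[theorem]\{Assumption\}}, so the environment labelled \texttt{claim} is typeset as ``Assumption'' and is used throughout Section~\ref{LumpedModel} to list hypotheses on the model data. Accordingly, the paper gives no proof of this item at all; there is nothing to compare your argument against.

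That said, your decision to verify non-vacuity is sensible, and the representative you exhibit, $\kappa(y)=\max\{y(1-y),0\}$, is precisely the one the paper itself adopts for the numerical illustrations in Section~\ref{NumericalEstimations} (``$\kappa_{1}(x)=x(1-x)$ if $x\in\left]0,1\right[$ and is null elsewhere''). Your verification of the four properties is correct, including the only delicate point, local Lipschitz continuity at the gluing points $0$ and $1$; the argument via the maximum of locally Lipschitz functions is clean and sufficient. So the mathematics is fine, but be aware that in the paper's logical structure this is a standing hypothesis imposed on the modeller's choice of noise shape, not a claim the paper sets out to prove.
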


\begin{claim}
\label{HypothesisFilter1_4}$\beta\in L_{loc}^{\infty}\left(
\mathbb{R}
_{+}\times%
\mathbb{R}
;%
\mathbb{R}
_{+}\right)  $ and $\gamma\in L_{loc}^{\infty}\left(
\mathbb{R}
_{+}\times%
\mathbb{R}
^{3};%
\mathbb{R}
_{+}\right)  $ satisfies
\end{claim}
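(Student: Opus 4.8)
The final statement is a standing hypothesis on the model data, so ``establishing'' it amounts to verifying, for the concrete environmental coefficients $\beta$ and $\gamma$, the three properties that characterise membership of a local $L^{\infty}$ space over $\mathbb{R}_{+}\times\mathbb{R}^{d}$: joint (Borel) measurability, almost everywhere nonnegativity, and essential boundedness on every compact subset. The plan is to reduce each of these to the modelling primitives already constrained by Assumptions \ref{HypothesisFilter1_1}--\ref{HypothesisFilter1_3}.

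First I would record measurability. Since $\beta(t,y_{1})$ and $\gamma(t,y)$ are built from the climatic input functions, which are Lebesgue measurable in $t$, together with a continuous dependence on the state argument, they are Carath\'eodory functions; a Carath\'eodory function is jointly measurable with respect to $\mathcal{B}_{\mathbb{R}_{+}}\otimes\mathcal{B}_{\mathbb{R}^{d}}$, so $\beta$ and $\gamma$ genuinely define elements of the relevant measurable-function classes. Nonnegativity is immediate from the modelling interpretation: $\beta$ and $\gamma$ are growth-rate coefficients that scale the nonnegative production terms in $f_{2}$ and $f_{3}$, hence $\beta\geq 0$ and $\gamma\geq 0$ almost everywhere, which is exactly the constraint encoded by the target codomains $\mathbb{R}_{+}$.

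The substantive step is local essential boundedness. Fix $T>0$ and $M>0$ and set $K=[0,T]\times[-M,M]$ for $\beta$, respectively $K=[0,T]\times[-M,M]^{3}$ for $\gamma$. On the time slab the climatic coefficients are bounded, this being the finite-horizon boundedness already imposed on $\alpha,\delta_{i}\in L_{loc}^{\infty}(\mathbb{R}_{+};\mathbb{R}_{+})$ and on $\eta,u\in L^{\infty}(\mathbb{R}_{+};[0,1])$ through Assumptions \ref{HypothesisFilter1_1}--\ref{HypothesisFilter1_2}, while the state dependence, being continuous, is bounded on the compact state box. Taking the product, respectively the composition, I would conclude $\operatorname{ess\,sup}_{K}\beta<\infty$ and $\operatorname{ess\,sup}_{K}\gamma<\infty$. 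As $T$ and $M$ are arbitrary, this yields essential boundedness on every compact set, that is $\beta\in L_{loc}^{\infty}(\mathbb{R}_{+}\times\mathbb{R};\mathbb{R}_{+})$ and $\gamma\in L_{loc}^{\infty}(\mathbb{R}_{+}\times\mathbb{R}^{3};\mathbb{R}_{+})$. The trailing requirement introduced by ``satisfies'', namely a local regularity bound of Lipschitz type in the state variables needed later for well-posedness, would be verified in the same spirit by Lipschitz-estimating the explicit expressions on each compact $K$.

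The point I expect to need the most care is the passage from the physically meaningful domain to the full Euclidean domain. The states $(v,\rho,\theta)$ that actually feed $\beta$ and $\gamma$ are confined to bounded invariant regions, the inhibition rate lying in $[0,1[$ and the normalised volumes in $[0,1]$, so on the physical region boundedness is automatic; the value of the $L_{loc}^{\infty}$ formulation is precisely that it only demands control on compacts, which continuity on each compact box supplies without any global boundedness in the state. I would therefore be careful to specify $\beta$ and $\gamma$ on all of $\mathbb{R}_{+}\times\mathbb{R}^{d}$, for instance by continuous extension off the physical region, so that the asserted membership holds literally rather than only on its restriction to the invariant set.
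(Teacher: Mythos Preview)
The statement you were asked to address is not a theorem but a standing modelling assumption: in the paper the environment \texttt{claim} is declared via \verb|\newtheorem{claim}[theorem]{Assumption}|, so what appears is ``Assumption~3.4''. There is accordingly no proof in the paper, and nothing to prove; the author simply \emph{imposes} that the data $\beta$ and $\gamma$ lie in the indicated $L_{loc}^{\infty}$ spaces (and, in the clauses $(i)$--$(ii)$ that follow, satisfy further structural conditions), and then \emph{uses} these hypotheses in the well-posedness arguments of Section~\ref{LumpedModel}.

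Your write-up is therefore answering a different question. You are effectively arguing that the hypothesis is \emph{reasonable}, i.e.\ that concrete Carath\'eodory-type choices of $\beta,\gamma$ built from the climatic inputs of Assumptions~\ref{HypothesisFilter1_1}--\ref{HypothesisFilter1_3} will indeed satisfy it. That is a sensible consistency check, and the mechanics you sketch (joint measurability via Carath\'eodory, nonnegativity by interpretation, local boundedness by compactness in the state variable and $L_{loc}^{\infty}$ control in time) are fine as far as they go. But it is not a proof of the stated item, because the item is a hypothesis that the paper is free to assume without derivation. If the task is to match the paper, the correct response is simply to note that Assumption~\ref{HypothesisFilter1_4} is a hypothesis, not a result, and that the paper supplies no proof.
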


\begin{enumerate}
\item[$\left(  i\right)  $] $\gamma$ is a measurable with respect to the two
first parameters and locally Lipschitz continuous with respect to the third parameter,

\item[$\left(  ii\right)  $] $\gamma\left(  t,.,.,.\right)  $ is increasing
with respect to the first parameter and $\gamma\left(  t,0,.,.\right)  $ is
decreasing with respect to the last parameter. Moreover, $\gamma\left(
t,.,.,0\right)  $ is nonnegative and such that
\begin{equation}
\gamma\left(  t,0,.,0\right)  =0=\gamma\left(  t,.,0,.\right)  \text{.}%
\end{equation}

\end{enumerate}

The assumption $\left(  \ref{HypothesisFilter1_4}\right)  -\left(  i\right)  $
guarantees that while the berry has a null volume (without berry) or the
disease has not started, the rot volume remains null. The assumption $\left(
\ref{HypothesisFilter1_4}\right)  -\left(  ii\right)  $ means that the rot
volume increases with the inhibition rate; when there is a not inhibition the
volume of rot does not increase while the fruit grows better and therefore the
proportion $\rho$ decreases. $\gamma$ could be chosen with the form
$\gamma\left(  t,y_{1},y_{2},y_{3}\right)  =\left(  \gamma_{1}\left(
t\right)  y_{1}-\gamma_{2}\left(  t\right)  y_{3}\right)  y_{2}$, with
$\gamma_{1},\gamma_{2}:%
\mathbb{R}
_{+}\times%
\mathbb{R}
^{3}\rightarrow%
\mathbb{R}
_{+\text{.}}$ Since all the coefficients of the simplified version of the
model $\left(  \ref{StochasticPDE1}\right)  -\left(  \ref{StochasticPDE5}%
\right)  $ are Lipschitz continuous with respect
to state variables we can apply Theorem 5.2.1 in \cite{oksendal} (page 66) to
conclude that there is unique solution (in the sense of indistinguishability)
defined on a maximal time set $\left[  0,T\right[  $ with $T\in%
\mathbb{R}
_{+}^{\ast}\cup\left\{  \infty\right\}  $. In the remainder of the subsection
we will establish that the solution is bounded in $\left[  0,1\right]  ^{3}$
and therefore that $T=\infty$.

\begin{lemma}
\label{BoundednessSDE}Let $\left(  \theta_{0},\frac{v_{0}}{v_{\max}},\rho
_{0}\right)  \in\left[  0,1\right]  ^{3}$, $P$-almost surely. If $\left(
\left(  \theta_{t},v_{t},\rho_{t}\right)  \right)  _{t\in\left[  0,T\right[
}$ is the solution of the lumped model $\left(  \ref{StochasticPDE1}\right)
-\left(  \ref{StochasticPDE5}\right)  $ then $P$-almost surely, $\forall
t\in\left[  0,T\right[  $, $\left(  \theta_{t},\frac{v_{t}}{v_{\max}},\rho
_{t}\right)  \in\left[  0,1\right]  ^{3}$.
\end{lemma}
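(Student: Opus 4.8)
The plan is to show that each coordinate of the process $\left(\theta_t,\tfrac{v_t}{v_{\max}},\rho_t\right)$ stays in $[0,1]$ by a comparison/invariance argument that exploits the structure of the drift and the degeneracy of the diffusion coefficients $\kappa_i$ at the endpoints $0$ and $1$ (Assumption \ref{HypothesisFilter1_3}). The key observation is that since each $\kappa_i$ vanishes on $\mathbb{R}\setminus\left]0,1\right[$, once a coordinate reaches $0$ or $1$ the corresponding noise term switches off, so the behaviour at the boundary of the cube is governed purely by the drift; and the drift, under Assumptions \ref{HypothesisFilter1_1}, \ref{HypothesisFilter1_2}, \ref{HypothesisFilter1_4}, points \emph{into} $[0,1]^3$ on the relevant faces. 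Concretely: for $\theta$, $f_1(t,y_1)=\alpha(t)(1-y_1 w(t))$ with $w=1/(1-\sigma u)\ge 1$, so $f_1(t,0)=\alpha(t)\ge 0$ and $f_1(t,1)=\alpha(t)(1-w(t))\le 0$; for $\rho$, $f_3(t,\overline v,y_3,\cdot)=\gamma(t,\overline v)(1-y_3)$ is $\ge 0$ at $y_3=0$ and $\le 0$ at $y_3=1$ (using nonnegativity of $\gamma$); and for $v/v_{\max}$, $f_2$ gives $\partial_t v\ge 0$ when $v=0$ and $\partial_t v\le 0$ when $v=v_{\max}$ (using $v_{\max}-\tfrac{v}{1+\varepsilon-\theta}=v_{\max}-\tfrac{v_{\max}}{1+\varepsilon-\theta}\le 0$ once $\theta\ge 0$, since $1+\varepsilon-\theta\le 1+\varepsilon$ and... one must check the sign carefully), all of which keep the trajectory confined.

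The cleanest way to turn these sign conditions into a proof, avoiding delicate boundary local-time arguments, is a penalization / Lyapunov estimate. I would introduce the smooth penalty $\varphi(y)=\big((y)^-\big)^2 + \big((y-1)^+\big)^2$ (or a $C^2$ mollification thereof) applied to each coordinate, and apply It\^o's formula to $\varphi(\theta_t)$, $\varphi(v_t/v_{\max})$, $\varphi(\rho_t)$ on $[0,T'\wedge\tau_n]$ for stopping times $\tau_n$ localizing the local martingale. On the set where a coordinate is outside $[0,1]$ the diffusion coefficient $\kappa_i$ is zero, so the It\^o correction term $\tfrac12\varphi''(\cdot)\,g_i^2$ vanishes there; the remaining drift term $\varphi'(\cdot)f_i(\cdot)$ is $\le 0$ on $\{y<0\}\cup\{y>1\}$ precisely by the sign computations above (so that, e.g., $2(y)^-\cdot(-f_i)\le 0$ when $f_i(t,0^-)\ge 0$ and $f_i$ is Lipschitz hence doesn't change sign abruptly below $0$); and the stochastic integral has zero expectation. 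Hence $E\big[\varphi(\theta_{t\wedge\tau_n})\big]\le E\big[\varphi(\theta_0)\big]+C\int_0^t E\big[\varphi(\theta_{s\wedge\tau_n})\big]ds$, and since $\varphi$ of the initial data is $0$ $P$-a.s. by hypothesis, Gronwall's lemma forces $E[\varphi(\theta_{t\wedge\tau_n})]=0$; letting $n\to\infty$ and using continuity of paths gives $\theta_t\in[0,1]$ for all $t\in[0,T[$, $P$-a.s., and likewise for the other two coordinates. One subtlety is that the three equations are coupled (e.g. $f_2$ depends on $\theta$, $f_3$ on $v$ and $\theta$), so the Gronwall constant $C$ must be obtained uniformly using the already-established bounds on the coordinates one relies on; I would therefore prove the bound for $\theta$ first (it decouples), then for $v/v_{\max}$ using $\theta\in[0,1]$, then for $\rho$ using both.

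The main obstacle I anticipate is handling the diffusion term rigorously near the boundary: although $\kappa_i$ vanishes \emph{outside} $\left]0,1\right[$, the penalty function's second derivative $\varphi''$ is supported exactly on $\{y<0\}\cup\{y>1\}$, where $\kappa_i=0$, so the product $\varphi''(y)\kappa_i(y)^2$ is identically $0$ and the worry evaporates — but one must be careful that $\kappa_i$ is merely locally Lipschitz, not $C^1$, so the It\^o formula must be applied with a $C^2$ test function (which $\varphi$ is) against a continuous semimartingale, which is fine. A second, more genuine obstacle is the sign of $f_2$ at the face $v=v_{\max}$: one needs $\eta(t)v_{\max}-\tfrac{v_{\max}}{1+\varepsilon-\theta}\le 0$, equivalently $\eta(t)(1+\varepsilon-\theta)\le 1$, which does \emph{not} hold for all admissible $\theta\in[0,1]$ and $\eta\in[0,1]$ — one has $\eta(1+\varepsilon-\theta)\le \eta(1+\varepsilon)$, which can exceed $1$ when $\eta$ is close to $1$. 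So I expect the actual argument either invokes a sharper structural hypothesis (perhaps $\eta v_{\max}$ is meant to be the true carrying capacity so the relevant upper barrier for $v$ is $\eta(t)v_{\max}(1-\theta)$ rather than $v_{\max}$, cf.\ the original deterministic model equation for $v$), or the $v$-bound is stated with respect to a $\theta$-dependent ceiling; I would revisit the precise form of $f_2$ and the intended invariant region, and if necessary prove $v_t\le v_{\max}$ only under the additional restriction $\|\eta\|_\infty(1+\varepsilon)\le 1$, flagging this as a standing assumption consistent with \ref{HypothesisFilter1_2}.
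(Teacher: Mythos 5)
Your proposal follows essentially the same route as the paper's proof: a nonnegative $C^{2}$ penalty function vanishing exactly on $\left[0,1\right]$, It\^{o}'s formula, the observation that both the stochastic integral and the second-order correction die because $\kappa_{i}$ vanishes off $\left]0,1\right[$ while $D\varphi$ and $D^{2}\varphi$ vanish on $\left[0,1\right]$, and a sign analysis of the drift outside $\left[0,1\right]$; the paper concludes directly from $d\varphi(\theta_{t})/dt\leq 0$ and $\varphi(\theta_{0})=0$ rather than via localization and Gronwall, which is only a cosmetic difference, and it treats the coordinates in the same order you propose ($\theta$ first, then $v$ and $\rho$). Your closing worry about the face $v=v_{\max}$ is well taken and is \emph{not} addressed in the paper, whose proof dismisses $v$ and $\rho$ with ``using similar arguments'': for $f_{2}$ to be nonpositive at $v=v_{\max}$ one needs $\eta(t)\left(1+\varepsilon-\theta_{t}\right)\leq 1$, which can fail for $\theta_{t}$ near $0$ when $\left\Vert \eta\right\Vert _{\infty}>1/(1+\varepsilon)$; with the $\varepsilon$-regularized drift the true invariant ceiling for $v$ is $(1+\varepsilon)\left\Vert \eta\right\Vert _{\infty}v_{\max}$, so the lemma as stated implicitly requires $\left\Vert \eta\right\Vert _{\infty}\leq 1/(1+\varepsilon)$ (exactly the value $\eta=1/(1+\varepsilon)$ used in the paper's simulations). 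Apart from correctly flagging that point, which the paper glosses over, your argument is sound.
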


\begin{proof}
Let $\varphi\in C^{2}\left(
\mathbb{R}
\right)  $ be a nonnegative function which is null on $\left[  0,1\right]  $
and positive elsewhere, decreases on $\left]  -\infty,0\right[  $ but
increases on $\left]  1,\infty\right[  $. An example of such a function
$\varphi$ is the map
\[
x\mapsto\left\{
\begin{array}
[c]{l}%
-x^{3}\text{, }x\leq0\\
0\text{, }0<x<1\\
\left(  x-1\right)  ^{3}\text{, }x\geq1
\end{array}
\right.
\]
Using the It\^{o} formula we have
\begin{align*}
d\varphi\left(  \theta_{t}\right)   &  =D\varphi\left(  \theta_{t}\right)
d\theta_{t}+\frac{1}{2}D^{2}\varphi\left(  \theta_{t}\right)  d\theta_{t}\cdot
d\theta_{t}\\
&  =f_{1}\left(  t,\theta_{t}\right)  D\varphi\left(  \theta_{t}\right)
dt+\frac{1}{2}D^{2}\varphi\left(  \theta_{t}\right)  \left(  g_{1}\left(
t,\theta_{t}\right)  \right)  ^{2}dt\\
&  +D\varphi\left(  \theta_{t}\right)  g_{1}\left(  t,\theta_{t}\right)
dB_{t}^{1}\\
&  =f_{1}\left(  t,\theta_{t}\right)  D\varphi\left(  \theta_{t}\right)  dt
\end{align*}
We can easily check that $f_{1}\left(  t,\theta_{t}\right)  D\varphi\left(
\theta_{t}\right)  $ is not positive and
\[
\frac{d\varphi\left(  \theta_{t}\right)  }{dt}\leq0
\]
The last inequality and the fact that $\varphi\left(  \theta_{0}\right)  =0$
imply that $\varphi\left(  \theta_{t}\right)  $ is not positive and therefore
null since $\varphi$ is a nonnegative function. Using the definition of
$\varphi$ we deduce that necessarily $\theta_{t}\in\left[  0,1\right]  $.
Using similar arguments and the fact that almost surely $\forall t\in\left]
0,T\right[  $, $\theta_{t}\in\left[  0,1\right]  $ we also obtain that
$\forall t\in\left]  0,T\right[  $, $\varphi\left(  \frac{v_{t}}{v_{\max}%
}\right)  =\varphi\left(  \rho_{t}\right)  =0$. Therefore $\left(  \rho
_{t},\frac{v_{t}}{v_{\max}}\right)  \in\left[  0,1\right]  ^{2}$.
\end{proof}

\begin{proposition}
\label{ExistenceAndUniquenessSDE}The lumped model has a unique (in the sense
of indistinguishability) solution defined on $%
\mathbb{R}
_{+}$.
\end{proposition}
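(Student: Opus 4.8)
The statement follows by combining the two results already in hand: the local existence-and-uniqueness assertion (an application of Theorem 5.2.1 in \cite{oksendal}, which gives a unique solution up to a maximal time $T \in \mathbb{R}_+^* \cup \{\infty\}$) and the a priori bound of Lemma \ref{BoundednessSDE}, which confines the solution to the compact set $[0,1]^3$ for all $t \in [0,T[$. The only thing to argue is that the maximality of $T$ together with this uniform bound forces $T = \infty$; once that is done, uniqueness on $\mathbb{R}_+$ is inherited from the local uniqueness statement, since two global solutions restricted to any $[0,t]$ must be indistinguishable.

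First I would invoke local existence and uniqueness to obtain the solution $((\theta_t, v_t, \rho_t))_{t \in [0,T[}$ and recall that $T$ is characterized as a blow-up time: on the event $\{T < \infty\}$ one has $\limsup_{t \uparrow T}\bigl(\lvert \theta_t\rvert + \lvert v_t\rvert + \lvert \rho_t\rvert\bigr) = \infty$, $P$-almost surely (this is the standard explosion dichotomy for SDEs with locally Lipschitz coefficients). Then I would apply Lemma \ref{BoundednessSDE}: since the initial datum lies in $[0,1]^3$ almost surely, the lemma gives $(\theta_t, v_t/v_{\max}, \rho_t) \in [0,1]^3$ for every $t \in [0,T[$, so in particular the process stays bounded by a deterministic constant (namely $1 + v_{\max}$) on $[0,T[$. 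This contradicts the blow-up characterization unless $P(T < \infty) = 0$, i.e. $T = \infty$ almost surely.

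A small technical point worth making explicit: Theorem 5.2.1 in \cite{oksendal} in its basic form assumes globally Lipschitz and linear-growth coefficients, whereas here the coefficients are only \emph{locally} Lipschitz (indeed $f_2$ involves the factor $1/(1+\varepsilon - y_1)$, which is merely locally Lipschitz on the relevant range, and the $\kappa_i$ are only locally Lipschitz). The clean way around this is the usual localization/truncation argument: for each $N$ replace the coefficients outside the ball of radius $N$ by their values on the boundary (or multiply by a smooth cutoff), obtaining globally Lipschitz coefficients to which the Oksendal theorem applies directly, giving a solution $X^N$; the stopping times $\tau_N = \inf\{t : \lvert X^N_t\rvert \geq N\}$ are nondecreasing, the solutions are consistent on $[0,\tau_N[$ by uniqueness, and $T := \lim_N \tau_N$ is the maximal existence time for the original equation. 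Lemma \ref{BoundednessSDE} (whose proof via the Itô formula for $\varphi$ is itself insensitive to this truncation, since it only uses the sign of the drift on the already-reached values) then shows $\tau_N = \infty$ for any $N > 1 + v_{\max}$, hence $T = \infty$.

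**Main obstacle.** There is no real obstacle; the mathematical content is entirely in Lemma \ref{BoundednessSDE}, which is already proved. The only care needed is the bookkeeping just described — ensuring that the passage from "locally Lipschitz" to a rigorous maximal solution is handled by truncation, and that the a priori bound is correctly read off as a deterministic bound incompatible with explosion. I would therefore keep the proof short: cite the local result, cite Lemma \ref{BoundednessSDE}, note the explosion alternative, and conclude $T = \infty$; uniqueness on all of $\mathbb{R}_+$ then follows immediately from the local uniqueness.
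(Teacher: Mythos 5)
Your proof is correct and takes essentially the same route as the paper, whose entire proof is the one-line citation of Lemma \ref{BoundednessSDE} together with Theorem 5.2.1 of \cite{oksendal}. The extra care you take with the explosion dichotomy and the localization/truncation argument for the merely locally Lipschitz coefficients supplies details the paper leaves implicit.
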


\begin{proof}
Using Lemma \ref{BoundednessSDE} and the Theorem 5.2.1 in \cite{oksendal} the
result follows.
\end{proof}

\begin{lemma}
\label{RemainInteriorSDE}Let $\left(  \theta_{0},\frac{v_{0}}{v_{\max}}%
,\rho_{0}\right)  \in\left]  0,1\right[  ^{3}$, $P$-almost surely. If $\left(
\left(  \theta_{t},v_{t},\rho_{t}\right)  \right)  _{t\geq0}$ is the solution
of the lumped model then $P$-almost surely, $\forall t\geq0$, $\left(
\theta_{t},\frac{v_{t}}{v_{\max}},\rho_{t}\right)  \in\left]  0,1\right[
^{3}$.
\end{lemma}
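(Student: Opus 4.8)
The strategy mirrors the proof of Lemma \ref{BoundednessSDE}, but now I must prevent the trajectory from reaching the \emph{boundary} of $\left[0,1\right]^{3}$, not merely from leaving the cube. From Lemma \ref{BoundednessSDE} and Proposition \ref{ExistenceAndUniquenessSDE} I already know the solution exists on all of $\mathbb{R}_{+}$ and stays in $\left[0,1\right]^{3}$ almost surely; I only need to upgrade this to the open cube. I would treat the three coordinates in the order $\theta$, then $\rho$, then $v/v_{\max}$, since the dynamics of $\rho$ and $v$ depend on $\theta$ (and $\rho$ on $v$) but not conversely, exactly as in Lemma \ref{BoundednessSDE}.

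For the $\theta$-component, fix $\omega$ outside the negligible set and suppose for contradiction that $\tau=\inf\{t\geq 0:\theta_{t}\in\{0,1\}\}<\infty$. On $\left]0,1\right[$ the noise coefficient $g_{1}(t,\theta_{t})=\delta_{1}(t)\kappa_{1}(\theta_{t})$ need not vanish, so I cannot kill the martingale part pathwise; instead I would compare with a one-dimensional diffusion. The cleanest route is to apply It\^{o} to $\psi(\theta_{t})$ for a suitable $\psi\in C^{2}(\left]0,1\right[)$ that blows up at $0$ and $1$ — for instance $\psi(x)=-\log x-\log(1-x)$, or more robustly a scale-function type construction — and show that $\mathbb{E}[\psi(\theta_{t\wedge\tau_{n}})]$ stays bounded, where $\tau_{n}$ is the exit time from $[1/n,1-1/n]$; boundedness of the expectation forces $P(\tau<\infty)=0$. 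The drift term $f_{1}(t,\theta_{t})D\psi(\theta_{t})=\alpha(t)(1-\theta_{t}w(t))(-1/\theta_{t}+1/(1-\theta_{t}))$ pushes \emph{inward} near both endpoints (near $0$ it is $\sim +\alpha(t)/\theta_{t}$ since $1-\theta_{t}w(t)\to 1$, and near $1$ it is $\sim -\alpha(t)(1-w(t))/(1-\theta_{t})$ with $1-w(t)=\sigma u(t)/(1-\sigma u(t))\geq 0$), which is the right sign; the second-order term $\tfrac12 D^{2}\psi(\theta_{t})\delta_{1}(t)^{2}\kappa_{1}(\theta_{t})^{2}$ must be controlled using the local Lipschitz property of $\kappa_{1}$ together with $\kappa_{1}(0)=\kappa_{1}(1)=0$, which gives $\kappa_{1}(x)^{2}=O(x^{2})$ near $0$ and $O((1-x)^{2})$ near $1$, so that $D^{2}\psi(\theta_{t})\kappa_{1}(\theta_{t})^{2}$ stays bounded near the endpoints. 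Together these estimates give $d\psi(\theta_{t})\leq C(1+\psi(\theta_{t}))\,dt + dM_{t}$ with $M$ a local martingale, and Gr\"{o}nwall after stopping yields the claim.

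With $\theta_{t}\in\left]0,1\right[$ for all $t$ in hand, I repeat the argument for $\rho$: on $\left]0,1\right[$ its dynamics are $d\rho_{t}=\gamma(t,\bar v_{t},\rho_{t},\theta_{t})(1-\rho_{t})\,dt+\delta_{3}(t)\kappa_{3}(\rho_{t})\,dB^{3}_{t}$. Near $\rho=1$ the drift is $\leq 0$ (factor $1-\rho_{t}$), and near $\rho=0$ we use $\gamma(t,\cdot,\cdot,0)\geq 0$ together with $\gamma(t,\cdot,0,\cdot)=0$ — so $\gamma(t,\bar v_{t},\rho_{t},\theta_{t})\to 0$ as $\rho_{t}\to 0$ only if $\bar v_{t}\to 0$, which is precisely the case I must separately exclude; hence I should handle $v$ slightly before completing $\rho$, or interleave them. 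For $v/v_{\max}$: near $1$, $f_{2}(t,v,\theta)=\tfrac{\beta(t,\theta_{t})}{\eta(t)v_{\max}}(\eta(t)v_{\max}-v_{t}/(1+\varepsilon-\theta_{t}))$ is $\leq 0$ once $v_{t}/v_{\max}$ is close enough to $1$ (using $1+\varepsilon-\theta_{t}>\varepsilon$ and $\eta\leq 1$), and near $0$ it is $\geq \tfrac{\beta(t,\theta_{t})}{\eta(t)v_{\max}}(\eta(t)v_{\max}-v_{\max}/\varepsilon)$... which may be negative, so the inward push at $0$ is \emph{not} automatic and requires the lower bound heuristic $\varepsilon v_{\max}\min_{s\le t}\eta(s)$ mentioned after the model.

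The main obstacle, I expect, is exactly the lower boundary $v=0$: unlike $\theta$ and $\rho$, the drift $f_{2}$ does not obviously repel $0$, and one must exploit the structure $f_{2}(t,v,\theta)\geq \tfrac{\beta(t,\theta_{t})}{\eta(t)v_{\max}}\eta(t)v_{\max}-\tfrac{\beta(t,\theta_{t})}{\eta(t)v_{\max}}\cdot\tfrac{v_{t}}{\varepsilon}$ more carefully — essentially showing that whenever $v_{t}$ enters $\left]0,\varepsilon v_{\max}\inf_{s}\eta(s)\right[$ the drift becomes strictly positive, so combined with $\kappa_{2}(v_{t}/v_{\max})\to 0$ (Assumption \ref{HypothesisFilter1_3}) the process cannot reach $0$; the scale-function comparison with a diffusion that has nonnegative drift near $0$ and vanishing, Lipschitz noise there then finishes it via a Feller-type non-attainability criterion. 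Once $v$ is bounded away from $0$, the troublesome case $\bar v_{t}\to 0$ in the $\rho$-argument is moot and $\rho_{t}\in\left]0,1\right[$ follows as above, completing the proof.
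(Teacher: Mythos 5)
Your plan is viable, but it takes a genuinely different and much heavier route than the paper. The paper settles all three coordinates in a few lines by invoking the strict comparison result, Proposition \ref{CompareProposition}~$(ii)$: for each scalar equation (with the previously treated components frozen) it runs two auxiliary copies started from the boundary values $0$ and $1$, notes that by Lemma \ref{BoundednessSDE} these stay in $[0,1]$, and observes that strict ordering of initial data propagates in time, so the true solution is strictly sandwiched and never touches $\{0,1\}$. This requires no boundary-by-boundary analysis of the drift at all, which is precisely what your scale-function/Feller non-attainability argument must supply; in effect you are re-proving the special case of the comparison theorem that you need. Your version is self-contained and makes the mechanism explicit (inward or linearly vanishing drift plus Lipschitz-vanishing noise), which is instructive, but it carries several local slips you should repair before writing it up: near $\theta=0$ the drift contribution to $\psi(\theta_t)$ is $-\alpha(t)/\theta_t$ (negative, hence harmless for the upper bound on $d\psi$), not $+\alpha(t)/\theta_t$; near $\rho=1$ the drift $\gamma\cdot(1-\rho_t)$ is \emph{nonnegative}, not $\leq 0$ --- what saves you is only that it vanishes linearly, so $f_3/(1-\rho)$ stays bounded; and your worry at $\rho=0$ is unfounded, since $\gamma\geq 0$ already makes the drift of $\rho$ nonnegative there whatever $v$ does, so $-\log\rho_t$ has nonpositive drift contribution and there is no need to settle $v$ first. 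Your diagnosis of $v=0$ as the delicate boundary is right and your resolution (for $v$ small, $f_2\geq\beta\left(1-v/(\varepsilon\eta v_{\max})\right)$ so $-f_2/v$ is bounded above, while $\kappa_2(v/v_{\max})=O(v)$) does close that case; note, however, that your claim that $f_2\leq 0$ for $v/v_{\max}$ near $1$ requires $\eta(1+\varepsilon-\theta)\leq 1$, which is not guaranteed by the stated assumptions --- this is the same point at which the paper's own Lemma \ref{BoundednessSDE} is being leaned on, so you inherit rather than create that difficulty.
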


Before giving a proof for the Lemma \ref{RemainInteriorSDE} we first recall a
particular version of the general comparison Proposition 3.12 in
\cite{pardouxRascanu} (page 149) :

\begin{proposition}
\label{CompareProposition}Let $f,\widetilde{f}:\Omega\times%
\mathbb{R}
_{+}\times%
\mathbb{R}
\rightarrow%
\mathbb{R}
$ and $g:\Omega\times%
\mathbb{R}
_{+}\times%
\mathbb{R}
\rightarrow%
\mathbb{R}
^{k}$ be three progressively measurable processes with respect to the first
two variables and continuous with respect to the third one. Let $W$ be a $k$
dimensional Brownian motion. Assume that $\forall t\geq0$, $P$-almost surely
the following inequalities hold :
\[
\int\nolimits_{0}^{t}\left\vert g\left(  s,X_{s}\right)  \right\vert
^{2}ds+\int\nolimits_{0}^{t}\left\vert g\left(  s,\widetilde{X}_{s}\right)
\right\vert ^{2}ds<\infty
\]
and
\[
\int\nolimits_{0}^{t}\left\vert f\left(  s,X_{s}\right)  \right\vert
ds+\int\nolimits_{0}^{t}\left\vert \widetilde{f}\left(  s,\widetilde{X}%
_{s}\right)  \right\vert ds<\infty
\]
where $X$ and $\widetilde{X}$ are solution of the following stochastic
differential equations :
\begin{equation}
X_{t}=X_{0}+\int\nolimits_{0}^{t}f\left(  s,X_{s}\right)  ds+\int
\nolimits_{0}^{t}\left\langle g\left(  s,X_{s}\right)  ,dW_{s}\right\rangle
\label{ComparisonSDE1}%
\end{equation}
and
\begin{equation}
\widetilde{X}_{t}=\widetilde{X}_{0}+\int\nolimits_{0}^{t}\widetilde{f}\left(
s,\widetilde{X}_{s}\right)  ds+\int\nolimits_{0}^{t}\left\langle g\left(
s,\widetilde{X}_{s}\right)  ,dW_{s}\right\rangle \text{.}
\label{ComparisonSDE2}%
\end{equation}
Also assume that there are two progressively measurable processes
$L,\ell:\Omega\times%
\mathbb{R}
_{+}\rightarrow%
\mathbb{R}
_{+}$ such that almost surely $\forall t\geq0$,
\[
\max\left\{  \int\nolimits_{0}^{t}\ell_{s}^{2}ds,\int\nolimits_{0}^{t}%
L_{s}ds\right\}  <\infty
\]
and $d\lambda\otimes dP$-almost everywhere, $\forall x,y\in%
\mathbb{R}
$,
\[
\left\vert f\left(  t,x\right)  -f\left(  t,x\right)  \right\vert \leq
L_{t}\left\vert x-y\right\vert
\]
and%
\[
\left\vert g\left(  t,X_{t}\right)  -g\left(  s,\widetilde{X}_{t}\right)
\right\vert \leq\ell_{t}\left\vert X_{t}-\widetilde{X}_{t}\right\vert \text{.}%
\]
Finally, assume that $P$-almost surely $X_{0}\geq\widetilde{X}_{0}$ and
$d\lambda\otimes dP$-almost everywhere on $\Omega\times%
\mathbb{R}
_{+}$, $f\left(  t,x\right)  \geq\widetilde{f}\left(  t,x\right)  $. Then

\begin{enumerate}
\item[$\left(  i\right)  $] $P$-almost surely, $\forall t\geq0$, $X_{t}%
\geq\widetilde{X}_{t}$ and $X$ is the unique solution of $\left(
\ref{ComparisonSDE1}\right)  $.

\item[$\left(  ii\right)  $] If moreover there exist $A\in\mathcal{F}$ and a
stopping time $\tau>0$ such that $\forall\omega\in A$, $X_{0}\left(
\omega\right)  >\widetilde{X}_{0}\left(  \omega\right)  $ or
\begin{equation}
\int\nolimits_{0}^{\tau\left(  \omega\right)  }\left(  f\left(  \omega
,s,X_{s}\right)  -\widetilde{f}\left(  \omega,s,\widetilde{X}_{s}\right)
\right)  ds>0
\end{equation}
then $\forall\omega\in A$, $X_{t}\left(  \omega\right)  >\widetilde{X}%
_{t}\left(  \omega\right)  $; in particular if $\forall\omega\in A$,
$X_{0}\left(  \omega\right)  >\widetilde{X}_{0}\left(  \omega\right)  $ then
$\forall\left(  \omega,t\right)  \in A\times%
\mathbb{R}
_{+}$, $X_{t}\left(  \omega\right)  >\widetilde{X}_{t}\left(  \omega\right)  $.
\end{enumerate}
\end{proposition}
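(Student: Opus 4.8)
The plan is to study the difference process $D_{t}=X_{t}-\widetilde{X}_{t}$ and to exploit the crucial structural feature that $\left(  \ref{ComparisonSDE1}\right)  $ and $\left(  \ref{ComparisonSDE2}\right)  $ carry the \emph{same} diffusion coefficient $g$; this is exactly what lets the martingale part of $D$ be written as a factor multiplying $D$ itself. First I would linearise the coefficients along the two trajectories by setting
\[
a_{s}=\mathbf{1}_{\{X_{s}\neq\widetilde{X}_{s}\}}\frac{f\left(  s,X_{s}\right)  -f\left(  s,\widetilde{X}_{s}\right)  }{X_{s}-\widetilde{X}_{s}},\qquad h_{s}=\mathbf{1}_{\{X_{s}\neq\widetilde{X}_{s}\}}\frac{g\left(  s,X_{s}\right)  -g\left(  s,\widetilde{X}_{s}\right)  }{X_{s}-\widetilde{X}_{s}},
\]
which are progressively measurable and, by the assumed Lipschitz bounds, satisfy $\left\vert a_{s}\right\vert \leq L_{s}$ and $\left\vert h_{s}\right\vert \leq\ell_{s}$ $d\lambda\otimes dP$-almost everywhere. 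Writing $\phi_{s}=f\left(  s,\widetilde{X}_{s}\right)  -\widetilde{f}\left(  s,\widetilde{X}_{s}\right)  \geq0$, which is nonnegative by the drift ordering $f\geq\widetilde{f}$, subtraction of $\left(  \ref{ComparisonSDE2}\right)  $ from $\left(  \ref{ComparisonSDE1}\right)  $ yields the linear equation
\[
dD_{t}=\left(  a_{t}D_{t}+\phi_{t}\right)  dt+D_{t}\,d\mathcal{M}_{t},\qquad\mathcal{M}_{t}=\int\nolimits_{0}^{t}\left\langle h_{s},dW_{s}\right\rangle ,
\]
with $D_{0}=X_{0}-\widetilde{X}_{0}\geq0$. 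The integrability hypotheses $\int_{0}^{t}\ell_{s}^{2}ds<\infty$ and $\int_{0}^{t}L_{s}ds<\infty$ ensure that $\mathcal{M}$ and the drift of $D$ are well defined.

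Next I would eliminate the linear terms by an integrating factor. Set
\[
\mathcal{E}_{t}=\exp\left(  \mathcal{M}_{t}-\tfrac{1}{2}\left\langle \mathcal{M}\right\rangle _{t}+\int\nolimits_{0}^{t}a_{s}ds\right)  >0,\qquad\left\langle \mathcal{M}\right\rangle _{t}=\int\nolimits_{0}^{t}\left\vert h_{s}\right\vert ^{2}ds .
\]
Applying It\^{o}'s formula to the product $\mathcal{E}_{t}^{-1}D_{t}$, every stochastic integral term and every bracket term cancels, leaving the finite-variation identity $d\left(  \mathcal{E}_{t}^{-1}D_{t}\right)  =\mathcal{E}_{t}^{-1}\phi_{t}\,dt$, whence
\[
D_{t}=\mathcal{E}_{t}\,K_{t},\qquad K_{t}:=D_{0}+\int\nolimits_{0}^{t}\mathcal{E}_{s}^{-1}\phi_{s}\,ds .
\]
Since $\mathcal{E}_{t}>0$, $D_{0}\geq0$ and $\phi_{s}\geq0$, the factor $K_{t}$ is nonnegative, so $D_{t}\geq0$ for all $t$, i.e. $X_{t}\geq\widetilde{X}_{t}$ $P$-almost surely and, by path continuity, simultaneously for all $t$. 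This is the comparison in $\left(  i\right)  $. The uniqueness assertion for $\left(  \ref{ComparisonSDE1}\right)  $ then follows either directly from the standard It\^{o} existence--uniqueness theorem under the Lipschitz hypotheses (after the usual localisation by the stopping times at which $\int_{0}^{t}\left(  L_{s}+\ell_{s}^{2}\right)  ds$ or $\left\vert X_{t}\right\vert $ exceeds $n$), or by applying the comparison itself with $\widetilde{f}=f$ to two solutions to obtain both inequalities.

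For the strict comparison $\left(  ii\right)  $ I would read everything off the representation $D_{t}=\mathcal{E}_{t}K_{t}$ together with the fact that $K$ is nondecreasing (its increments $\mathcal{E}_{s}^{-1}\phi_{s}\,ds$ are nonnegative). On the part of $A$ where $X_{0}\left(  \omega\right)  >\widetilde{X}_{0}\left(  \omega\right)  $ we have $K_{0}\left(  \omega\right)  =D_{0}\left(  \omega\right)  >0$, so $K_{t}\left(  \omega\right)  >0$ and hence $D_{t}\left(  \omega\right)  >0$ for every $t\geq0$, which is the ``in particular'' clause. On the part of $A$ where $\int_{0}^{\tau\left(  \omega\right)  }\left(  f\left(  s,X_{s}\right)  -\widetilde{f}\left(  s,\widetilde{X}_{s}\right)  \right)  ds>0$, I argue by contradiction: if $K_{s}\left(  \omega\right)  =0$ throughout $\left[  0,\tau\left(  \omega\right)  \right]  $, then $D_{s}=\mathcal{E}_{s}K_{s}=0$ and $\phi_{s}=0$ for almost every $s\leq\tau\left(  \omega\right)  $, forcing $f\left(  s,X_{s}\right)  -\widetilde{f}\left(  s,\widetilde{X}_{s}\right)  =a_{s}D_{s}+\phi_{s}=0$ and hence a vanishing integral, contrary to the hypothesis. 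Thus $K_{s_{0}}\left(  \omega\right)  >0$ for some $s_{0}\leq\tau\left(  \omega\right)  $, and monotonicity of $K$ gives $D_{t}\left(  \omega\right)  >0$ for all $t\geq s_{0}$.

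The step I expect to require the most care is the measurable linearisation together with the It\^{o} product computation: one must verify that $a$ and $h$ are genuinely progressively measurable with the stated almost-everywhere bounds, and that the product rule for $\mathcal{E}^{-1}D$ — including the covariation term $d\left\langle \mathcal{E}^{-1},D\right\rangle $ — really does annihilate the $a_{t}D_{t}$ and $d\left\langle \mathcal{M}\right\rangle $ contributions so that only $\mathcal{E}_{t}^{-1}\phi_{t}\,dt$ survives. Should the same-diffusion cancellation be regarded as delicate, an alternative route for $\left(  i\right)  $ is the classical Tanaka argument: apply the It\^{o}--Tanaka formula to $D_{t}^{+}$, use the occupation-density estimate $d\left\langle D\right\rangle _{s}\leq\ell_{s}^{2}D_{s}^{2}\,ds$ to show that the local time of $D$ at $0$ vanishes, and close with a stochastic Gronwall argument using the weight $\exp\left(  -\int_{0}^{t}L_{s}ds\right)  $; this yields $\left(  i\right)  $ but needs a separate treatment for the strict statement $\left(  ii\right)  $.
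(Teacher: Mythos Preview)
The paper does not actually prove this proposition: it is introduced with the sentence ``we first recall a particular version of the general comparison Proposition~3.12 in \cite{pardouxRascanu} (page 149)'' and is then used as a black box in the proof of Lemma~\ref{RemainInteriorSDE}. So there is no in-paper argument to compare your proposal against; your write-up is effectively supplying what the author outsourced to Pardoux--R\u{a}\c{s}canu.

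As for correctness, your integrating-factor argument is one of the standard routes to SDE comparison (and is, in fact, close in spirit to how Pardoux--R\u{a}\c{s}canu handle it). The linearisation $a_{s},h_{s}$ is well defined and progressively measurable with the stated bounds; the It\^o product computation for $\mathcal{E}_{t}^{-1}D_{t}$ is correct and indeed leaves only $\mathcal{E}_{t}^{-1}\phi_{t}\,dt$; and the resulting representation $D_{t}=\mathcal{E}_{t}\bigl(D_{0}+\int_{0}^{t}\mathcal{E}_{s}^{-1}\phi_{s}\,ds\bigr)$ immediately gives both $(i)$ and the strict statement $(ii)$. Your remark that the Tanaka/local-time approach is an alternative for $(i)$ is also apt. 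The only point worth flagging is that the paper's statement of $(ii)$ leaves the time $t$ unquantified; your conclusion ``$D_{t}(\omega)>0$ for all $t\geq s_{0}$ with some $s_{0}\leq\tau(\omega)$'' is the correct and honest version of what the representation actually yields, and it is exactly what the paper needs in the application to Lemma~\ref{RemainInteriorSDE}.
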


\begin{proof}
(of the Lemma \ref{RemainInteriorSDE})

Let $\left(  \theta_{t}^{1}\right)  _{t\geq0},$ $\left(  \theta_{t}%
^{2}\right)  _{t\geq0},$ $\left(  \theta_{t}^{3}\right)  _{t\geq0}$ be three
solutions of the equation $\left(  \ref{StochasticPDE1}\right)  $ with the
respective initial conditions $0$, $\theta_{0}^{2}\in\left]  0,1\right[  $ and
$1$. Since $\alpha,w,\delta\in L_{loc}^{\infty}\left(
\mathbb{R}
_{+};%
\mathbb{R}
_{+}\right)  $ and $\kappa_{1}$ is Lipschitz continuous we use Lemma
\ref{BoundednessSDE} and apply Proposition \ref{CompareProposition} with
$f=\widetilde{f}=f_{1}$ and $g=g_{1}.$ Then $P$-almost surely $\forall t\geq
0$,
\[
0\leq\theta_{t}^{1}<\theta_{t}^{2}<\theta_{t}^{3}\leq1
\]
That gives the result for $\theta$. With the same manner we also etablish it
for $v$ and $\rho$; we just have to take respectively $f=\widetilde{f}%
=f_{i}\left(  .,\theta\right)  $ and $g=g_{i}$ for $i\in\left\{  1,2\right\}
$.
\end{proof}

Now let us give a stronger result useful for the filtering. For that we need
an additional assumption:

\begin{claim}
\label{HypothesisFilter1_5}There is a time $T^{\ast}>0$ such that there is a
nonempty interval $I\subset\left]  0,T^{\ast}\right[  $ satisfying $\forall
t\in I$, $\alpha\left(  t\right)  ,\beta\left(  t,.\right)  ,\gamma\left(
t,.\right)  >0$ .
\end{claim}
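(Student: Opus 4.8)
The plan is to treat Assumption \ref{HypothesisFilter1_5} as a realizability statement and to establish it constructively, by producing data $\alpha,\beta,\gamma$ that satisfy Assumptions \ref{HypothesisFilter1_1}--\ref{HypothesisFilter1_4} and are simultaneously strictly positive on a common initial time window. The content of the claim is exactly that the three essential positivity sets of $\alpha$, $\beta(t,\cdot)$ and $\gamma(t,\cdot)$ overlap in an interval contained in some $\left]0,T^{\ast}\right[$; so the natural way to certify it is to exhibit such an overlap rather than to derive it, since the earlier assumptions only force nonnegativity. First I would fix $T^{\ast}>0$ and a subinterval $I=\left]t_{0}-\delta,t_{0}+\delta\right[\subset\left]0,T^{\ast}\right[$, then choose representatives of $\alpha,\beta,\gamma$ that are bounded below by a positive constant on $I$ (for instance constant positive values on $I$, tapering off outside), and finally verify that this choice respects each structural requirement of Assumptions \ref{HypothesisFilter1_1}--\ref{HypothesisFilter1_4}.

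The key steps, in order, are the following. I would first record the correct reading of ``$\gamma(t,\cdot)>0$'': because Assumption \ref{HypothesisFilter1_4}$(ii)$ forces $\gamma(t,0,\cdot,0)=0$ and $\gamma(t,\cdot,0,\cdot)=0$, the positivity must be understood on the biologically relevant interior of the domain, i.e.\ where the fruit volume $y_{2}$ and the inhibition $y_{1}$ are strictly positive; I would therefore check that the template $\gamma(t,y_{1},y_{2},y_{3})=\left(\gamma_{1}(t)y_{1}-\gamma_{2}(t)y_{3}\right)y_{2}$ proposed after Assumption \ref{HypothesisFilter1_4}, with $\gamma_{1}>0$ on $I$ and $\gamma_{2}$ small, is strictly positive on that interior while still vanishing on the mandated faces. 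Next I would confirm that the same $I$ can be used for $\alpha$ and for $\beta(t,\cdot)$: since $\beta$ enters only through $f_{2}$ and is assumed merely nonnegative and $L_{loc}^{\infty}$, taking it bounded below by a constant on $I$ is compatible with Assumption \ref{HypothesisFilter1_4}. Finally I would verify that all the chosen functions remain in the required $L_{loc}^{\infty}$ classes and preserve the local Lipschitz and monotonicity properties already invoked for the well-posedness results (Proposition \ref{ExistenceAndUniquenessSDE}) and the comparison argument (Lemma \ref{RemainInteriorSDE}), so that Assumption \ref{HypothesisFilter1_5} does not conflict with the standing hypotheses it supplements.

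The step I expect to be the main obstacle is the passage from ``positive'' to ``positive on a genuine interval.'' For data that are only measurable, nonnegativity together with non-triviality yields positivity merely on a set of positive Lebesgue measure, which need not contain any interval; thus the existence of the interval $I$ cannot be extracted from Assumptions \ref{HypothesisFilter1_1}--\ref{HypothesisFilter1_4} alone and has to be secured either by selecting continuous (or lower semicontinuous) representatives of $\alpha,\beta,\gamma$ in $t$, for which $\{\,t:\ \alpha(t)>0\,\}$ is automatically open, or by building the interval into the construction from the outset as above. I would favour the constructive route, because it simultaneously delivers $T^{\ast}$, the interval $I$, and the verification of all compatibility conditions in one explicit family, which is precisely what the subsequent strict-positivity and filtering arguments require of this hypothesis.
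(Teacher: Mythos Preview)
This statement is not a theorem to be proved: in the paper's preamble the environment \texttt{claim} is declared via \verb|\newtheorem{claim}[theorem]{Assumption}|, so what you are looking at is \emph{Assumption}~\ref{HypothesisFilter1_5}. The paper offers no proof and none is expected; immediately after stating it the author simply remarks that the assumption ``seems restrictive but is still realistic because we are interested by the disease dynamics since favourable conditions are fulfilled even for a small time.'' It is an additional hypothesis on the data $\alpha,\beta,\gamma$, introduced precisely because, as you correctly observe, it cannot be derived from Assumptions~\ref{HypothesisFilter1_1}--\ref{HypothesisFilter1_4}.

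Your write-up is therefore answering a question the paper does not ask. What you have produced is a consistency check---an argument that the hypothesis is compatible with the earlier assumptions and can be realized by explicit choices of $\alpha,\beta,\gamma$. That is a sensible side remark, and your discussion of why the interval $I$ must be built in (rather than extracted from mere measurability and nonnegativity) is accurate. But it should not be presented as a proof of the statement; at most it is a justification that the assumption is non-vacuous.
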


The Assumption \ref{HypothesisFilter1_5}\textbf{ }seems restrictive but is
still realistic because we are interested by the disease dynamics since
favourable conditions are fulfilled even for a small time.

\begin{lemma}
\label{LeaveEdgeSDE}Let at least one of the initial conditions be $P$-almost
surely null. Under the Assumption \ref{HypothesisFilter1_5}, if $\left(
\left(  \theta_{t},v_{t},\rho_{t}\right)  \right)  _{t\geq0}$ is the solution
of the lumped model then $P$-almost surely $\forall t>0$, $\left(  \theta
_{t},\frac{v_{t}}{v_{\max}},\rho_{t}\right)  \in\left]  0,1\right[  ^{3}$.
\end{lemma}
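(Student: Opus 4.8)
The plan is to reduce the claim to three one-dimensional comparison arguments, one for each coordinate, exploiting the triangular structure of the lumped system: $\theta$ solves an autonomous (in the state sense) equation, $v$ depends on $\theta$, and $\rho$ depends on $\theta$ and $\bar v$. First I would fix the time horizon $T^\ast>0$ and the nonempty interval $I\subset\,]0,T^\ast[$ supplied by Assumption \ref{HypothesisFilter1_5}, on which $\alpha,\beta,\gamma$ are strictly positive. By Lemma \ref{BoundednessSDE} we already know the solution stays in $[0,1]^3$ $P$-almost surely for all $t\geq0$, so it suffices to show that $P$-almost surely each coordinate never hits $0$ or $1$ for any $t>0$; combined with Lemma \ref{RemainInteriorSDE} (which handles the case where all three initial conditions are interior) this would establish the Lemma, since by the Markov property once the process is in $]0,1[^3$ at some positive time it stays there.

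Next I would treat $\theta$. Compare the solution $\theta$ with the constant sub- and super-solutions $0$ and $1$ viewed as solutions of \eqref{StochasticPDE1}: since $g_1(t,0)=g_1(t,1)=0$ by Assumption \ref{HypothesisFilter1_3} and since $f_1(t,0)=\alpha(t)>0$ on $I$ while $f_1(t,1)=\alpha(t)(1-w(t))\le 0$, the constants $0$ and $1$ are respectively a strict subsolution and a supersolution on $I$. Applying the strict part Proposition \ref{CompareProposition}(ii) with $f=\widetilde f=f_1$, $g=g_1$ — the drift inequality being strict on the set $I$ of positive measure, so that $\int_0^{\tau}(f(s,X_s)-\widetilde f(s,\widetilde X_s))\,ds>0$ for an appropriate stopping time $\tau$ — yields $0<\theta_t<1$ for all $t>0$, $P$-almost surely. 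Here one must be slightly careful: Proposition \ref{CompareProposition} compares $X$ and $\widetilde X$ driven by the \emph{same} drift but different initial data, or with a genuine ordering of drifts; the cleanest route is to note $f_1(t,0)>\widetilde f_1:=0\equiv\partial_t(0)$ on $I$ when the lower constant is regarded with its own (zero) drift, and dually at the top, invoking the second alternative in (ii). The same scheme handles $v/v_{\max}$: on $I$, $f_2(t,0,\theta_t)=\beta(t,\theta_t)>0$ and $f_2(t,v_{\max},\theta_t)=\frac{\beta}{\eta v_{\max}}(\eta v_{\max}-\frac{v_{\max}}{1+\varepsilon-\theta_t})<0$ because $\theta_t<1\le 1+\varepsilon$ forces $\frac{v_{\max}}{1+\varepsilon-\theta_t}>v_{\max}\ge \eta v_{\max}$ wait — this needs $\eta\le 1$, which holds by Assumption \ref{HypothesisFilter1_2}; and $g_2$ vanishes at $0$ and at $v_{\max}$. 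Finally for $\rho$: $f_3(t,\bar v_t,0,\theta_t)=\gamma(t,\bar v_t)\cdot 1>0$ on $I$ once $\bar v_t>0$ (already guaranteed since we just showed $v_t>0$), and $f_3(t,\bar v_t,1,\theta_t)=0$, with $g_3$ vanishing at $0,1$; so the same comparison pins $\rho_t$ strictly inside $]0,1[$ for $t>0$.

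I expect the main obstacle to be the bookkeeping around Proposition \ref{CompareProposition}: it is stated for scalar equations with a common diffusion coefficient $g$, whereas here the boundary values $0$ and $1$ have \emph{zero} diffusion by Assumption \ref{HypothesisFilter1_3} but the interior solution does not, so one cannot directly write both as solutions of a single SDE with the \emph{same} $g$. The resolution is that $g_i(t,\cdot)=\delta_i(t)\kappa_i(\cdot)$ with $\kappa_i$ locally Lipschitz, so $g_i$ is itself Lipschitz in the state with modulus $\delta_i(t)\,\mathrm{Lip}(\kappa_i)$ on a neighborhood of $[0,1]$; thus the comparison hypotheses $|g(t,X_t)-g(t,\widetilde X_t)|\le\ell_t|X_t-\widetilde X_t|$ and $|f(t,x)-f(t,y)|\le L_t|x-y|$ are met with locally bounded $L,\ell$, and the constant functions $0,1$ are legitimate (degenerate) solutions of the \emph{same} SDE. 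The chaining is then: apply (i) to get the weak inequalities $0\le \theta_t\le 1$ (consistent with Lemma \ref{BoundednessSDE}), then apply (ii) with the strict drift gap on $I$ and a stopping time argument to upgrade to strict inequalities for all $t>0$; once one shows strict positivity of $v$ on $]0,\infty[$ the hypothesis $\bar v_t>0$ needed for the $\rho$-comparison is in place, and the proof closes. A secondary, minor point to check is that the positivity obtained on $]0,T^\ast[$ propagates to all $t>0$: this follows by the Markov property of the diffusion together with Lemma \ref{RemainInteriorSDE}, restarting the process from an interior point at any time in $I$.
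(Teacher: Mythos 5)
Your route is genuinely different from the paper's: you argue by comparison with the boundary constants $0$ and $1$ viewed as degenerate solutions, whereas the paper introduces the hitting times $\tau_{1}=\inf\left\{ t\geq0;\theta_{t}>0\right\}$ (and $\tau_{2},\tau_{3}$), observes that on the stochastic interval $\left[0,\tau_{1}\right]$ one has $\theta\equiv0$ so that the martingale part has zero expectation and hence $E\left[\int_{0}^{T^{\ast}\wedge\tau_{1}}f_{1}\left(s,\theta_{s}\right)ds\right]=0$, and then uses $f_{1}\left(s,\theta_{s}\right)=\alpha\left(s\right)>0$ on $I$ to force $\tau_{1}=0$. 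Your version, however, has a genuine gap exactly at its key step, the invocation of Proposition \ref{CompareProposition}$\left(ii\right)$: the hypothesis to be verified is $\int_{0}^{\tau}\left(f_{1}\left(s,\theta_{s}\right)-0\right)ds>0$, where the integrand is evaluated along the \emph{unknown} solution $\theta_{s}$, not along the constant $0$. Since $f_{1}\left(t,y\right)=\alpha\left(t\right)\left(1-yw\left(t\right)\right)$ with $w\geq1$, this integrand is negative whenever $\theta_{s}>1-\sigma u\left(s\right)$, so "the drift inequality is strict on $I$" does not by itself yield positivity of the integral. To repair this you must localize at $\inf\left\{t;\theta_{t}>0\right\}$, on whose stochastic interval the integrand reduces to $\alpha\left(s\right)$ --- i.e.\ you are forced to reintroduce precisely the stopping-time device that constitutes the paper's proof. (Relatedly, the ordering hypothesis $f\left(t,x\right)\geq\widetilde{f}\left(t,x\right)$ of Proposition \ref{CompareProposition} is stated for all $x$, and fails for the pair $\left(f_{1},0\right)$ when $x>1-\sigma u\left(t\right)$; one needs the along-the-path form of the comparison theorem.)

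A second, concrete error is your supersolution claim at the upper boundary for $v$: you assert $f_{2}\left(t,v_{\max},\theta_{t}\right)<0$ because $\frac{v_{\max}}{1+\varepsilon-\theta_{t}}>v_{\max}\geq\eta v_{\max}$, but $\frac{v_{\max}}{1+\varepsilon-\theta_{t}}>v_{\max}$ holds only when $\theta_{t}>\varepsilon$, and the regime relevant here is precisely $\theta_{t}$ near $0$ (the initial condition may be null), where with $\eta$ close to $1$ one gets $f_{2}\left(t,v_{\max},\theta_{t}\right)>0$; the constant $v_{\max}$ is therefore \emph{not} a supersolution. The paper sidesteps this entirely: the strict upper bounds come from Lemma \ref{RemainInteriorSDE}, which compares with the \emph{solution} started at the boundary (strict ordering of initial data, part $\left(ii\right)$ of Proposition \ref{CompareProposition}) rather than with the boundary constant, while Lemma \ref{LeaveEdgeSDE} itself only has to show that a coordinate starting at $0$ leaves $0$ instantly; after that, continuity, the Markov restart and Lemma \ref{RemainInteriorSDE} close the argument --- the one part of your plan that is sound as written.
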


\begin{proof}
Following Lemmas \ref{BoundednessSDE} and \ref{RemainInteriorSDE} it is
sufficient to establish that if $P$-almost surely $\theta_{0}$, $\frac{v_{0}%
}{v_{\max}}$, $\rho_{0}$ are all null then $P$-almost surely $\forall
t\in\left]  0,T\right[  $, $\left(  \theta_{t},\rho_{t},\frac{v_{t}}{v_{\max}%
}\right)  \in\left]  0,1\right]  ^{3}$. We first give the proof that if
$P$-almost surely $\theta_{0}$ is null then $P$-almost surely, $\forall
t\in\left]  0,T\right[  $, $\theta_{t}\in\left]  0,1\right]  $. Let consider
the stopping time
\begin{equation}
\tau_{1}=\inf\left\{  t\geq0;\theta_{t}>0\text{ }\right\}
\end{equation}
Since $\theta$ is continuous, $\forall t\geq0$, $\theta_{t\wedge\tau_{1}}$ and
$E\left[  \int\nolimits_{0}^{t\wedge\tau_{1}}g_{1}\left(  s,\theta_{s}\right)
dB_{s}^{1}\right]  $ are all null and necessarily
\[
E\left[  \int\nolimits_{0}^{t\wedge\tau_{1}}f_{1}\left(  s,\theta_{s}\right)
ds\right]  =0
\]
Let $T^{\ast}$ be the random time given the Assumption
\ref{HypothesisFilter1_5}\textbf{.}
\begin{align*}
E\left[  \int\nolimits_{0}^{T^{\ast}}\alpha_{s}\mathbf{1}_{\left\{  s\leq
\tau_{1}\right\}  }ds\right]   &  =E\left[  \int\nolimits_{0}^{T^{\ast}%
\wedge\tau_{1}}\alpha\left(  s\right)  ds\right] \\
&  =E\left[  \int\nolimits_{0}^{T^{\ast}\wedge\tau_{1}}\alpha\left(  s\right)
\left(  1-w\left(  s\right)  \theta_{s}\right)  ds\right] \\
&  =E\left[  \int\nolimits_{0}^{T^{\ast}\wedge\tau_{1}}f_{1}\left(
s,\theta_{s}\right)  ds\right] \\
&  =0
\end{align*}
By the Assumption \ref{HypothesisFilter1_5}\textbf{, }$\forall t\in I$,
$\alpha\left(  t\right)  $ is positive and necessarily $\mathbf{1}_{\left\{
t\leq\tau_{1}\right\}  }$ is null. Hence, $\tau_{1}=0$. In the same manner let
consider the stopping times
\begin{equation}
\tau_{2}=\inf\left\{  t\geq0;v_{t}>0\right\}
\end{equation}
and
\begin{equation}
\tau_{3}=\inf\left\{  t\geq0;\rho_{t}>0\right\}  \text{.}%
\end{equation}
Using the Assumption \ref{HypothesisFilter1_5} and the fact that $\forall t\in
I$, $\theta_{t}\in\left]  0,1\right[  $ we have
\[
E\left[  \int\nolimits_{0}^{T^{\ast}\wedge\tau_{2}}f_{2}\left(  s,0,\theta
_{s}\right)  ds\right]  =0=E\left[  \int\nolimits_{0}^{T^{\ast}\wedge\tau_{3}%
}f_{3}\left(  s,v_{s},0,\theta_{s}\right)  ds\right]  .
\]
and $\beta\left(  t,\theta_{t}\right)  $ and $\gamma_{t}\left(  t,\theta
_{t},v_{t},\rho_{t}\right)  $ are positive. Necessarily $\tau_{i}=0$, $\forall
i\in\left\{  2,3\right\}  $ and the result follows.
\end{proof}

It seems important to mention that the Lemma \ref{BoundednessSDE}, the Lemma \ref{RemainInteriorSDE} and the Lemma \ref{LeaveEdgeSDE} remain true even if $g_{i}$ is indentically null.

\subsection{The distributed parameters model}\label{DistributedParamModel}

\qquad This subsection is devoted to the study of the full model $\left(
\ref{StochasticPDE1}\right)  -\left(  \ref{StochasticPDE5}\right)  $ with the
diffusion term. We expect to generalize results obtained for the lumped model.
Each process $\left(  B_{t}^{i}\right)  _{t\geq0}$ \ is now assumed to satisfy
the general defintion \ref{BMDefinition} on the dual space of the Sobolev
space $H^{2}\left(  U\right)  $ and is identified to an $H^{2}\left(
U\right)  $-valued process. As said before, we set identity operator as the
common covariance operator of each Brownian motion. We do not identify the
Hilbert space $H^{2}\left(  U\right)  $ with its dual space while we identify
the Lebesgue space $L^{2}\left(  U\right)  $ with its dual space. By the
Maurin Theorem\footnote{See \cite{clark}.} the embedding of $H^{2}\left(
U\right)  $ in $L^{2}\left(  U\right)  $ is of Hilbert-Schmidt type and
therefore the restriction on $L^{2}\left(  U\right)  $ of each Brownian motion
has a nuclear\footnote{We refer to \cite{gelfand} for the properties of
nuclear and Hilbert-Schmidt operators.} covariance operator. From the results
in \cite{gelfand} (Theorems 5-8 in chapter 1) and the Friedrichs Theorem
(Theorem 9.2 and Corollary 9.8 in \cite{brezis}) those covariance operators
have kernels as bilinear forms on $L^{2}\left(  U\right)  $. The following assumptions are considered.

\begin{claim}
\label{HypothesisFilter1_6}$\forall i\in\left\{  1,2,3\right\}  $, $\delta
_{i}$, $\alpha\in L_{loc}^{\infty}\left(
\mathbb{R}
_{+}\times\overline{U};%
\mathbb{R}
_{+}\right)  $.
\end{claim}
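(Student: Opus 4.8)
Since the item is stated as an Assumption, establishing it amounts to verifying that the coefficient functions $\alpha$ and $\delta_{1},\delta_{2},\delta_{3}$ of the model $\left(\ref{StochasticPDE1}\right)-\left(\ref{StochasticPDE5}\right)$ genuinely lie in $L_{loc}^{\infty}\left(\mathbb{R}_{+}\times\overline{U};\mathbb{R}_{+}\right)$. The plan is to check, for each of these four functions, the three requirements defining membership in that space: $\mathbb{R}_{+}$-valuedness, joint $\mathcal{B}_{\mathbb{R}_{+}}\otimes\mathcal{B}_{\overline{U}}$-measurability, and local essential boundedness, i.e. finiteness of the essential supremum of $|f|$ over $[0,T]\times\overline{U}$ for every $T>0$. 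Nonnegativity is immediate from the modelling convention fixed just after $\left(\ref{StochasticPDE5}\right)$: $\alpha$ is the nonnegative rate quantifying the environmental forcing on the inhibition rate, and each $\delta_{i}$ is by definition the positive function giving the range of the noise, so all four are $\mathbb{R}_{+}$-valued.

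For measurability and local boundedness I would exploit that $U\subset\mathbb{R}^{3}$ is a bounded domain of class $C^{1}$ — boundedness being implicit in the use of Maurin's theorem for the distributed model — so that $\overline{U}$ is compact and the only delicate direction is time. The natural hypotheses on the data are that each coefficient is, for fixed $x$, measurable and locally essentially bounded in $t$ (climatic and environmental data staying bounded on any finite horizon) and, for fixed $t$, a bounded measurable function on the compact set $\overline{U}$. Under this Carath\'{e}odory structure, joint measurability follows by the usual approximation of a Carath\'{e}odory function by simple functions, and a uniform bound on $[0,T]\times\overline{U}$ is obtained by combining the finite-horizon bound in $t$ with compactness of $\overline{U}$.

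The point demanding genuine care — and where the argument is more than bookkeeping — is that membership in $L_{loc}^{\infty}$ of the product space is strictly stronger than separate regularity in $t$ and in $x$: one must exhibit a single Lebesgue-null subset of $[0,T]\times\overline{U}$ off which one bound controls $|f|$ at almost every pair $(t,x)$ simultaneously. I would secure this by first using the Carath\'{e}odory hypotheses to obtain joint measurability, and then covering the compact cylinder $[0,T]\times\overline{U}$ by finitely many pieces on which the coefficient is controlled, thereby patching the pointwise-in-$t$ essential bounds into a single essential bound over the whole cylinder and yielding the required $L_{loc}^{\infty}\left(\mathbb{R}_{+}\times\overline{U};\mathbb{R}_{+}\right)$ membership.
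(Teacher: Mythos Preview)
The item you are addressing is not a result to be proved at all. In the paper's environment declarations, \texttt{claim} is aliased to the label \emph{Assumption}; item \ref{HypothesisFilter1_6} is one of a list of standing hypotheses (Assumptions \ref{HypothesisFilter1_6}--\ref{HypothesisFilter1_11}) that the author simply \emph{imposes} on the coefficients of the distributed model before studying well-posedness. There is accordingly no proof in the paper, and none is expected: the statement is a condition on the data, not a conclusion derived from other facts.

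Your attempt to ``verify'' the assumption therefore rests on a misreading. You try to deduce $L_{loc}^{\infty}$ membership from modelling conventions (nonnegativity of $\alpha$ and $\delta_i$) together with an assumed Carath\'{e}odory structure and compactness of $\overline{U}$. But the Carath\'{e}odory structure and the uniform-in-$x$ bounds you invoke are themselves not stated anywhere in the paper; you are supplying hypotheses at least as strong as the one you are trying to establish. In short, the ``proof'' is circular: it replaces Assumption \ref{HypothesisFilter1_6} by an equivalent package of regularity conditions and then observes that the latter implies the former. The correct response here is simply to recognise the item as a hypothesis and move on.
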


\begin{claim}
\label{HypothesisFilter1_7}$u,\eta\in L^{\infty}\left(
\mathbb{R}
_{+}\times\overline{U};\left[  0,1\right]  \right)  $ and $\forall t\geq0$,
$\inf\left\{  \eta\left(  s,x\right)  ;s\in\left[  0,t\right]  ,x\in
\overline{U}\right\}  >0$.
\end{claim}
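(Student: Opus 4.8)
The plan is to prove that the distributed parameters model is well-posed and that $[0,1]^{3}$ is invariant for it, i.e.\ to carry Lemma~\ref{BoundednessSDE}, Proposition~\ref{ExistenceAndUniquenessSDE} and Lemmas~\ref{RemainInteriorSDE}--\ref{LeaveEdgeSDE} over to the spatial setting; the only genuinely new object is the unbounded diffusion operator $\pounds_{t}$. First I would fix the analytic framework: take $H=L^{2}(U)$, identified with its dual, and $V=H^{1}(U)$, and realise $\pounds_{t}$ as the operator attached to the time-dependent form $a_{t}(\phi,\psi)=\int_{U}\langle A(t,x)\nabla\phi,\nabla\psi\rangle\,dx$ on $V$, so that the conormal condition $\left(\ref{StochasticPDE5}\right)$ is exactly the boundary behaviour encoded by $a_{t}$. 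Under an ellipticity-and-boundedness hypothesis on $A$ (which must figure among the assumptions of this subsection) $-\pounds_{t}$ is coercive, and each $B^{i}$, by the discussion preceding Assumption~\ref{HypothesisFilter1_6}, restricts to an $L^{2}(U)$-valued Brownian motion with nuclear covariance, so the stochastic convolutions live in $H$.

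For local existence and uniqueness I would truncate: replace each coefficient by its composition with a smooth retraction of $\mathbb{R}$ onto $[0,1]$ in every state slot. By Assumptions~\ref{HypothesisFilter1_6}, \ref{HypothesisFilter1_7} (the spatial analogue of \ref{HypothesisFilter1_2}) and the spatial re-statements of \ref{HypothesisFilter1_3}--\ref{HypothesisFilter1_4}, the truncated drifts $f_{1},f_{2},f_{3}$ become bounded globally Lipschitz maps of $H$ and the $g_{i}$ become bounded Lipschitz Nemytskii operators; a standard well-posedness theorem for semilinear stochastic evolution equations (the variational method, or analytic-semigroup/mild-solution theory) then yields, for every $T>0$, a unique progressively measurable solution of the truncated model in $L^{2}(\Omega;C([0,T];H))\cap L^{2}(\Omega\times[0,T];V)$, which one must afterwards show solves the original model.

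The a priori bound removing the truncation is the heart of the matter and is the distributed version of Lemma~\ref{BoundednessSDE}. With the same $C^{2}$ convex $\varphi$ vanishing precisely on $[0,1]$, apply the infinite-dimensional It\^{o} formula to $t\mapsto\Phi(\theta_{t}):=\int_{U}\varphi(\theta_{t}(x))\,dx$. The drift contributes $\int_{U}f_{1}(t,x,\theta_{t})D\varphi(\theta_{t})\,dx\le0$, as in the lumped case; the new term $\int_{U}D\varphi(\theta_{t})\,\pounds_{t}\theta_{t}\,dx$ integrates by parts, the boundary term dying by the conormal condition, and equals $-\int_{U}D^{2}\varphi(\theta_{t})\,\langle A\nabla\theta_{t},\nabla\theta_{t}\rangle\,dx\le0$ since $\varphi$ is convex and $A$ positive semidefinite; and the It\^{o}-correction and martingale terms vanish identically because $D\varphi,D^{2}\varphi$ vanish on $[0,1]$ while $\kappa_{1}$ vanishes off $]0,1[$, so $D\varphi(\theta_{t}(x))\kappa_{1}(\theta_{t}(x))$ and $D^{2}\varphi(\theta_{t}(x))\kappa_{1}(\theta_{t}(x))^{2}$ are zero for every $x$. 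Hence $t\mapsto E[\Phi(\theta_{t})]$ is nonincreasing and zero at $t=0$, so $\theta_{t}(x)\in[0,1]$ a.e., $P$-a.s.; as the equations for $v$ and $\rho$ carry no diffusion term, the bounds $v_{t}/v_{\max},\rho_{t}\in[0,1]$ then follow fibrewise exactly as in Lemma~\ref{BoundednessSDE}. Since the solution already takes values in $[0,1]^{3}$ it solves the untruncated model and is thus global, which is the distributed Proposition~\ref{ExistenceAndUniquenessSDE}. The interior statements (distributed Lemmas~\ref{RemainInteriorSDE}--\ref{LeaveEdgeSDE}) I would obtain from an SPDE comparison/strong-maximum-principle argument for $\theta$ — the constants $0$ and $1$ being respectively a sub- and a supersolution, the noise being switched off on $\{0,1\}$ — together with the fibrewise lumped argument for $v$ and $\rho$, still using the spatial analogue of Assumption~\ref{HypothesisFilter1_5}.

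The step I expect to be the obstacle is analytic rather than probabilistic. Legitimising the It\^{o} formula and the integration by parts for the not-smooth-enough functional $\Phi$ requires $\theta_{t}\in V$ with $\langle A\nabla\theta_{t},\nabla\theta_{t}\rangle\in L^{1}(U)$, which the variational theory delivers only in the $L^{2}(\Omega\times[0,T];V)$ sense; so the pointwise-in-$x$ reasoning that worked for the lumped model (where the fibres decoupled) has to be replaced by honest energy estimates, a regularisation of $\varphi$, and careful stopping. The second delicate point is the SPDE comparison principle behind the distributed Lemmas~\ref{RemainInteriorSDE}--\ref{LeaveEdgeSDE}, which is technically heavier than the SDE comparison Proposition~\ref{CompareProposition}; and throughout one must keep track of the Hilbert--Schmidt/nuclear bookkeeping for the noise on $L^{2}(U)$ and the measurability of all the Nemytskii maps.
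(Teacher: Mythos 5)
You have misread what the quoted statement is. The environment \texttt{claim} in this paper is declared as \verb|\newtheorem{claim}[theorem]{Assumption}|, so the statement labelled \ref{HypothesisFilter1_7} is an \emph{Assumption}: it is the spatial analogue of Assumption \ref{HypothesisFilter1_2}, a standing hypothesis imposed on the input data of the distributed model, namely that the control $u$ and the parameter $\eta$ are essentially bounded with values in $\left[0,1\right]$ and that $\eta$ is bounded away from zero uniformly on $\left[0,t\right]\times\overline{U}$ for each $t$. The paper gives no proof of it and none is possible or required --- it is a modelling condition on $u$ and $\eta$, not a consequence of the equations. There is therefore nothing here against which a proof attempt can be checked, and your text does not in fact address the statement at all: nowhere do you derive boundedness of $u$ or positivity of $\inf\eta$; you \emph{invoke} Assumption \ref{HypothesisFilter1_7} as a hypothesis, which is exactly how the paper uses it.

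What you have actually written is a sketch of the well-posedness and $\left[0,1\right]^{3}$-invariance of the distributed parameters model, i.e.\ of the content of Subsection \ref{DistributedParamModel} (Proposition \ref{ExistenceAndUniquenessSPDE}, Lemma \ref{BoundednessSPDE}, Lemma \ref{RemainInteriorPDETheta} and Proposition \ref{RemainInteriorPDEvRho}). As such it is broadly consistent with the paper's route: the paper also works variationally with $\mathcal{X}=H^{1}\left(U\right)$ densely embedded in $\mathcal{Y}=L^{2}\left(U\right)$, gets coercivity of $-\pounds_{t}$ from Assumption \ref{HypothesisFilter1_11} via Theorem \ref{TheoremGeneralUniciteStochasticEvolutionPBFilter1}, proves the invariance of $\left[0,1\right]$ for $\theta$ by applying the infinite-dimensional It\^{o} formula to $\Phi\left(h\right)=\int_{U}\varphi\left(h\left(x\right)\right)dx$ and integrating the diffusion term by parts, and handles $v$ and $\rho$ fibrewise. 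But that material answers a different question from the one you were given; for the quoted Assumption the only correct response is to recognise it as a hypothesis, not a theorem.
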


\begin{claim}
\label{HypothesisFilter1_8}$\forall i\in\left\{  1,2,3\right\}  $, $\kappa
_{i}$ is a nonnegative locally Lipschitz continuous function which is positive
on the set $\left]  0,1\right[  $ and null on $%
\mathbb{R}
\setminus\left]  0,1\right[  $.
\end{claim}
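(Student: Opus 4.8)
The four requirements placed on each $\kappa_{i}$ in Assumption \ref{HypothesisFilter1_8}---nonnegativity on all of $\mathbb{R}$, local Lipschitz continuity, strict positivity on $\left]  0,1\right[  $, and vanishing on $\mathbb{R}\setminus\left]  0,1\right[  $---are constraints imposed on the noise amplitude functions rather than properties deduced from a prior construction. Accordingly, what genuinely needs to be checked is that the four conditions are mutually compatible, i.e. that admissible $\kappa_{i}$ actually exist. The plan is therefore to exhibit one explicit family of such functions and to verify, clause by clause, that it satisfies every requirement of the assumption; the same verification applies uniformly to $i\in\left\{  1,2,3\right\}  $ since the conditions do not distinguish the three indices.

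Concretely, I would take
\[
\kappa_{i}\left(  y\right)  =\left\{
\begin{array}
[c]{l}
y^{2}\left(  1-y\right)  ^{2}\text{, }0\leq y\leq1\\
0\text{, otherwise.}
\end{array}
\right.
\]
Nonnegativity on $\mathbb{R}$ is immediate because $y^{2}\left(  1-y\right)  ^{2}$ is a square and the function is set to zero off $\left[  0,1\right]  $. Strict positivity on $\left]  0,1\right[  $ follows since both factors $y$ and $1-y$ are nonzero there, while vanishing on $\mathbb{R}\setminus\left]  0,1\right[  $ is built into the definition, the two boundary values $\kappa_{i}\left(  0\right)  =\kappa_{i}\left(  1\right)  =0$ being consistent with the outer branch.

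The one clause requiring genuine attention is local Lipschitz continuity, and the hard part---indeed the only possible difficulty---lies at the junction points $y=0$ and $y=1$ where the polynomial branch meets the zero branch; away from these two points $\kappa_{i}$ is either polynomial or constant, hence smooth. At each junction the two branches agree in value, and a direct computation of the derivative $D\kappa_{i}\left(  y\right)  =2y\left(  1-y\right)  \left(  1-2y\right)  $ on $\left]  0,1\right[  $ shows it tends to $0$ as $y\rightarrow0^{+}$ and as $y\rightarrow1^{-}$, matching the zero derivative of the outer branch; thus $\kappa_{i}$ is in fact $C^{1}$ with bounded derivative, and a $C^{1}$ function with bounded derivative on $\mathbb{R}$ is globally Lipschitz, a fortiori locally Lipschitz. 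This completes the verification. I would also remark that the simpler choice $\kappa_{i}\left(  y\right)  =y\left(  1-y\right)  \mathbf{1}_{\left[  0,1\right]  }\left(  y\right)  $ works equally well: it is only Lipschitz rather than $C^{1}$ because of corners at the endpoints, but the one-sided difference quotients there remain bounded, so local Lipschitz continuity still holds.
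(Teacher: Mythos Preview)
Your reading is correct: in the paper this statement is an \emph{Assumption} (the environment \texttt{claim} is aliased to ``Assumption''), not a result with a proof, so there is nothing in the paper to compare against beyond checking consistency. Your verification by explicit example is appropriate; note that the paper itself adopts precisely your ``simpler choice'' $\kappa_{1}(x)=x(1-x)$ on $\left]0,1\right[$ (and zero elsewhere) in the numerical Section~\ref{NumericalEstimations}, so your closing remark already matches the authors' concrete instantiation.
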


\begin{claim}
\label{HypothesisFilter1_9}$\beta\in L_{loc}^{\infty}\left(
\mathbb{R}
_{+}\times\overline{U}\times%
\mathbb{R}
;%
\mathbb{R}
_{+}\right)  $ and $\gamma\in L_{loc}^{\infty}\left(
\mathbb{R}
_{+}\times\overline{U}\times%
\mathbb{R}
^{3};%
\mathbb{R}
_{+}\right)  $ satisfies
\end{claim}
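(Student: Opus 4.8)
This statement is an Assumption (the \texttt{claim} environment is typeset as ``Assumption''), so strictly it \emph{imposes}, rather than derives, the regularity and structural properties of the rate functions $\beta$ and $\gamma$; for arbitrary such functions there is nothing to prove. The only substantive thing one can establish is that these requirements are consistent and non-vacuous, i.e. that the concrete modelling choices inherited from \cite{fotsa,fotsa3} (together with the explicit example already used for the lumped model) actually lie in the declared spaces and obey the listed conditions once the space variable $x\in\overline U$ is restored. The plan is therefore to exhibit such representatives and verify each requirement in turn, this being the spatial counterpart of Assumption \ref{HypothesisFilter1_4}.

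For $\gamma$ I would take the spatial analogue of the product form used earlier, namely $\gamma(t,x,y_1,y_2,y_3)=\left(\gamma_1(t,x)y_1-\gamma_2(t,x)y_3\right)y_2$ with coefficients $\gamma_1,\gamma_2\in L_{loc}^\infty(\mathbb{R}_+\times\overline U;\mathbb{R}_+)$, and for $\beta$ the rate $\beta(t,x,y_1)$ appearing in $f_2$. First I would check membership in the declared $L_{loc}^\infty$ spaces: joint measurability follows from measurability of $(t,x)\mapsto\gamma_i(t,x)$ combined with continuity in $y$, while local boundedness on a compact $K\subset\mathbb{R}_+\times\overline U$ follows from $\|\gamma_i\|_{L^\infty(K)}<\infty$ and boundedness of the polynomial factors over the relevant state range. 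Crucially, the confinement Lemma \ref{BoundednessSDE} guarantees that trajectories remain in $[0,1]^3$, so only a bounded set of $y$-values is ever relevant. Nonnegativity of $\beta$ and of $\gamma$ on this invariant region is then read directly off the signs of the coefficients and of the factors $y_2$, $1-y_3$.

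Next I would verify the structural clauses that follow ``satisfies'': measurability in $(t,x)$ and local Lipschitz continuity in the state argument; the monotonicity requirements (increasing in the inhibition variable $y_1$, with $\gamma(t,x,0,\cdot,\cdot)$ decreasing in the last variable); and the vanishing conditions $\gamma(t,x,0,\cdot,0)=0=\gamma(t,x,\cdot,0,\cdot)$. Each is immediate from the bilinear form: the factor $y_2$ kills $\gamma$ whenever the fruit volume vanishes, the combination $\gamma_1 y_1-\gamma_2 y_3$ produces the asserted monotonicity in $y_1$ and in $y_3$ (via $\partial_{y_1}\gamma=\gamma_1 y_2\ge 0$ and $\partial_{y_3}\gamma=-\gamma_2 y_2\le 0$), and smoothness in $y$ yields local Lipschitz continuity on bounded sets.

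The one genuine subtlety, and the step I expect to need the most care, is the Lipschitz clause. Because the representatives are polynomial (bilinear) in the state, they are only \emph{locally} Lipschitz, not globally so: the estimate degenerates as $|y|\to\infty$. This is precisely why the statement asks for $L_{loc}^\infty$ membership and local Lipschitz continuity rather than global bounds, and why it must be read in tandem with the a priori confinement of the trajectories to $[0,1]^3$ established in Lemmas \ref{BoundednessSDE}--\ref{RemainInteriorSDE}. On that compact invariant region the local Lipschitz constants are uniform and all polynomial factors are bounded, so the verification reduces to checking the conditions on $\overline U\times[0,1]^3$, where the assumption is seen to be both satisfiable and compatible with the existence/uniqueness framework used for the model.
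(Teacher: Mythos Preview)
Your reading is correct: the \texttt{claim} environment is typeset as ``Assumption'' in this paper, so the statement is a hypothesis imposed on $\beta$ and $\gamma$, not a result to be proved, and the paper accordingly gives no proof. Your verification that the assumption is non-vacuous via the bilinear representative $\gamma(t,x,y_1,y_2,y_3)=(\gamma_1(t,x)y_1-\gamma_2(t,x)y_3)y_2$ is exactly the example the paper itself records immediately after the statement, so your treatment matches the paper's.
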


\begin{enumerate}
\item[$\left(  i\right)  $] $\gamma$ is a measurable with respect to the two
first parameters and locally Lipschitz continuous with respect to the third parameter,

\item[$\left(  ii\right)  $] $\gamma\left(  t,x,.,.,.\right)  $ is increasing
with respect to the first parameter and $\gamma\left(  t,x,0,.,.\right)  $ is
decreasing with respect to the last parameter. Moreover, $\gamma\left(
t,x,.,.,0\right)  $ is nonnegative and such that
\begin{equation}
\gamma\left(  t,x,0,.,0\right)  =0=\gamma\left(  t,x,.,0,.\right)  \text{.}%
\end{equation}

\end{enumerate}

similarly to the lumped model, $\gamma$ could be chosen such as $\gamma\left(
t,x,y_{1},y_{2},y_{3}\right)  =\left(  \gamma_{1}\left(  t,x\right)
y_{1}-\gamma_{2}\left(  t,x\right)  y_{3}\right)  y_{2}$, with $\gamma
_{1},\gamma_{2}:%
\mathbb{R}
_{+}\times U\times%
\mathbb{R}
^{3}\rightarrow%
\mathbb{R}
_{+\text{.}}$

\begin{claim}
\label{HypothesisFilter1_10}$\forall i,j\in\left\{  1,2,3\right\}  $,
$a_{ij}\in L_{loc}^{\infty}\left(
\mathbb{R}
_{+}\times\overline{U};%
\mathbb{R}
\right)  $.
\end{claim}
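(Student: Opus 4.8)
The content of this statement is a local-regularity property of the diffusion matrix $A=(a_{ij})$ that drives the spatial term $\pounds_{t}\theta_{t}=\operatorname{div}(A(t,x)\nabla\theta_{t})$ in \eqref{StochasticPDE1} and enters the conservative boundary condition \eqref{StochasticPDE5}. Since $A$ is the prescribed law steering migration of the disease across $\overline{U}$, the natural route is to verify $a_{ij}\in L_{loc}^{\infty}(\mathbb{R}_{+}\times\overline{U};\mathbb{R})$ directly from the admissible form of the diffusivities, and I would do this in two layers: first establishing joint measurability in $(t,x)$, then producing an essential supremum bound on every compact time--space slab $[0,T]\times\overline{U}$.

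For the measurability layer I would note that each $a_{ij}$ is built from the climatic and relief data as a Carath\'eodory-type function (measurable in $t$, continuous in $x$), so joint measurability follows from the standard approximation of such functions by step functions in $t$. For the boundedness layer I would exploit that $U$ is a bounded $C^{1}$ domain, hence $\overline{U}$ is compact, and that each $[0,T]$ is compact; whenever the modelling takes the diffusivities continuous on $[0,T]\times\overline{U}$ --- which covers the degenerate case $A=I$ and the smoothly varying case used in \cite{fotsa,fotsa2} --- the extreme value theorem yields $\sup_{[0,T]\times\overline{U}}|a_{ij}|<\infty$, which is exactly the required local essential bound. This also exhibits the role of the hypothesis: $L_{loc}^{\infty}$ regularity in $(t,x)$ is precisely what makes the bilinear form $a_{t}(u,w)=\int_{U}\langle A(t,\cdot)\nabla u,\nabla w\rangle$ bounded on $H^{1}(U)\times H^{1}(U)$ locally uniformly in $t$, so that $\pounds_{t}$ is well defined as an operator from $V$ into its dual in the Gelfand-triple setting $V\subseteq H\subseteq V'$ introduced above, paralleling the control exerted on the remaining coefficients by Assumptions \ref{HypothesisFilter1_6}--\ref{HypothesisFilter1_9}.

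The main obstacle will be the heterogeneous-medium regime, in which the terrain forces $A$ to be only piecewise continuous or merely measurable with jumps across sub-regions of $U$. There the compactness-plus-continuity argument fails and one cannot invoke a maximum; instead I would certify the bound slab by slab, using the physical input that migration rates are finite (no infinite diffusivity) to control $\operatorname{ess\,sup}_{[0,T]\times\overline{U}}|a_{ij}|$. Handling this case carefully, rather than assuming continuity outright, is the delicate point, and it is what makes the $L_{loc}^{\infty}$ formulation --- as opposed to continuity --- the right hypothesis to carry into the distributed-parameter analysis of the next subsection.
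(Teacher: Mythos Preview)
You have misread the role of this statement. In the paper's preamble the environment \texttt{claim} is declared via \verb|\newtheorem{claim}[theorem]{Assumption}|, so what is typeset here is an \emph{Assumption}, not a result. Assumption~\ref{HypothesisFilter1_10} is simply a hypothesis imposed on the entries of the diffusion matrix $A=(a_{ij})$, on a par with Assumptions~\ref{HypothesisFilter1_6}--\ref{HypothesisFilter1_9} and \ref{HypothesisFilter1_11}. The paper neither proves it nor attempts to; it is part of the standing data for the distributed-parameter model and is later used (together with the coercivity Assumption~\ref{HypothesisFilter1_11}) to verify the hypotheses of Theorem~\ref{TheoremGeneralUniciteStochasticEvolutionPBFilter1} in the proof of Proposition~\ref{ExistenceAndUniquenessSPDE}.

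Consequently your ``sketch of a justification'' is answering a question that was not asked. The discussion you give---Carath\'eodory measurability, compactness of $[0,T]\times\overline{U}$, the heterogeneous-medium caveat---is reasonable modelling commentary on \emph{why} such a hypothesis is natural, and parts of it (boundedness of the bilinear form $a_t$) correctly anticipate how the assumption is consumed downstream. But none of this constitutes a proof, and none is required: the $a_{ij}$ are exogenous model inputs, not objects derived within the paper. The appropriate response here is simply to record that the statement is an assumption with no proof in the paper, rather than to manufacture an argument for it.
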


\begin{claim}
\label{HypothesisFilter1_11}$\forall T\geq0$, $\exists C\left(  T\right)  \in%
\mathbb{R}
_{+}^{\ast}$ such that $\forall t\in\left[  0,T\right]  $, $\forall h\in%
\mathbb{R}
^{3}$, $\forall x\in U$,
\begin{equation}
\sum\nolimits_{i=1}^{n}\sum\nolimits_{j=1}^{n}a_{ij}\left(  t,x\right)
h_{i}h_{j}\geq C\sum\nolimits_{i=1}^{n}h_{i}^{2}\text{.}
\label{ConditionCoerciviteModelSPDE}%
\end{equation}

\end{claim}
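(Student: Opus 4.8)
The plan is to read the coercivity inequality as a uniform lower bound on the smallest eigenvalue of the diffusion matrix and to produce the constant $C(T)$ by a compactness argument. First I would symmetrize: the quadratic form satisfies $\sum_{i=1}^{3}\sum_{j=1}^{3} a_{ij}(t,x)h_i h_j = \langle h, A_s(t,x) h\rangle$ with $A_s = \tfrac12(A + A^{\top})$, since the antisymmetric part of $A$ contributes nothing to the form; hence the statement is unchanged if $A$ is replaced by its symmetric part. The inequality to be proved is then exactly $\langle h, A_s(t,x) h\rangle \geq C(T)\,\sum_{i=1}^{3} h_i^2$ for all $h\in\mathbb{R}^3$ and all $(t,x)\in[0,T]\times\overline{U}$, and by the variational characterization $\lambda_{\min}(M) = \min_{|h|=1}\langle h, M h\rangle$ together with the spectral bound $\langle h, A_s h\rangle \geq \lambda_{\min}(A_s)\,|h|^2$ this is equivalent to the single scalar condition $\lambda_{\min}(A_s(t,x)) \geq C(T)$ holding uniformly on $[0,T]\times\overline{U}$, where $\lambda_{\min}$ is the least eigenvalue.

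The substance is to show the infimum of $\lambda_{\min}(A_s(t,x))$ over the parameter set is strictly positive, which I would obtain from pointwise positive-definiteness together with continuity and compactness. The diffusion matrix is positive definite at each $(t,x)$ by the modelling of $\pounds_t$ as a genuine second-order elliptic operator, so $\lambda_{\min}(A_s(t,x)) > 0$ pointwise. Using Assumption \ref{HypothesisFilter1_10} (taking continuous representatives of the coefficients $a_{ij}$) the map $(t,x)\mapsto A_s(t,x)$ is continuous, and because $\lambda_{\min}$ is itself continuous on the space of symmetric matrices, being the minimum of the continuous family $h\mapsto\langle h, \cdot\, h\rangle$ over the compact unit sphere, the composition $(t,x)\mapsto \lambda_{\min}(A_s(t,x))$ is continuous on $[0,T]\times\overline{U}$.

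Finally, since $U$ is bounded, as the Hilbert--Schmidt (Maurin) embedding invoked earlier requires, the set $[0,T]\times\overline{U}$ is compact, so the continuous function $\lambda_{\min}(A_s(\cdot,\cdot))$ attains its minimum at some $(t_0,x_0)$; this minimum equals $\lambda_{\min}(A_s(t_0,x_0)) > 0$ by pointwise positivity, and setting $C(T)$ equal to it closes the argument. The main obstacle is precisely this passage from pointwise to uniform positivity: pointwise ellipticity alone does not force a positive infimum, and it is compactness of $[0,T]\times\overline{U}$ combined with continuity of the eigenvalue map that excludes a sequence $(t_n,x_n)$ along which $\lambda_{\min}\to 0$. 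If one prefers not to strengthen Assumption \ref{HypothesisFilter1_10} to continuity, the same conclusion holds verbatim with $\lambda_{\min}$ replaced by its essential infimum, provided a uniform ellipticity lower bound is imposed almost everywhere in $(t,x)$; the reduction of the first two paragraphs is unaffected and only the compactness step is reinterpreted.
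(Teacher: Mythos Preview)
The item you are trying to prove is not a result but a hypothesis. In this paper the \texttt{claim} environment is declared via \verb|\newtheorem{claim}[theorem]{Assumption}|, so what is typeset here is ``Assumption~\ref{HypothesisFilter1_11}'': the uniform ellipticity condition on the diffusion matrix $A$ is \emph{assumed} of the model, not derived. Accordingly the paper contains no proof of it, and none is expected.

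Your argument, read on its own terms, does not succeed either. You invoke pointwise positive-definiteness of $A_s(t,x)$ ``by the modelling of $\pounds_t$ as a genuine second-order elliptic operator,'' but that is precisely the content of the assumption you are trying to establish, so the step is circular. You then upgrade Assumption~\ref{HypothesisFilter1_10} from $a_{ij}\in L^\infty_{\mathrm{loc}}$ to continuity in order to run a compactness argument, and you assume $\overline{U}$ is compact; neither is granted by the standing hypotheses (the paper only takes $U$ open of class $C^1$, and the Maurin embedding argument you cite concerns the Sobolev scale, not boundedness of $U$). Without those extra inputs there is no mechanism preventing $\lambda_{\min}(A_s(t,x))$ from having essential infimum zero, which is exactly why uniform ellipticity is imposed as a separate assumption rather than deduced.
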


Before giving the first result of this subsection we state a particular
version of the Theorem 2.1 (page 93) proved in \cite{pardoux} .

\begin{theorem}
\label{TheoremGeneralUniciteStochasticEvolutionPBFilter1}Let $\mathcal{X}$,
$\mathcal{Y}$ and $\mathcal{Z}$ denote three separable Hilbert spaces such
that $\mathcal{X}$ is continuously embedded and dense in $\mathcal{Y}$ which
is identified with its dual space. Let $\mathcal{X}^{\prime}$ and
$\mathcal{Z}^{\prime}$ denote respective dual spaces of $\mathcal{X}$ and
$\mathcal{Z}$, $p$ be a real number in $\left]  1,\infty\right[  $ and $T\in%
\mathbb{R}
_{+}$. Let also consider the stochastic differential equation given such as
$\phi_{0}\in L^{2}\left(  \Omega\times\left]  0,T\right[  ;\mathcal{Y}\right)
$ and $\forall t\in\left]  0,T\right[  $,
\begin{equation}
d\phi_{t}+F\left(  t,\phi_{t}\right)  dt+G\left(  t,\phi_{t}\right)
dB_{t}=f\left(  t,.\right)  dt \label{StochasticEvolutionPBFilter1}%
\end{equation}
where $\left(  B_{t}\right)  _{t\geq0}$ is a Brownian motion on $\mathcal{Z}%
^{\prime}$, $f\in L^{p^{\prime}}\left(  \Omega\times\left]  0,T\right[
;\mathcal{X}^{\prime}\right)  $ is non anticipative, the operators
$F:\mathcal{X\rightarrow X}^{\prime}$ and $G:\mathcal{Y\rightarrow L}\left(
\mathcal{Z};\mathcal{Y}\right)  $ are not necessarily linear but satisfy for
almost every $t\in\left]  0,T\right[  $ and independently on the choice of $t$
the following conditions:

\begin{enumerate}
\item[$\left(  i\right)  $] There are three real constants $c,\lambda$ and
$\nu$ such that $c>0$ and $\forall\psi\in\mathcal{X}$,
\[
2\left\langle F\left(  t,\psi\right)  ,\psi\right\rangle +\lambda\left\Vert
\psi\right\Vert _{\mathcal{Y}}^{2}+\nu\geq c\left\Vert \psi\right\Vert
_{\mathcal{X}}^{p}+\left\Vert G\left(  t,\psi\right)  \right\Vert
_{\mathcal{L}\left(  \mathcal{Z};\mathcal{Y}\right)  }^{2}\text{,}%
\]

\item[$\left(  ii\right)  $] $\forall\psi,\varphi\in\mathcal{X}$,
$\left\langle F\left(  t,\psi\right)  -F\left(  t,\varphi\right)
,\psi-\varphi\right\rangle +\lambda\left\Vert \psi-\varphi\right\Vert
_{\mathcal{Y}}^{2}\geq0$,

\item[$\left(  iii\right)  $] There is a constant $\mu\geq0$ such that
$\forall\psi\in\mathcal{X}$, $\left\Vert F\left(  t,\psi\right)  \right\Vert
_{\mathcal{X}^{\prime}}$ $\leq\mu\left\Vert \psi\right\Vert _{\mathcal{X}%
}^{p-1}$,

\item[$\left(  iv\right)  $] $\forall\psi,\varphi,\phi\in\mathcal{X}$, the
application $\xi\in%
\mathbb{R}
\mapsto\left\langle F\left(  t,\psi+\xi\varphi\right)  ,\phi\right\rangle $ is continuous,

\item[$\left(  v\right)  $] $\forall\psi\in\mathcal{X}$, $\forall\varphi
\in\mathcal{Y}$, the applications $t\in\left]  0,T\right[  \mapsto F\left(
t,\psi\right)  $ and $t\in\left]  0,T\right[  \mapsto G\left(  t,\varphi
\right)  $ are Lebesgue-measurable,

\item[$\left(  vi\right)  $] $G\left(  t,0\right)  =0$ and for every bounded
subset $S\subseteq\mathcal{Y}$, there is a constant $C\left(  S\right)  $ such
that $\forall\psi,\varphi\in S$, $\left\Vert G\left(  t,\psi\right)  -G\left(
t,\varphi\right)  \right\Vert _{\mathcal{L}\left(  \mathcal{Z};\mathcal{Y}%
\right)  }\leq C\left\Vert \psi-\varphi\right\Vert _{\mathcal{Y}}$.
\end{enumerate}

Then the equation $\left(  \ref{StochasticEvolutionPBFilter1}\right)  $ has a
unique adapted solution $\phi\in L^{p}\left(  \Omega\times\left]  0,T\right[
;\mathcal{X}\right)  \cap L^{2}\left(  \Omega;C\left(  \left]  0,T\right[
;\mathcal{Y}\right)  \right)  $.
\end{theorem}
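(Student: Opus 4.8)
The plan is to recognize this statement as a special instance of the variational (monotonicity) existence and uniqueness theory for stochastic evolution equations in a Gelfand triple, so the shortest route is to verify that hypotheses $(i)$--$(vi)$ coincide with those of Theorem 2.1 in \cite{pardoux} and invoke it directly. For a self-contained argument I would reproduce the classical Galerkin/monotonicity scheme. The structural setting is the triple $\mathcal{X}\hookrightarrow\mathcal{Y}\cong\mathcal{Y}'\hookrightarrow\mathcal{X}'$, in which the duality pairing $\langle\cdot,\cdot\rangle$ between $\mathcal{X}'$ and $\mathcal{X}$ extends the inner product of $\mathcal{Y}$; conditions $(i)$--$(vi)$ are precisely the coercivity, monotonicity, $(p-1)$-growth, hemicontinuity, measurability and $\mathcal{Y}$-Lipschitz hypotheses under which this scheme is known to succeed.

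First I would build a Galerkin approximation. Since $\mathcal{X}$ is dense in $\mathcal{Y}$, I may fix an orthonormal basis $(e_k)_{k\ge1}$ of $\mathcal{Y}$ with every $e_k\in\mathcal{X}$, and let $P_n$ denote the $\mathcal{Y}$-orthogonal projection onto $\mathcal{X}_n=\mathrm{span}(e_1,\dots,e_n)$. Projecting the equation gives a finite-dimensional It\^{o} system for $\phi^n_t=\sum_{k\le n}c^n_k(t)e_k$, whose coefficients are continuous through $(iv)$ and locally Lipschitz via $(iii)$ and $(vi)$, so a local solution exists; the coercivity estimate $(i)$ then excludes explosion and yields a global solution on $[0,T]$. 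Applying the It\^{o} formula to $\|\phi^n_t\|_\mathcal{Y}^2$, taking expectations, inserting $(i)$ to absorb $\langle F,\phi^n\rangle$ together with the $G$-term, and closing with the Burkholder--Davis--Gundy inequality and Gronwall's lemma, I obtain uniform bounds: $(\phi^n)$ is bounded in $L^p(\Omega\times]0,T[;\mathcal{X})$ with $E[\sup_{t\le T}\|\phi^n_t\|_\mathcal{Y}^2]<\infty$, hence $(F(\cdot,\phi^n))$ is bounded in $L^{p'}(\Omega\times]0,T[;\mathcal{X}')$ by $(iii)$ and $(G(\cdot,\phi^n))$ in $L^2(\Omega\times]0,T[;\mathcal{L}(\mathcal{Z};\mathcal{Y}))$ by $(i)$.

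Next, by reflexivity I extract a subsequence along which $\phi^n\rightharpoonup\phi$ in $L^p(\Omega\times]0,T[;\mathcal{X})$, $F(\cdot,\phi^n)\rightharpoonup\Phi$ in $L^{p'}(\Omega\times]0,T[;\mathcal{X}')$ and $G(\cdot,\phi^n)\rightharpoonup\Psi$ in $L^2(\Omega\times]0,T[;\mathcal{L}(\mathcal{Z};\mathcal{Y}))$, all in the adapted sense, and passing to the limit in the projected integral equation shows that $\phi$ solves $d\phi_t+\Phi_t\,dt+\Psi_t\,dB_t=f\,dt$. The heart of the proof, and the step I expect to be the main obstacle, is the identification of the nonlinear limits $\Phi=F(\cdot,\phi)$ and $\Psi=G(\cdot,\phi)$, which cannot be deduced from weak convergence alone. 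For $\Psi$ I would combine $(vi)$ with a strong-convergence argument for $\phi^n$ in $\mathcal{Y}$; for $\Phi$ I would use the monotonicity (Minty) trick. Concretely, $(ii)$ makes $E\left[\int_0^T\langle F(s,\phi^n_s)-F(s,\eta_s),\phi^n_s-\eta_s\rangle\,ds\right]+\lambda E\left[\int_0^T\|\phi^n_s-\eta_s\|_\mathcal{Y}^2\,ds\right]$ nonnegative for every adapted test process $\eta$; I control $\limsup_n E[\|\phi^n_T\|_\mathcal{Y}^2]$ through the It\^{o} energy identity and weak lower semicontinuity, pass to the limit, and finally take $\eta=\phi-\xi\varphi$ and let $\xi\to0$, invoking hemicontinuity $(iv)$ to conclude $\Phi=F(\cdot,\phi)$.

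Finally, uniqueness and the claimed regularity follow from the same ingredients. For uniqueness I take two solutions, apply the It\^{o} formula to the squared $\mathcal{Y}$-norm of their difference, and use monotonicity $(ii)$ on the drift together with the $\mathcal{Y}$-Lipschitz bound $(vi)$ on $G$; Gronwall's lemma then forces the difference to vanish $P$-almost surely, i.e. indistinguishability. The regularity $\phi\in L^2\left(\Omega;C(]0,T[;\mathcal{Y})\right)$ is a consequence of the It\^{o} formula in the Gelfand triple, which supplies a $\mathcal{Y}$-continuous version for any process in $L^p(\Omega\times]0,T[;\mathcal{X})$ whose stochastic differential takes values in $\mathcal{X}'$. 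Throughout, the delicate point remains the monotonicity-trick identification of the nonlinear drift limit under purely weak convergence.
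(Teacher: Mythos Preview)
Your proposal is correct, and in fact it does more than the paper does: the paper provides no proof of this theorem at all, but simply states it as a particular version of Theorem~2.1 in \cite{pardoux} (Pardoux's 1975 thesis) and invokes it as a black box. Your outline---Galerkin approximation in the Gelfand triple, uniform a~priori bounds from the coercivity $(i)$ via It\^{o}'s formula, weak compactness, identification of the drift limit through the Minty monotonicity trick using $(ii)$ and $(iv)$, identification of the diffusion limit via the $\mathcal{Y}$-Lipschitz condition $(vi)$, and uniqueness from $(ii)$ plus $(vi)$ and Gronwall---is precisely the scheme Pardoux himself carries out in that reference, so you have effectively reconstructed the cited proof rather than offered an alternative. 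The only remark is that your first sentence already names the shortest route the paper actually takes: recognize the hypotheses and cite \cite{pardoux} directly.
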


\begin{proposition}
\label{ExistenceAndUniquenessSPDE}There is a process $\left(  \theta
_{t}\right)  _{t\geq0}$ valued in $H^{1}\left(  U\right)  $ which is the
unique solution of the equations $\left(  \ref{StochasticPDE1}\right)
,\left(  \ref{StochasticPDE4}\right)  $ and $\left(  \ref{StochasticPDE5}%
\right)  $.
\end{proposition}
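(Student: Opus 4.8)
The plan is to recast equation $\left(\ref{StochasticPDE1}\right)$ with the boundary condition $\left(\ref{StochasticPDE5}\right)$ as an abstract stochastic evolution equation of the form $\left(\ref{StochasticEvolutionPBFilter1}\right)$ and then verify the hypotheses $(i)$--$(vi)$ of Theorem \ref{TheoremGeneralUniciteStochasticEvolutionPBFilter1}. Concretely, I would take $\mathcal{X}=H^{1}\left(U\right)$, $\mathcal{Y}=L^{2}\left(U\right)$ (identified with its dual), $\mathcal{Z}=H^{2}\left(U\right)$, and $p=2$. The operator $F$ is defined through the weak (variational) form of $-\pounds_{t}$ together with the drift $-f_{1}$, namely for $\psi,\varphi\in H^{1}\left(U\right)$,
\[
\left\langle F\left(t,\psi\right),\varphi\right\rangle =\int_{U}\left\langle A\left(t,x\right)\nabla\psi,\nabla\varphi\right\rangle dx-\int_{U}f_{1}\left(t,x,\psi\left(x\right)\right)\varphi\left(x\right)dx,
\]
the Neumann condition $\left(\ref{StochasticPDE5}\right)$ being automatically encoded by the absence of a boundary term in this variational formulation. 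The operator $G$ is the Nemytskii-type multiplication operator $G\left(t,\psi\right)h=\delta_{1}\left(t,\cdot\right)\kappa_{1}\left(\psi\right)h$ acting from $\mathcal{Z}=H^{2}\left(U\right)$ into $\mathcal{Y}=L^{2}\left(U\right)$, and $f=0$.

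Next I would check the six conditions in turn, which is mostly routine given the standing assumptions. Coercivity $(i)$ follows from the uniform ellipticity $\left(\ref{ConditionCoerciviteModelSPDE}\right)$ in Assumption \ref{HypothesisFilter1_11}: the gradient term gives $C\left\Vert\nabla\psi\right\Vert_{L^{2}}^{2}$, the drift $f_{1}$ is affine in $\psi$ with $L^{\infty}_{loc}$ coefficients (Assumptions \ref{HypothesisFilter1_6}, \ref{HypothesisFilter1_7}) so it is absorbed into $\lambda\left\Vert\psi\right\Vert_{\mathcal{Y}}^{2}+\nu$ by Young's inequality, and $\left\Vert G\left(t,\psi\right)\right\Vert_{\mathcal{L}\left(\mathcal{Z};\mathcal{Y}\right)}^{2}$ is controlled because $\kappa_{1}$ is bounded (it is supported in $\left]0,1\right[$ and locally Lipschitz, hence bounded) and $\delta_{1}\in L^{\infty}_{loc}$. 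Monotonicity $(ii)$ uses that the principal part is linear and nonnegative in the bilinear form while the $f_{1}$ contribution is Lipschitz in $\psi$, so the defect is controlled by $\lambda\left\Vert\psi-\varphi\right\Vert_{\mathcal{Y}}^{2}$ after one integration by parts is avoided by staying in the weak form; the growth bound $(iii)$ is immediate since $A\in L^{\infty}_{loc}$ gives $\left\Vert F\left(t,\psi\right)\right\Vert_{\mathcal{X}'}\leq\mu\left\Vert\psi\right\Vert_{\mathcal{X}}$ (the zeroth-order term is even lower order). Condition $(iv)$ is the hemicontinuity of $\xi\mapsto\left\langle F\left(t,\psi+\xi\varphi\right),\phi\right\rangle$, which holds by dominated convergence using continuity of $f_{1}$ in its last argument; $(v)$ is measurability in $t$, inherited from the measurability of the coefficients; and $(vi)$ holds because $\kappa_{1}$ is locally Lipschitz and bounded with $\kappa_{1}=0$ outside $\left]0,1\right[$, so $G\left(t,0\right)=0$ and $G$ is globally Lipschitz from $\mathcal{Y}$ to $\mathcal{L}\left(\mathcal{Z};\mathcal{Y}\right)$.

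With all hypotheses verified, Theorem \ref{TheoremGeneralUniciteStochasticEvolutionPBFilter1} yields, for each finite $T$, a unique adapted solution $\theta\in L^{2}\left(\Omega\times\left]0,T\right[;H^{1}\left(U\right)\right)\cap L^{2}\left(\Omega;C\left(\left]0,T\right[;L^{2}\left(U\right)\right)\right)$; patching these solutions over an increasing sequence $T_{n}\to\infty$ by uniqueness gives the global process $\left(\theta_{t}\right)_{t\geq0}$ valued in $H^{1}\left(U\right)$. I expect the main obstacle to be the verification of the coercivity estimate $(i)$ in the precise form demanded — in particular making sure the $L^{2}$-norm of $\psi$ absorbs both the first-order-in-$\psi$ drift term and any lower-order contribution from $A$ while the full $H^{1}$-norm (not just the seminorm) appears on the right, which requires combining $\left(\ref{ConditionCoerciviteModelSPDE}\right)$ with the $\lambda\left\Vert\psi\right\Vert_{\mathcal{Y}}^{2}$ slack term correctly. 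The boundedness of $\kappa_{1}$, which is what makes the noise term harmless, should be recorded explicitly as it is the structural feature of Assumption \ref{HypothesisFilter1_8} that the argument relies on.
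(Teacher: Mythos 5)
Your overall strategy is exactly the paper's: cast $\left(\ref{StochasticPDE1}\right)$, $\left(\ref{StochasticPDE5}\right)$ as an abstract monotone stochastic evolution equation with $\mathcal{X}=H^{1}\left(U\right)$, $\mathcal{Y}=L^{2}\left(U\right)$, $\mathcal{Z}=H^{2}\left(U\right)$, $p=2$, $G$ the multiplication operator by $g_{1}$, and invoke Theorem \ref{TheoremGeneralUniciteStochasticEvolutionPBFilter1}, with coercivity coming from $\left(\ref{ConditionCoerciviteModelSPDE}\right)$ and the noise term controlled by the boundedness and Lipschitz continuity of $\kappa_{1}$ together with $\kappa_{1}\left(0\right)=0$.

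There is, however, one concrete misstep in your splitting of the drift. You put the \emph{entire} affine drift $-f_{1}\left(t,x,\psi\right)=-\alpha\left(t,x\right)+\alpha\left(t,x\right)w\left(t,x\right)\psi$ into $F$ and set the forcing $f=0$. Condition $(iii)$ of the theorem demands the homogeneous bound $\left\Vert F\left(t,\psi\right)\right\Vert _{\mathcal{X}^{\prime}}\leq\mu\left\Vert \psi\right\Vert _{\mathcal{X}}^{p-1}=\mu\left\Vert \psi\right\Vert _{\mathcal{X}}$, which forces $F\left(t,0\right)=0$; with your choice $F\left(t,0\right)=-\alpha\left(t,\cdot\right)\neq0$ in general, so $(iii)$ fails and your claim that it is "immediate" is not correct as stated. (The coercivity condition $(i)$ tolerates a constant via the slack $\nu$, but $(iii)$ has no such slack.) The paper avoids this by keeping only the $\psi$-linear part in the operator, $F\left(t,\psi\right)=\alpha\left(t,\cdot\right)w\left(t,\cdot\right)\psi-\pounds _{t}\psi$, and placing the constant part in the admissible non-anticipative forcing, $f\left(t,\cdot\right)=\alpha\left(t,\cdot\right)\in L^{p^{\prime}}\left(\Omega\times\left]0,T\right[;\mathcal{X}^{\prime}\right)$, for which the theorem is designed. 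This is a one-line repair of your argument, and with it the rest of your verification (coercivity from uniform ellipticity, hemicontinuity, measurability in $t$, the Lipschitz property of $G$ with $G\left(t,0\right)=0$, and extension from $\left[0,T\right]$ to all of $\mathbb{R}_{+}$, which the paper completes via the a priori boundedness established afterwards in Lemma \ref{BoundednessSPDE}) goes through as in the paper.
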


\begin{proof}
We set $\mathcal{X=}H^{1}\left(  U;%
\mathbb{R}
\right)  $, $\mathcal{Y}=L^{2}\left(  U\right)  $, $\mathcal{Z}=H^{2}\left(
U;%
\mathbb{R}
\right)  $, $F\left(  t,\psi\right)  =\alpha\left(  t,.\right)  w\left(
t,.\right)  \psi-\pounds _{t}\left(  \psi\right)  $, $G\left(  t,\psi\right)
:\varphi\in H^{2}\left(  U;%
\mathbb{R}
\right)  \mathcal{\longmapsto}g_{1}\left(  t,\psi\right)  \varphi\in
L^{2}\left(  U\right)  $, $f\left(  t,.\right)  =\alpha\left(  t,.\right)  $
and $p=2$. Recall that $\alpha$, $f_{1}$ and $g_{1}$ are bounded. Moreover,
$f_{1}\left(  t,.\right)  $ and $g_{1}\left(  t,.\right)  $ are Lipschitz
continuous with respect to $\theta_{t}$ and $\kappa_{1}\left(  0\right)  =0$.
Hence conditions $\left(  ii\right)  -\left(  vi\right)  $ of the Theorem
\ref{TheoremGeneralUniciteStochasticEvolutionPBFilter1} are fulfilled. To get
the result it suffices to show that there are constants $c,\lambda$ and $\nu$
such that $\alpha>0$ and $\forall\psi\in H^{1}\left(  U;%
\mathbb{R}
\right)  $, $\forall t\geq0$,
\begin{equation}
-2\left\langle \pounds _{t}\left(  \psi\right)  ,\psi\right\rangle
_{L^{2}\left(  U\right)  }+\left(  2\alpha\left(  t,.\right)  w\left(
t,.\right)  +\lambda\right)  \left\Vert \psi\right\Vert _{L^{2}\left(
U\right)  }^{2}\geq c\left\Vert \psi\right\Vert _{H^{1}\left(  U;%
\mathbb{R}
\right)  }^{2}+\left\Vert G\left(  t,\psi\right)  \right\Vert _{\mathcal{L}%
\left(  H^{2}\left(  U;%
\mathbb{R}
\right)  ;L^{2}\left(  U\right)  \right)  }^{2}\text{.}%
\end{equation}
That condition is definitely satisfied using the inequality $\left(
\ref{ConditionCoerciviteModelSPDE}\right)  $, the boundedness of $\delta_{1}$
and the fact that $\kappa_{1}$ is locally Lipschitz continuous. Note that the
solution is just defined on a maximal set of time. However, we will show in
the sequel that the solution is bounded and therefore is defined for every time.
\end{proof}

\begin{corollary}
There is a process $\left(  \left(  \theta_{t},v_{t},\rho_{t}\right)  \right)
_{t\geq0}$ valued in $H^{1}\left(  U\right)  \times L^{2}\left(  U\right)
\times L^{2}\left(  U\right)  $ which is the unique solution of $\left(
\ref{StochasticPDE1}\right)  -\left(  \ref{StochasticPDE5}\right)  $.
\end{corollary}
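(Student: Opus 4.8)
The plan is to bootstrap from Proposition~\ref{ExistenceAndUniquenessSPDE}, which already gives a unique $H^{1}(U)$-valued solution $\theta$ to the $\theta$-equation together with its boundary condition. With $\theta$ in hand, I would solve equations~$(\ref{StochasticPDE2})$ and $(\ref{StochasticPDE3})$ successively as stochastic evolution equations \emph{without} a differential operator, treating $\theta_{t}$ as a known, adapted, $L^{\infty}(U;[0,1])$-valued coefficient. First I would fix $\theta$ and regard $(\ref{StochasticPDE2})$ as an equation for $v$ in $\mathcal{Y}=L^{2}(U)$: set $\mathcal{X}=\mathcal{Z}=L^{2}(U)$, let $F(t,\psi)=-f_{2}(t,\cdot,\psi,\theta_{t}(\cdot))$ and $G(t,\psi)\varphi=g_{2}(t,\cdot,\psi)\varphi$, take $f\equiv0$ and $p=2$, and check conditions $(i)$--$(vi)$ of Theorem~\ref{TheoremGeneralUniciteStochasticEvolutionPBFilter1}. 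Here there is no diffusion, so coercivity $(i)$ is immediate (the quadratic term on the left is harmless and $G$ is bounded since $\kappa_{2}$ is locally Lipschitz with $\kappa_{2}(0)=0$); monotonicity $(ii)$ follows because $y_{2}\mapsto f_{2}(t,x,y_{2},\theta)$ is affine with a sign-definite leading coefficient $-\beta/(\eta v_{\max}(1+\varepsilon-\theta))\le 0$, which is exactly a one-sided Lipschitz bound with a suitable $\lambda$; the growth and measurability conditions $(iii)$--$(vi)$ are routine from Assumptions~\ref{HypothesisFilter1_6}--\ref{HypothesisFilter1_9}. That yields a unique $L^{2}(U)$-valued solution $v$. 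Then I would repeat the argument verbatim for $(\ref{StochasticPDE3})$, now with $\theta$ \emph{and} $v$ (equivalently $\overline v$) fixed, $F(t,\psi)=-f_{3}(t,\cdot,\overline v_{t}(\cdot),\psi,\theta_{t}(\cdot))$, $G(t,\psi)\varphi=g_{3}(t,\cdot,\psi)\varphi$; monotonicity again holds because $f_{3}(t,x,y,z,\theta)=\gamma(t,x,y)(1-z)$ is affine and decreasing in $z$. This produces a unique $L^{2}(U)$-valued $\rho$, hence a unique triple $(\theta,v,\rho)$ in $H^{1}(U)\times L^{2}(U)\times L^{2}(U)$.

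The second half of the plan is to show the solution is global, i.e.\ defined on all of $\mathbb{R}_{+}$ rather than a maximal random interval, by proving the pointwise-in-$x$ confinement $\big(\theta_{t}(x),v_{t}(x)/v_{\max},\rho_{t}(x)\big)\in[0,1]^{3}$. The clean way is to transfer the lumped-model estimates of Lemma~\ref{BoundednessSDE}: for $\lambda_{U}$-almost every fixed $x\in U$, the triple $t\mapsto(\theta_{t}(x),v_{t}(x),\rho_{t}(x))$ solves — thanks to the boundary condition killing the boundary contribution and to the noise acting pointwise through $\delta_{i}(t,x)\kappa_{i}(\cdot)$ — a system of the lumped type with (measurable, bounded) coefficients $\alpha(\cdot,x),\beta(\cdot,x,\cdot),\gamma(\cdot,x,\cdot),\delta_{i}(\cdot,x)$; applying the It\^o-with-$\varphi$ argument of Lemma~\ref{BoundednessSDE} to each marginal gives $\varphi(\theta_{t}(x))=\varphi(v_{t}(x)/v_{\max})=\varphi(\rho_{t}(x))=0$ a.s., so the $L^{2}(U)$-valued process stays in the closed convex set $\{0\le \theta,v/v_{\max},\rho\le 1\}$. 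Boundedness of the solution then rules out blow-up, so the maximal time is $+\infty$, exactly as in Proposition~\ref{ExistenceAndUniquenessSDE}. Indistinguishability of the triple follows from uniqueness at each of the three solved stages together with continuity of the trajectories in $\mathcal{Y}$ provided by Theorem~\ref{TheoremGeneralUniciteStochasticEvolutionPBFilter1}.

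The main obstacle I expect is the rigorous justification of the ``freeze-$x$'' / ``freeze-$\theta$-then-$v$'' reductions. Two technical points need care: (a) making sense of the nonlinear drift $f_{2}(t,x,v,\theta_{t}(x))$ as a well-defined map $L^{2}(U)\to L^{2}(U)$ and verifying its one-sided Lipschitz (monotonicity) bound uniformly enough to feed Theorem~\ref{TheoremGeneralUniciteStochasticEvolutionPBFilter1} — this uses that $\theta_{t}(x)\in[0,1)$ a.e.\ (from the confinement, so $1+\varepsilon-\theta_{t}(x)\ge\varepsilon>0$) and that $v$ itself will be shown to stay bounded, so all coefficients are effectively bounded; and (b) the Fubini-type argument that lets one read off the pointwise-in-$x$ SDEs from the $L^{2}(U)$-valued equation, which is where the commonly-glossed subtlety lies. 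I would handle (b) by noting that since $\pounds_{t}$ does not appear in $(\ref{StochasticPDE2})$--$(\ref{StochasticPDE3})$ and the cylindrical noises act multiplicatively through $x$-dependent but local coefficients, the $L^{2}(U)$-solution and the family of scalar solutions coincide $\lambda_{U}$-a.e.\ by uniqueness, so the confinement is inherited. Everything else — conditions $(iii)$--$(vi)$, the choice of $\lambda,\nu$ in $(i)$, the $L^{2}$-continuity in time — is routine bookkeeping from the hypotheses already assumed.
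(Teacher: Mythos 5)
Your proposal is correct in substance but takes a genuinely different route from the paper for the $v$ and $\rho$ equations. The paper's own proof is much lighter: since $\left(\ref{StochasticPDE2}\right)$--$\left(\ref{StochasticPDE3}\right)$ contain no spatial differential operator, it simply fixes the space variable $x$, solves the resulting finite-dimensional SDEs by Theorem 5.2.1 of \O ksendal (exactly as in the lumped model), and then invokes continuity of the solution with respect to the initial condition (and the measurable dependence on $x$ through the coefficients and through $\theta_{t}(x)$) to conclude that the resulting field is an $L^{2}(U)$-valued process. You instead solve $\left(\ref{StochasticPDE2}\right)$--$\left(\ref{StochasticPDE3}\right)$ directly as infinite-dimensional evolution equations in $L^{2}(U)$ via Theorem \ref{TheoremGeneralUniciteStochasticEvolutionPBFilter1}, using monotonicity of the affine drifts. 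Your route is heavier but arguably more honest: it sidesteps the Fubini-type measurability issue (your point (b)) that the paper's pointwise argument glosses over, and it delivers the $L^{2}(U)$-valuedness and pathwise continuity in $\mathcal{Y}$ as part of the conclusion rather than as an afterthought. The paper's route is more elementary and matches the treatment of boundedness later in the section, where the space variable is again frozen. Your second half (confinement to $[0,1]^{3}$ to rule out blow-up) reproduces the content of Lemma \ref{BoundednessSPDE} and the Proposition following it, which the paper keeps separate from this corollary.

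Two small repairs are needed to make your variational argument airtight. First, with $f\equiv 0$ the growth condition $(iii)$, $\left\Vert F(t,\psi)\right\Vert_{\mathcal{X}^{\prime}}\leq\mu\left\Vert\psi\right\Vert_{\mathcal{X}}^{p-1}$, fails because $f_{2}(t,x,0,\theta)=\beta(t,x,\theta)\neq 0$; you must split the affine drift, putting the linear-in-$\psi$ part into $F$ and the inhomogeneous part $\beta(t,\cdot,\theta_{t}(\cdot))$ into the forcing term $f$, exactly as the paper does for the $\theta$-equation ($F(t,\psi)=\alpha w\psi-\pounds_{t}\psi$, $f=\alpha$). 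Second, $\mathcal{Z}$ should be $H^{2}(U)$, not $L^{2}(U)$: the Brownian motions are cylindrical on the dual of $H^{2}(U)$ with identity covariance, and it is only after restriction to $L^{2}(U)$ (Maurin's theorem making the embedding Hilbert--Schmidt) that the covariance becomes nuclear and the stochastic integral of the multiplication operator $G(t,\psi)$ converges. Both are bookkeeping, not conceptual gaps.
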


\begin{proof}
Since the existence is proved for $\theta$ and there is not a particular space
differential operator in equations $\left(  \ref{StochasticPDE2}\right)
-\left(  \ref{StochasticPDE3}\right)  $ we can fix the space variable and
solve finite dimension stochastic differential equations. The proof then
consists just in application of the Theorem 5.2.1 in \cite{oksendal} (page
66). Using the continuity of the solution with respect to the initial
condition the process $\left(  \left(  \theta_{t},v_{t},\rho_{t}\right)
\right)  _{t\geq0}$ valued in $H^{1}\left(  U\right)  \times L^{2}\left(
U\right)  \times L^{2}\left(  U\right)  $.
\end{proof}

The regularity of the solution with respect to the space variable depends on
the regularity of model's parameters. In the remainder of the subsection, we
will establish that $\left(  \theta_{t}\left(x \right),\frac{v_{t}\left(x \right)}{v_{max}},\rho_{t}\left(x \right)\right)$ is valued in $\left[  0,1\right]  ^{3}$. From
now and in the rest of the paper, we consider an orthonormal complete basis
$\left\{  e_{j}\right\}  _{j\in%
\mathbb{N}
}$ of $L^{2}\left(  U\right)  $. For every $i\in\left\{  1,2,3\right\}  $,
each Brownian motion $B^{i}$ has the decomposition
\begin{align*}
B_{t}^{i}  &  =\sum\nolimits_{j\in%
\mathbb{N}
}\lambda_{j}B_{t}^{i,j}e_{j}\\
&  =\sum\nolimits_{j\in%
\mathbb{N}
}B_{t}^{i,j}\left(  \left(  Q\right)  ^{\frac{1}{2}}e_{j}\right)
\end{align*}
with $Q$ a positive nuclear operator, $\left\{  \lambda_{j}\right\}  _{j\in%
\mathbb{N}
}\subset%
\mathbb{R}
_{+}^{\ast}$, $Tr\left(  Q\right)  =\sum\lambda_{j}^{2}<\infty$ and $\left(
B_{t}^{i,j}\right)  _{t\geq0}$is a standard real Brownian motion. More
precisely, $Q=JJ^{\ast}$ where $J$ denotes the embedding of $H^{2}\left(
U\right)  $ into $L^{2}\left(  U\right)  $. Since $H^{1}\left(  U\right)  $ is
dense as a subspace of subset $L^{2}\left(  U\right)  $ we can choose
$\left\{  e_{j}\right\}  _{j\in%
\mathbb{N}
}\subset H^{1}\left(  U\right)  $. That choice is suitable for the rest of our developments.

\begin{lemma}
\label{BoundednessSPDE}Let $\theta_{0}$ be valued in $\left[  0,1\right]  $,
$P$-almost surely. If $\left(  \theta_{t}\right)  _{t\geq0}$ is the solution
of the equations $\left(  \ref{StochasticPDE1}\right)  ,\left(
\ref{StochasticPDE4}\right)  $ and$\left(  \ref{StochasticPDE5}\right)  $ then
$P$-almost surely $\forall t\geq0$, $\theta_{t}$ is valued in $\left[
0,1\right]  $.
\end{lemma}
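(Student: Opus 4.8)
The plan is to lift the one-dimensional argument from the proof of Lemma~\ref{BoundednessSDE} to the level of the functional $\Phi(\psi)=\int_U\varphi(\psi(x))\,dx$, with $\varphi\in C^2(\mathbb{R})$ the same penalty map used there: nonnegative, convex, vanishing exactly on $[0,1]$, nonincreasing on $(-\infty,0)$ and nondecreasing on $(1,\infty)$, so that $\varphi''\ge 0$, $\varphi'\le 0$ on $(-\infty,0)$ and $\varphi'\ge 0$ on $(1,\infty)$; the cubic example exhibited in that proof has all these properties. Since $U\subset\mathbb{R}^3$ is bounded and $\theta_t\in H^1(U)\hookrightarrow L^6(U)$, the cubic growth of $\varphi$ makes $\Phi(\theta_t)$ finite. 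Alternatively, and more conveniently for the manipulations below, I would first replace $\varphi$ by a sequence $\varphi_n$ with bounded first and second derivatives sharing the same sign and monotonicity pattern, carry out the computation for each $\varphi_n$, and pass to the limit using the bounds $\theta\in L^2\big(\Omega;C([0,T];L^2(U))\big)\cap L^2\big(\Omega\times(0,T);H^1(U)\big)$ furnished by Proposition~\ref{ExistenceAndUniquenessSPDE} and Theorem~\ref{TheoremGeneralUniciteStochasticEvolutionPBFilter1}.

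First I would apply the It\^o formula in the Gelfand triple $H^1(U)\subset L^2(U)\subset (H^1(U))'$ to $t\mapsto\Phi(\theta_t)$ along the solution of $(\ref{StochasticPDE1}),(\ref{StochasticPDE4}),(\ref{StochasticPDE5})$. This yields four contributions: the drift term $\int_U\varphi'(\theta_t(x))f_1(t,x,\theta_t(x))\,dx$; the elliptic term $\int_U\varphi'(\theta_t(x))\operatorname{div}\big(A(t,x)\nabla\theta_t(x)\big)\,dx$; the stochastic integral $\int_0^t\big(\int_U\varphi'(\theta_s(x))g_1(s,x,\theta_s(x))\,dB_s^1(x)\big)$; and the It\^o correction $\tfrac12\int_U\varphi''(\theta_t(x))\,g_1(t,x,\theta_t(x))^2\,(\cdot)\,dx$, a trace against the covariance of $B^1$.

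Then I would dispatch the four terms, mostly verbatim from the lumped case. For the drift, the computation in the proof of Lemma~\ref{BoundednessSDE} gives $\varphi'(y_1)f_1(t,x,y_1)\le 0$ for every $y_1$ (using $w\ge 1$), so this term is $\le 0$. For the elliptic term I would integrate by parts; the boundary integral $\int_{\partial U}\varphi'(\theta_t)\langle A\nabla\theta_t,n\rangle$ vanishes by the Neumann condition~$(\ref{StochasticPDE5})$, leaving $-\int_U\varphi''(\theta_t(x))\langle A(t,x)\nabla\theta_t(x),\nabla\theta_t(x)\rangle\,dx\le 0$ since $\varphi''\ge 0$ and $A$ is coercive by~$(\ref{ConditionCoerciviteModelSPDE})$ (the chain rule $\varphi'(\theta_t)\in H^1(U)$ is used here, which is the reason the $\varphi_n$-approximation with bounded $\varphi_n''$ is helpful). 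For the last two terms I would invoke the support mismatch exactly as in the lumped proof: $\kappa_1$ vanishes off $(0,1)$ while $\varphi'$ and $\varphi''$ vanish on $[0,1]$, so at every $x$ either $\varphi'(\theta_t(x))=\varphi''(\theta_t(x))=0$ or $\theta_t(x)\notin(0,1)$ and hence $g_1(t,x,\theta_t(x))=\delta_1(t,x)\kappa_1(\theta_t(x))=0$; in all cases $\varphi'(\theta_t(x))g_1(t,x,\theta_t(x))=0$ and $\varphi''(\theta_t(x))g_1(t,x,\theta_t(x))^2=0$, so both the stochastic integral and the It\^o correction are identically zero.

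Collecting the estimates gives $d\Phi(\theta_t)\le 0$; since $\theta_0\in[0,1]$ almost everywhere, $\Phi(\theta_0)=0$, and as $\Phi\ge 0$ this forces $\Phi(\theta_t)=0$ for all $t\ge 0$, $P$-almost surely, using continuity of $t\mapsto\Phi(\theta_t)$. Hence $\theta_t(x)\in[0,1]$ for almost every $x\in U$, $P$-almost surely, for every $t\ge 0$. The only genuine obstacle is making the infinite-dimensional It\^o formula legitimate for a penalty functional that is necessarily unbounded (a bounded convex function on $\mathbb{R}$ is constant): this is handled either through the three-dimensional embedding $H^1\hookrightarrow L^6$ together with the growth control of $\varphi$, or through the $\varphi_n$-approximation described above; everything else is the direct transcription of the proof of Lemma~\ref{BoundednessSDE}.
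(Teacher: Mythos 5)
Your proposal is correct and follows essentially the same route as the paper: apply the infinite-dimensional It\^o formula to $\Phi(\psi)=\int_U\varphi(\psi(x))\,dx$, kill the stochastic integral and the It\^o correction via the support mismatch between $\kappa_1$ and $\varphi'$, $\varphi''$, handle the elliptic term by integration by parts with the Neumann condition and coercivity of $A$, and conclude from $\Phi(\theta_t)\le 0$ and $\Phi\ge 0$. The only cosmetic difference is that the paper avoids your approximation/embedding discussion by choosing from the outset a penalty $\varphi$ that is piecewise cubic near $[0,1]$ but linear at infinity, so that $D\varphi$ and $D^2\varphi$ are globally bounded.
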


\begin{proof}
Let $\varphi\in C^{2}\left(
\mathbb{R}
\right)  $ be a nonnegative function which is null on $\left[  0,1\right]  $
and positive elsewhere, decreases on $\left]  -\infty,0\right[  $ but
increases on $\left]  1,\infty\right[  $, $D^{2}\varphi$ is nonnegative, both
$D\varphi$ and $D^{2}\varphi$ are bounded. Let also condider the functional
\[
\Phi:h\in L^{2}\left(  U\right)  \longmapsto\int\nolimits_{U}\varphi\left(
h\left(  x\right)  \right)  dx\in%
\mathbb{R}
_{+}%
\]
We can take for instance the function $\varphi$ such as
\[
x\mapsto\left\{
\begin{array}
[c]{l}%
-6x-6,x\leq-2\\
x^{3}+6x^{2}+6x+2,-2<x\leq-1\\
-x^{3},-1<x\leq0\\
0,0<x\leq1\\
\left(  x-1\right)  ^{3},1<x\leq2\\
-x^{3}+9x^{2}-21x+15,2<x\leq3\\
6x-12,x>3
\end{array}
\right.
\]
By the Rellich-Kondrachov theorem\footnote{See Theorem 2.16 in \cite{brezis},
page 285.} since $U$ is assumed of class $C^{1}$, $H^{1}\left(  U\right)
\subset L^{p}\left(  U\right)  $ with completely continuous imbedding when $p$
is taken in the set $\left[  1,6\right[  $. Hence, if $h\in H^{1}\left(
U\right)  $ then $\varphi\circ h,D\varphi\circ h\in L^{2}\left(  U\right)  $
and therefore $\Phi$ and $D\Phi$ are well defined and continuous. Since
$D^{2}\varphi$ bounded the map $\Phi$ is twice differentiable and the map
$h\in L^{2}\left(  U\right)  \mapsto D^{2}\Phi\left(  h\right)  \in
\mathcal{L}\left(  L^{2}\left(  U\right)  \right)  $ is continuous when
$\mathcal{L}\left(  L^{2}\left(  U\right)  \right)  $ is endowed with its
weak*-topology. Moreover, $D\Phi\left(  H^{1}\left(  U\right)  \right)
\subset H^{1}\left(  U\right)  $ and $\forall g\in L^{2}\left(  U\right)  $
such that $\forall h\in H^{1}\left(  U\right)  $,
\begin{align*}
\int\nolimits_{U}\varphi\left(  h\left(  x\right)  \right)  g\left(  x\right)
dx  &  \leq c_{1}\left\Vert g\right\Vert _{L^{2}\left(  U\right)  }\left\Vert
h\right\Vert _{L^{2}\left(  U\right)  }\\
&  \leq c_{2}\left\Vert g\right\Vert _{L^{2}\left(  U\right)  }\left\Vert
h\right\Vert _{H^{1}\left(  U\right)  }%
\end{align*}%
\[
\left\Vert D\Phi\left(  h\right)  \right\Vert _{H^{1}\left(  U\right)  }\leq
C\left\Vert h\right\Vert _{H^{1}\left(  U\right)  }%
\]
with $c_{1}$ the Lipschitz constant of $\varphi$ and
\[
C=\sup\left\{  D^{2}\varphi\left(  x\right)  ;x\in%
\mathbb{R}
\right\}  \text{.}%
\]
Indeed, $\forall h\in H^{1}\left(  U\right)  $ the operators $D\Phi\left(
h\right)  $ and $D^{2}\Phi\left(  h\right)  $ have respectively the kernels
$D\varphi\circ h$ and $D^{2}\varphi\circ h$.

We can now apply the It\^{o} formula\footnote{See also \cite{daprato}%
\ (Theorem 7.21, page 133).} as in \cite{pardoux} (Theorem 4.2, page 65) and
obtain
\begin{align*}
\Phi\left(  \theta_{t}\right)   &  =\Phi\left(  \theta_{0}\right)
+\int\nolimits_{0}^{t}D\Phi\left(  \theta_{s}\right)  \left(  f_{1}\left(
s,.,\theta_{s}\right)  +\pounds _{s}\theta_{s}\right)  ds\\
&  +\int\nolimits_{0}^{t}D\Phi\left(  \theta_{s}\right)  G\left(  s,\theta
_{s}\right)  dB_{s}^{1}+\frac{1}{2}Tr\int\nolimits_{0}^{t}D^{2}\Phi\left(
\theta_{s}\right)  G\left(  s,\theta_{s}\right)  QG^{\ast}\left(  s,\theta
_{s}\right)  ds\\
&  =\Phi\left(  \theta_{0}\right)  +\int\nolimits_{0}^{t}\int\nolimits_{U}%
D\varphi\left(  \theta_{s}\left(  x\right)  \right)  \left(  f_{1}\left(
s,x,\theta_{s}\left(  x\right)  \right)  +\pounds _{s}\theta_{s}\left(
x\right)  \right)  dxds\\
&  +\sum\nolimits_{j=1}^{\infty}\int\nolimits_{0}^{t}\lambda_{j}%
\int\nolimits_{U}D\varphi\left(  \theta_{s}\left(  x\right)  \right)
g_{1}\left(  s,x,\theta_{s}\left(  x\right)  \right)  e_{j}\left(  x\right)
dxdB_{s}^{1,j}\\
&  +\frac{1}{2}\sum\nolimits_{j=1}^{\infty}\int\nolimits_{0}^{t}\lambda
_{j}^{2}\int\nolimits_{U}D^{2}\varphi\left(  \theta_{s}\left(  x\right)
\right)  g_{1}\left(  s,x,\theta_{s}\left(  x\right)  \right)  e_{j}%
^{2}\left(  x\right)  dx\\
&  \times\int\nolimits_{U}g_{1}\left(  s,y,\theta_{s}\left(  y\right)
\right)  e_{j}^{2}\left(  y\right)  dyds
\end{align*}
The integration with respect to the Brownian motion is guaranteed since
$\left(  g_{1}\left(  s,.,\theta_{s}\left(  .\right)  \right)  D\varphi\left(
\theta_{s}\left(  .\right)  \right)  \right)  _{s\geq0}$ is measurable and
bounded\footnote{See \cite{pardoux} and references therein.}. We can easily
check that $\Phi\left(  \theta_{0}\right)  =0$, $f_{t}^{1}\left(  \theta
_{t}\right)  D\varphi\left(  \theta_{t}\right)  $ is not positive and the last
two terms in the right side of the equality are null since $\kappa_{1}$ has
been assumed null on $%
\mathbb{R}
\setminus\left]  0,1\right[  $. Hence,%
\begin{align*}
\Phi\left(  \theta_{t}\right)   &  =\int\nolimits_{0}^{t}\int\nolimits_{U}%
\left(  f_{1}\left(  s,x,\theta_{s}\left(  x\right)  \right)  +\pounds _{s}%
\theta_{s}\left(  x\right)  \right)  D\varphi\left(  \theta_{s}\left(
x\right)  \right)  dxds\\
&  \leq\int\nolimits_{0}^{t}\int\nolimits_{U}D\varphi\left(  \theta_{s}\left(
x\right)  \right)  \pounds _{s}\theta_{s}\left(  x\right)  dxds\\
&  =-\int\nolimits_{0}^{t}\int\nolimits_{U}\left\langle A_{s}\left(  x\right)
\nabla\theta_{s},\nabla D\varphi\left(  \theta_{s}\right)  \right\rangle
dxds\\
&  =-\int\nolimits_{0}^{t}\int\nolimits_{U}\left\langle A_{s}\left(  x\right)
\nabla\theta_{s},D^{2}\varphi\left(  \theta_{s}\right)  \nabla\theta
_{s}\right\rangle dxds\\
&  \leq0
\end{align*}
The last inequality implies that $\Phi\left(  \theta_{t}\right)  $ is not
positive. Using the definition of $\Phi$ necessarily $\Phi\left(  \theta
_{t}\right)  $ is null and therefore $\theta_{t}\left(  x\right)  \in\left[
0,1\right]  $ for almost every $x\in U$.
\end{proof}

\begin{proposition}
Let $\left(  \frac{v_{0}}{v_{\max}},\rho_{0}\right)  $ be valued in $\left[
0,1\right]  ^{2}$, $P$-almost surely. If $\left(  \left(  \theta_{t}%
,v_{t},\rho_{t}\right)  \right)  _{t\in\left[  0,T\right[  }$ is the solution
of the model $\left(  \ref{StochasticPDE1}\right)  -\left(
\ref{StochasticPDE5}\right)  $ then $P$-almost surely $\forall t\geq0$,
$\left(  \frac{v_{t}}{v_{\max}},\rho_{t}\right)  $ is valued in $\left[
0,1\right]  ^{2}$.
\end{proposition}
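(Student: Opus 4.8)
The plan is to reduce the statement to the lumped estimate of Lemma~\ref{BoundednessSDE} by freezing the space variable. Equations~(\ref{StochasticPDE2}) and~(\ref{StochasticPDE3}) carry no spatial differential operator, so, exactly as in the proof of the preceding Corollary, for each fixed $x\in U$ the pair $\left(v_t(x),\rho_t(x)\right)_{t\ge0}$ solves a genuinely finite-dimensional It\^{o} equation in $t$, with drifts $f_2\left(t,x,\cdot,\theta_t(x)\right)$ and $f_3\left(t,x,\overline{v}_t(x),\cdot,\theta_t(x)\right)$, diffusion coefficients $g_2,g_3$ (depending on $x$ only through the bounded factors $\delta_2,\delta_3$), and driven by the real continuous martingales obtained from $B^2,B^3$ through the expansions $B^i_t=\sum_j\lambda_jB^{i,j}_te_j$ introduced before Lemma~\ref{BoundednessSPDE}. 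By Lemma~\ref{BoundednessSPDE} we already know that $P$-almost surely $\theta_t(x)\in[0,1]$ for almost every $x\in U$ and every $t\ge0$; on that full-measure set the coefficients of this equation are those of the lumped model evaluated at an admissible $\theta$, so the assertion for $\left(v_t(x),\rho_t(x)\right)$ is exactly the content of Lemma~\ref{BoundednessSDE}.

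For completeness, I recall how that one-dimensional argument goes. Let $\varphi\in C^2(\mathbb{R})$ be the function of Lemma~\ref{BoundednessSDE}: nonnegative, null on $[0,1]$, positive elsewhere, with $D\varphi\le0$ on $]-\infty,0[$ and $D\varphi\ge0$ on $]1,\infty[$, so that $D\varphi$ and $D^2\varphi$ vanish on $[0,1]$. One applies It\^{o}'s formula to $\varphi\!\left(v_t(x)/v_{\max}\right)$ and to $\varphi\!\left(\rho_t(x)\right)$. Since by Assumption~\ref{HypothesisFilter1_8} the maps $\kappa_2,\kappa_3$ — hence $g_2$ viewed as a function of $v/v_{\max}$ and $g_3$ as a function of $\rho$ — vanish off $]0,1[$, while $D\varphi$ and $D^2\varphi$ vanish on $[0,1]$, both the stochastic integral and the second-order It\^{o} term are identically zero, leaving
\[
\frac{d}{dt}\,\varphi\!\left(\frac{v_t(x)}{v_{\max}}\right)=\frac{1}{v_{\max}}\,D\varphi\!\left(\frac{v_t(x)}{v_{\max}}\right)f_2\!\left(t,x,v_t(x),\theta_t(x)\right),\qquad \frac{d}{dt}\,\varphi\!\left(\rho_t(x)\right)=D\varphi\!\left(\rho_t(x)\right)f_3\!\left(t,x,\overline{v}_t(x),\rho_t(x),\theta_t(x)\right).
\]
Both right-hand sides are nonpositive: for $\rho$ this follows from $\gamma\ge0$ (Assumption~\ref{HypothesisFilter1_9}) together with the vanishing of $f_3$ at $\rho=1$ and its nonnegativity at $\rho=0$; for $v$ it follows from the explicit form of $f_2$ with $\beta\ge0$, $\eta\le1$ and the admissibility $\theta_t(x)\in[0,1]$, which keeps $1+\varepsilon-\theta_t(x)$ strictly positive — precisely the sign discussion already used for the lumped model. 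Since $\rho_0,v_0/v_{\max}\in[0,1]$ give $\varphi\!\left(\rho_0(x)\right)=\varphi\!\left(v_0(x)/v_{\max}\right)=0$, the two nonincreasing maps above stay $\le0$ and hence — $\varphi$ being nonnegative — vanish identically, so that $v_t(x)/v_{\max}$ and $\rho_t(x)$ lie in $[0,1]$ for almost every $x\in U$, every $t\ge0$, $P$-almost surely, which is the assertion.

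A variant that never freezes $x$ is to run the same computation directly on $L^2(U)$ with the functional $\Phi(h)=\int_U\varphi\left(h(x)\right)dx$, exactly as in the proof of Lemma~\ref{BoundednessSPDE}; this sidesteps any pointwise evaluation of the Hilbert-valued noise. The only step I expect to require genuine care is the boundary sign estimate for $f_2$, which is where the hypotheses on $\beta$ and $\eta$ and the regularizing role of $\varepsilon$ actually enter; the finite-dimensional reduction and the measurability bookkeeping in $x$ are routine, and in any case are bypassed by the $\Phi$-functional variant.
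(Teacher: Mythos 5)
Your proof takes essentially the same route as the paper: the paper's own proof is a one-line reduction --- since equations (\ref{StochasticPDE2})--(\ref{StochasticPDE3}) contain no spatial differential operator, fix the space variable and conclude as in Lemma \ref{BoundednessSDE} --- and your argument is exactly that reduction, with the one-dimensional $\varphi$-computation written out explicitly and an optional $\Phi$-functional variant appended. Both rest on the same boundary sign check for $f_{2}$ at $v=v_{\max}$ that the paper leaves implicit, a point you correctly flag as the only delicate step.
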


\begin{proof}
Using the particular form of equations $\left(  \ref{StochasticPDE2}\right)
-\left(  \ref{StochasticPDE3}\right)  $ we can fix the space variable and
conclude as in Lemma \ref{BoundednessSDE}.
\end{proof}

It seems difficult to generalize the Lemma \ref{RemainInteriorSDE} and the Lemma \ref{LeaveEdgeSDE} in spatial case. However, we are able to give some results
similar to the Lemma \ref{RemainInteriorSDE}.

Let
\[
m=\inf\left\{  \min\left\{  \theta_{0}\left(  x\right)  ,1-\sigma u_{t}\left(
x\right)  \right\}  ;t\geq0,x\in U\right\}
\]
and
\[
2M=\inf\left\{  \min\left\{  1-\theta_{0}\left(  x\right)  ,\sigma
u_{t}\left(  x\right)  \right\}  ;t\geq0,x\in U\right\}  \text{.}%
\]

\begin{lemma}
\label{RemainInteriorPDETheta}Let $\theta_{0}$ be valued in $\left]
0,1\right[  $, $P$-almost surely. Let $\left(  \theta_{t}\right)  _{t\geq0}$
be the solution of the equations $\left(  \ref{StochasticPDE1}\right)
,\left(  \ref{StochasticPDE4}\right)  $ and $\left(  \ref{StochasticPDE5}%
\right)  $, then $P$-almost surely, the following statements hold

\begin{enumerate}
\item[$\left(  i\right)  $] If $m>0$ then $\forall t\leq0$, $\theta_{t}>0$.

\item[$\left(  ii\right)  $] If $M>0$ then $\forall t\geq0$, $\theta_{t}<1$.
\end{enumerate}
\end{lemma}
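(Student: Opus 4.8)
The plan is to re-run the It\^o-functional argument of Lemma~\ref{BoundednessSPDE} (which already gives $\theta_t(x)\in[0,1]$ for a.e.\ $x$), but now with a \emph{convex, singular} weight that blows up at the endpoint to be excluded. For statement~$(i)$, fix $\delta\in\,]0,1[$ and for $n\ge1$ let $\varphi_n\in C^2(\mathbb R)$ be a bounded convex function with $\varphi_n(y)=(y+\tfrac1n)^{-\delta}$ for all $y\ge0$ (extended to $y<0$ so as to remain bounded and $C^2$); then $\varphi_n\uparrow\varphi$ pointwise on $[0,1]$, where $\varphi(y)=y^{-\delta}$, while $\varphi_n,\varphi_n',\varphi_n''$ are bounded for each fixed $n$, so that, exactly as in Lemma~\ref{BoundednessSPDE}, the functional $\Phi_n(h)=\int_U\varphi_n(h(x))\,dx$ is admissible for the infinite-dimensional It\^o formula (Theorem~4.2 in \cite{pardoux}) on $H^1(U)$. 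Applying that formula to $\Phi_n(\theta_t)$ and taking expectations — the stochastic term being a genuine martingale because $\theta_s\in[0,1]$ and $\varphi_n'$, $\delta_1$, $\kappa_1$ are bounded there — I would estimate the three remaining contributions.

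The diffusion contribution is handled verbatim as in Lemma~\ref{BoundednessSPDE}: integrating by parts with the Neumann condition \eqref{StochasticPDE5},
\[
\int_U\varphi_n'(\theta_s(x))\,\pounds _s\theta_s(x)\,dx=-\int_U\varphi_n''(\theta_s(x))\,\big\langle A_s(x)\nabla\theta_s,\nabla\theta_s\big\rangle\,dx\le0,
\]
because $\varphi_n''\ge0$ and $A_s$ is coercive (Assumption~\ref{HypothesisFilter1_11}). For the drift, since $f_1(s,x,y)=\alpha(s,x)\big(1-y\,w(s,x)\big)$ with $\alpha\ge0$, $w\le\tfrac1{1-\sigma}$ and $y(y+\tfrac1n)^{-\delta-1}\le(y+\tfrac1n)^{-\delta}$, one gets for $y\ge0$
\[
\varphi_n'(y)\,f_1(s,x,y)=-\delta\alpha\,(y+\tfrac1n)^{-\delta-1}+\delta\alpha w\,y\,(y+\tfrac1n)^{-\delta-1}\le\delta\,\|\alpha w\|_\infty\,\varphi_n(y).
\]
The decisive point — unavailable for the bounded penalization of Lemma~\ref{BoundednessSPDE} — is that $\kappa_1$ is Lipschitz with $\kappa_1(0)=0$, so $\kappa_1(y)^2\le L^2y^2$ on $[0,1]$ and hence $\tfrac12\varphi_n''(y)\,\delta_1(s,x)^2\kappa_1(y)^2\le\tfrac12\delta(\delta+1)L^2\|\delta_1\|_\infty^2\,y^2(y+\tfrac1n)^{-\delta-2}\le C\,\varphi_n(y)$ with $C$ independent of $n$. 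Collecting the estimates gives $E[\Phi_n(\theta_t)]\le E[\Phi_n(\theta_0)]+C\int_0^tE[\Phi_n(\theta_s)]\,ds$, whence by Gronwall $E[\Phi_n(\theta_t)]\le E[\Phi_n(\theta_0)]\,e^{Ct}\le|U|\,m^{-\delta}e^{Ct}$, using that $m>0$ forces $\theta_0\ge m$ a.e.\ so $\varphi_n(\theta_0)\le\theta_0^{-\delta}\le m^{-\delta}$. Letting $n\to\infty$ (monotone convergence on the left) yields $E\big[\int_U\theta_t(x)^{-\delta}\,dx\big]<\infty$, hence $P$-a.s.\ $\theta_t(x)>0$ for a.e.\ $x$; the strengthening to all $t\ge0$ simultaneously follows, as usual, from the path-continuity of $t\mapsto\theta_t$ in $L^2(U)$.

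Statement~$(ii)$ is obtained by the symmetric argument applied to $\widetilde\theta_t:=1-\theta_t$, which solves \eqref{StochasticPDE1} together with the boundary condition \eqref{StochasticPDE5} (since $\nabla\widetilde\theta_t=-\nabla\theta_t$), has drift $-f_1(t,\cdot,\theta_t)=\alpha(w-1)-\alpha w\,\widetilde\theta_t$ — nonnegative at $\widetilde\theta_t=0$ precisely because $w=\tfrac1{1-\sigma u}\ge1$ — and noise coefficient $\delta_1\kappa_1(1-\widetilde\theta_t)$, with $\kappa_1(1-z)^2\le L^2z^2$ since $\kappa_1(1)=0$. Running $\Phi_n(\widetilde\theta_t)$ and using that $M>0$ gives $1-\theta_0\ge2M$ a.e., so that $\Phi_n(\widetilde\theta_0)\le|U|(2M)^{-\delta}<\infty$, the same Gronwall estimate yields $\widetilde\theta_t>0$, i.e.\ $\theta_t<1$. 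The main obstacle I anticipate is the careful bookkeeping around the singular functional: choosing the approximants $\varphi_n$ so that each $\Phi_n$ meets the hypotheses in \cite{pardoux} on $H^1(U)$ (differentiability via the Sobolev/Rellich--Kondrachov embeddings, exactly as in Lemma~\ref{BoundednessSPDE}), verifying that all three terms above are dominated by $C\,\Phi_n(\theta_s)$ \emph{uniformly in $n$}, and justifying the passage $n\to\infty$; once this is arranged the conclusion is a routine Gronwall argument.
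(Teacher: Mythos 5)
Your proof is essentially sound but follows a genuinely different route from the paper's. The paper proves the lemma by \emph{translation}: it introduces $\theta^{1}=\theta-m$ and $\theta^{2}=\theta+M$, observes that the shifted drifts still point inward at the new endpoints (this is exactly where the $1-\sigma u_{t}$ and $\sigma u_{t}$ terms in the definitions of $m$ and $2M$ are used, to get $1-mw>0$ and $1-w(1-M)<0$), and then re-runs the bounded penalization of Lemma \ref{BoundednessSPDE} to conclude that the shifted processes stay in $[0,1]$, i.e.\ $\theta\geq m$ and $\theta\leq 1-M$. You instead use a \emph{singular Lyapunov functional} $y\mapsto(y+\tfrac1n)^{-\delta}$ together with a Gronwall estimate in expectation, the decisive ingredient being that $\kappa_{1}$ vanishes to first order at $0$ and $1$, so that the It\^{o} correction $\tfrac12\varphi_n''\,g_1^2$ and the dangerous part of the drift are both dominated by $C\varphi_n$ uniformly in $n$. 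Your computations of the three contributions (diffusion $\leq 0$ by convexity and Assumption \ref{HypothesisFilter1_11}, drift and quadratic-variation terms $\leq C\varphi_n$) are correct, and your argument buys something the paper's does not: part $(ii)$ goes through needing only $\operatorname{ess\,inf}(1-\theta_{0})>0$, without the requirement $\inf_{t}\sigma u_{t}>0$ hidden in the definition of $M$; the price is that you only get $\theta_t>0$ a.e.\ rather than the quantitative bound $\theta_t\geq m$ that the translation argument yields. (The impossibility of a globally convex bounded extension of $\varphi_n$ to $y<0$ is harmless since $\theta_s\in[0,1]$ a.e.\ is already known, but you should say that only the values of $\varphi_n''$ on $[0,1]$ enter.)

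The one genuine gap is the final step ``for all $t\geq0$ simultaneously.'' Your Gronwall bound gives, for each \emph{fixed} $t$, that $E\bigl[\int_U\theta_t(x)^{-\delta}dx\bigr]<\infty$, hence $\theta_t>0$ a.e.\ $P$-a.s.\ \emph{for that $t$}. Path-continuity of $t\mapsto\theta_t$ in $L^{2}(U)$ does not upgrade this: an $L^{2}$-limit of functions that are each positive a.e.\ need not be positive a.e., so intersecting over rational $t$ and passing to the limit fails. In the paper's approach this issue does not arise because the stochastic integral term vanishes pathwise (the penalization $\varphi$ is supported where $\kappa_1=0$), so the conclusion holds for all $t$ on a single null-set complement; in your approach the martingale term is genuinely nonzero and you only control its expectation. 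The standard repair is to note that $e^{-Ct}\Phi_n(\theta_t)$ is a nonnegative supermartingale and apply the maximal inequality
\begin{equation*}
P\Bigl(\sup_{t\leq T}e^{-Ct}\Phi_n(\theta_t)>\lambda\Bigr)\leq\frac{E\bigl[\Phi_n(\theta_0)\bigr]}{\lambda}\leq\frac{|U|\,m^{-\delta}}{\lambda},
\end{equation*}
which is uniform in $n$; letting $n\to\infty$ and then $\lambda\to\infty$ gives $\sup_{t\leq T}\int_U\theta_t(x)^{-\delta}dx<\infty$ a.s., and hence the conclusion for all $t$ at once. With that insertion your argument is complete.
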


\begin{proof}
Let consider new processes $\theta^{1}=\theta-m$ and $\theta^{2}=\theta+M$. $\theta^{1}$ and $\theta^{2}$ satisfy respectively the equations
\begin{equation}
d\theta_{t}^{1}\left(  t,x\right)  =\alpha\left(  t,x\right)  \left(
1-mw_{t}\left(  x\right)  -w_{t}\left(  x\right)  \theta_{t}^{1}\left(
x\right)  \right)  dt+\pounds _{t}\theta_{t}^{1}\left(  x\right)
dt+g_{1}\left(  t,x,\theta_{t}^{1}\left(  x\right)  +m\right)  dB_{t}%
^{1}\left(  x\right)
\end{equation}
and
\begin{equation}
d\theta_{t}^{2}\left(  t,x\right)  =\alpha\left(  t,x\right)  \left(
1-w_{t}\left(  x\right)  \left(  \theta_{t}^{2}\left(  x\right)  -M\right)
\right)  dt+\pounds _{t}\theta_{t}^{2}\left(  x\right)  dt+g_{1}\left(
t,x,\theta_{t}^{2}\left(  x\right)  -M\right)  dB_{t}^{1}\left(  x\right)
\end{equation}
Remembering that $w\left(  t,x\right)  =\frac{1}{1-\sigma u\left(  t,x\right)
}$ and $\sigma\in\left]  0,1\right[  $ we can see that $1-mw\left(
t,x\right)  >0$ and $1-w\left(  t,x\right)  \left(  1-M\right)  <0$. Using
similar arguments to those in the proof of the Proposition
\ref{BoundednessSPDE} we still have that $\theta^{1}$ and $\theta^{2}$ are
valued in $\left[  0,1\right]  $. If $m>0$ then $\theta$ is necessarily valued
in $\left]  0,1\right]  $. In the same manner, if $M>0$ then $\theta$ is
necessarily valued in $\left[  0,1\right[  $.
\end{proof}

\begin{proposition}
\label{RemainInteriorPDEvRho}Let $\left(  \frac{v_{0}}{v_{\max}},\rho
_{0}\right)  $ be valued in $\left]  0,1\right[  ^{2}$, $P$-almost surely and
let $\left(  \left(  \theta_{t},v_{t},\rho_{t}\right)  \right)  _{t\geq0}$ be
the solution of the model $\left(  \ref{StochasticPDE1}\right)  -\left(
\ref{StochasticPDE5}\right)  $. If there is a random time $T^{\ast}>0$ and a
nonempty interval $I\subset\left]  0,T^{\ast}\right[  $ satisfying $\forall
t\in I$, $\forall x\in\overline{U}$, $\forall y\in%
\mathbb{R}
_{+}^{\ast}$, $\beta\left(  t,x,\theta_{t}\left(  x\right)  \right)  $ and
$\gamma\left(  t,x,\theta_{t}\left(  x\right)  ,v_{t}\left(  x\right)
,\rho_{t}\left(  x\right)  \right)  $ are positive, then $P$-almost surely,
$\forall t>0$, $\left(  \rho_{t},\frac{v_{t}}{v_{\max}}\right)  $ is valued in
$\left]  0,1\right[  ^{2}$.
\end{proposition}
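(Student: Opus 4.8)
This is the $\left(v,\rho\right)$-analogue of Lemma \ref{RemainInteriorPDETheta}, and the plan is to freeze the space variable and then transcribe the lumped‑model arguments of Lemma \ref{RemainInteriorSDE} and Lemma \ref{LeaveEdgeSDE}. Since equations $\left(\ref{StochasticPDE2}\right)$ and $\left(\ref{StochasticPDE3}\right)$ carry no differential operator in $x$, for each fixed $x\in U$ the pair $\left(v_t\left(x\right),\rho_t\left(x\right)\right)_{t\geq0}$ solves a genuine two‑dimensional It\^{o} equation driven by the real processes $t\mapsto B_t^2\left(x\right)$ and $t\mapsto B_t^3\left(x\right)$. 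The pointwise evaluations are meaningful because $U\subset\mathbb{R}^3$ is of class $C^1$, so $H^2\left(U\right)\hookrightarrow C\left(\overline{U}\right)$, and for fixed $x$ each $\left(B_t^i\left(x\right)\right)_{t\geq0}$ is, up to a deterministic factor $\sqrt{k_i\left(x,x\right)}$ coming from the covariance kernel, a standard real Brownian motion, which may be absorbed into $g_i\left(t,x,\cdot\right)$. By Lemma \ref{BoundednessSPDE} and the continuity of the solution in $t$, $\theta_t\left(x\right)\in\left[0,1\right]$ for almost every $x$ and every $t$, so $1+\varepsilon-\theta_t\left(x\right)\geq\varepsilon>0$ and no singularity appears in $f_2$; under Assumptions \ref{HypothesisFilter1_6}--\ref{HypothesisFilter1_9} the two frozen drifts (as functions of the state $v$, resp. $\rho$, with $\theta_{\cdot}\left(x\right)$ and $v_{\cdot}\left(x\right)$ treated as given progressively measurable processes) are Lipschitz with Lipschitz moduli locally bounded in $t$, while $g_2\left(t,x,\cdot\right)$ and $g_3\left(t,x,\cdot\right)$ are bounded and locally Lipschitz. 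Hence all the measurability and Lipschitz hypotheses of Proposition \ref{CompareProposition} are satisfied by these one‑dimensional equations.

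With the reduction in place I would repeat the proof of Lemma \ref{RemainInteriorSDE}. Fix $x$, and let $v^{0},v^{1}$ be the solutions of the same scalar equation as $v_{\cdot}\left(x\right)$ (same drift, same coefficient $g_2$, same driving Brownian motion) started respectively from $0$ and from $v_{\max}$, and let $\rho^{0},\rho^{1}$ be the solutions of the $\rho$‑equation started from $0$ and from $1$. By the same $\varphi$‑argument as in Lemma \ref{BoundednessSDE}, carried out at the point $x$, the comparison processes $v^{0},v^{1}$ remain in $\left[0,v_{\max}\right]$ and $\rho^{0},\rho^{1}$ in $\left[0,1\right]$, $P$‑almost surely for all $t$. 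Since $0<v_0\left(x\right)<v_{\max}$ and $0<\rho_0\left(x\right)<1$ hold $P$‑almost surely for almost every $x$, part $\left(ii\right)$ of Proposition \ref{CompareProposition} (applied with $f=\widetilde{f}$ the common drift and $A=\Omega$) gives the strict orderings $v^{0}_t<v_t\left(x\right)<v^{1}_t$ and $\rho^{0}_t<\rho_t\left(x\right)<\rho^{1}_t$ for every $t\geq0$; combined with the two‑sided bounds on the comparison processes this yields $0\leq v^{0}_t<v_t\left(x\right)<v^{1}_t\leq v_{\max}$ and the analogous chain for $\rho$, that is $\left(\rho_t\left(x\right),v_t\left(x\right)/v_{\max}\right)\in\left]0,1\right[^2$ for all $t\geq0$.

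The positivity of $\beta$ and $\gamma$ on the interval $I$ provides the strict lower bound by the alternative, more robust route used in Lemma \ref{LeaveEdgeSDE}, transposed to the fixed point $x$ (this route does not rely on the initial data being strictly interior). Introducing the stopping times $\tau_2=\inf\left\{t\geq0;v_t\left(x\right)>0\right\}$ and $\tau_3=\inf\left\{t\geq0;\rho_t\left(x\right)>0\right\}$, the continuity of the paths kills the stochastic integrals and the drift terms up to $t\wedge\tau_i$, so that
\[
E\left[\int\nolimits_0^{T^{\ast}\wedge\tau_2}\beta\left(s,x,\theta_s\left(x\right)\right)ds\right]=0
\]
and an analogous identity involving $\gamma\left(s,x,\theta_s\left(x\right),v_s\left(x\right),0\right)$ up to $\tau_3$; the assumed positivity of $\beta\left(t,x,\theta_t\left(x\right)\right)$ and $\gamma\left(t,x,\theta_t\left(x\right),v_t\left(x\right),\rho_t\left(x\right)\right)$ for $t\in I$ then forces $\tau_2=\tau_3=0$, hence $v_t\left(x\right),\rho_t\left(x\right)>0$ for all $t>0$. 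Combining this with the upper bounds from the previous step gives the assertion. The step I expect to be the real obstacle is the reduction itself: checking that $x\mapsto B_{\cdot}^i\left(x\right)$ and the frozen coefficients meet the hypotheses of Proposition \ref{CompareProposition} uniformly on the compact state region $\left[0,v_{\max}\right]\times\left[0,1\right]\times\left[0,1\right]$, and upgrading the ``almost every $x$'' conclusions inherited from the $L^{\infty}\left(U\right)$ setting to the pointwise statement via the regularity in $t$ and $x$ of $\left(\theta_t,v_t,\rho_t\right)$ established earlier in this subsection.
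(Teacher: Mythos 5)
Your proposal is correct and follows essentially the same route as the paper, whose proof simply fixes the space variable and refers back to the comparison argument of Lemma \ref{RemainInteriorSDE}; you merely spell out the pointwise reduction and add the (not strictly necessary, given the interior initial data) stopping-time step from Lemma \ref{LeaveEdgeSDE}.
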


\begin{proof}
Using Lemma \ref{RemainInteriorPDETheta} it suffices to focus on $v$ and $\rho$. As before we fix the space variable and refer to the Lemma
\ref{RemainInteriorSDE} for the rest of the proof.
\end{proof}

\section{State estimation of the lumped with partialobservations}\label{StateEstimation}

\qquad This section is concerned by the filtering problem which consists in
finding the conditional law of a signal with respect to an observation. In the
present case the signal is $\left(  \theta_{t}\right)  _{t\geq0}$ and
observation is the two dimensional process $\left(  \left(  v_{t},\rho
_{t}\right)  \right)  _{t\geq0}$. In the remainder of the section, $\left(
\mathcal{F}_{t}^{23}\right)  _{t\geq0}$ will denote the subfiltration
generated by the two dimensional process $\left(  \left(  v_{t},\rho
_{t}\right)  \right)  _{t\geq0}$ and all $P$-null sets. The aim of the
subsection \ref{NoisedDynamicsModelling2} is to give a general representation
of the noised dynamics of the process $\left(  \left(  v_{t},\rho_{t}\right)
\right)  _{t\geq0}$ which is appropriate for the usual filtering procedure.
Other sections are devoted to the state estimation only for the lumped model
taking into consideration several cases.

\subsection{Another modelling of the noised observations dynamics}\label{NoisedDynamicsModelling2}

\qquad In this subsection, we construct another stochastic (partial) differential
equation models for the noised dynamics of anthracnose. Although the previous
modelling given in Section \ref{NoisedDynamicsModelling} seems natural and
displays good properties, it is less practical for the estimation we aim to
carry out in the sequel of this work. Indeed, the terms multiplying the
Brownian motions can take the null value and that singularity makes difficult
the filtering procedure. On the other hand, the fact that $v$ and $\rho$ are
bounded is not suitable for the use of the Girsanov theorem which is key tool.
To deal with that issue we will use as usually in statistical modelling a
transformation of the interest variable. Since the state variables are bounded
valued, we can use a logistic transformation in order to obtain new variables
valued in the whole space $%
\mathbb{R}
$. Hence, let $\overline{v}$ and $\overline{\rho}$ satisfies the
'deterministic' parts of the equations $\left(  \ref{StochasticPDE2}\right)
-\left(  \ref{StochasticPDE3}\right)  $ with the initial conditions $\left(
\overline{v}_{0},\overline{\rho}_{0}\right)  =\left(  v_{0},\rho_{0}\right)
$; that is
\begin{equation}
d\overline{v}_{t}=f_{t}^{2}\left(  x,\overline{v}_{t}\left(  x\right)
,\theta_{t}\left(  x\right)  \right)  1|_{\left\{  v_{t}>0\right\}  }dt
\label{EquationMeanV}%
\end{equation}%
\begin{equation}
d\overline{\rho}_{t}=f_{t}^{3}\left(  x,\overline{v}_{t}\left(  x\right)
,\overline{\rho}_{t}\left(  x\right)  ,\theta_{t}\left(  x\right)  \right)
1|_{\left\{  \rho_{t}>0\right\}  }dt\text{.} \label{EquationMeanRho}%
\end{equation}
We can prove using arguments similar with the Section
\ref{NoisedDynamicsModelling} that if $\left(  \overline{v}_{0},\overline
{\rho}_{0}\right)  $ is valued in $\left[  0,v_{\max}\right]  \times\left[
0,1\right]  $ then the all process $\left(  \left(  \overline{v}_{t}%
,\overline{\rho}_{t}\right)  \right)  _{t\geq0}$ is also valued in $\left[
0,v_{\max}\right]  \times\left[  0,1\right]  $. Conditionally upon $\theta_{t}$,
the expectation of $\left(  v_{t},\rho_{t}\right)  $ is given by $\left(
\overline{v}_{t},\overline{\rho}_{t}\right)  $. The term $1|_{\left\{
v_{t}>0\right\}  }$ in the equation $\left(  \ref{EquationMeanV}\right)  $
ensures the realistic property that $\overline{v}$ remains null while $v$ is
null. The term $1|_{\left\{  \rho_{t}>0\right\}  }$ plays a similar role in
the equation $\left(  \ref{EquationMeanRho}\right)  $.

If we set
\begin{equation}
\overline{X}_{t}=\left\{
\begin{array}
[c]{l}%
\ln\left(  \frac{\overline{v}_{t}}{v_{\max}-\overline{v}_{t}}\right)  \text{,
if }0<\overline{v}_{t}<v_{\max}\\
0\text{, otherwise}%
\end{array}
\right.  \label{RelationMeanVX}%
\end{equation}
and
\begin{equation}
\overline{Y}_{t}=\left\{
\begin{array}
[c]{l}%
\ln\left(  \frac{\overline{\rho}_{t}}{1-\overline{\rho}_{t}}\right)  \text{,
if }0<\overline{\rho}<1\\
0\text{, otherwise}%
\end{array}
\right.  \label{RelationMeanRhoY}%
\end{equation}
then $\overline{X}$ and $\overline{Y}$ satisfies when $0<\overline{v}<v_{\max
}$ and $0<\overline{\rho}<1$ the following equations :
\begin{equation}
d\overline{X}_{t}\left(  x\right)  =\left(  \eta\left(  t,x\right)  \left(
1+\exp\left(  -\overline{X}_{t}\right)  \right)  -\frac{1}{1+\varepsilon
-\theta_{t}}\right)  \frac{\beta\left(  t,x,\theta_{t}\right)  \left(
1+\exp\left(  \overline{X}_{t}\right)  \right)  }{\eta\left(  t,x\right)
v_{\max}^{2}}dt
\end{equation}
and%
\begin{equation}
d\overline{Y}_{t}\left(  x\right)  =\left(  1+\exp\left(  -\overline{Y}%
_{t}\right)  \right)  \gamma\left(  t,x,\theta_{t},\frac{v_{\max}\exp\left(
\overline{X}_{t}\right)  }{1+\exp\left(  \overline{X}_{t}\right)  },\frac
{\exp\left(  \overline{Y}_{t}\right)  }{1+\exp\left(  \overline{Y}_{t}\right)
}\right)  dt\text{.}%
\end{equation}
A common additive introduction a Brownian noise in the dynamics of $\left(
\overline{X},\overline{Y}\right)  $ leads to a diffusion process $\left(
X,Y\right)  $ satisfying
\begin{equation}
dX_{t}\left(  x\right)  =d\overline{X}_{t}\left(  x\right)  +\delta_{2}\left(
t,x\right)  dB_{t}^{2}\left(  x\right)  \label{SPDE_TransformV}%
\end{equation}
and%
\begin{equation}
dY_{t}\left(  x\right)  =d\overline{Y}_{t}\left(  x\right)  +\delta_{3}\left(
t,x\right)  dB_{t}^{3}\left(  x\right)  \text{.} \label{SPDE_TransformRho}%
\end{equation}
The terms $\varepsilon$, $B_{t}^{2}$ and $B_{t}^{3}$ have the same definitions
given in the Section \ref{NoisedDynamicsModelling}.

Similarly to the relation between $\left(  \overline{v},\overline{\rho
}\right)  $ and $\left(  \overline{X},\overline{Y}\right)  $, we could assume
that when $v_{t}$ and $\rho_{t}$ are not null they satisfy respectively
\begin{equation}
v_{t}=\frac{v_{\max}\exp\left(  X_{t}\right)  }{1+\exp\left(  X_{t}\right)  }
\label{RelationVX}%
\end{equation}
and
\begin{equation}
\rho_{t}=\frac{\exp\left(  Y_{t}\right)  }{1+\exp\left(  Y_{t}\right)
}\text{.} \label{RelationRhoY}%
\end{equation}
Note that it is useless to start the filtering while $v=0$ since there is not
fruit. In the same order of idea, when $v\neq0$ and $\rho=0$ we can restrict
ourselves to the informations brought by the dynamics of $X$ . Indeed, when
$v$ and $\rho$ are respectively null the variation of $X$ and $Y$ are reduced
to a Brownian noise.

\subsection{State estimation with continuous observations}\label{SubsectionContinuousObservationSDE}

\qquad In this subsection, we assume that at each time $t\geq0$ all the
observations $\left(  \left(  v_{s},\rho_{s}\right)  \right)  _{t\geq s\geq0}$
are really known. However, instead of $\left(  \left(  v_{s},\rho_{s}\right)
\right)  _{t\geq s\geq0}$ we will use the equivalent process $\left(  \left(
X_{s},Y_{s}\right)  \right)  _{t\geq s\geq0}$ which satisfies the equations
\begin{equation}
dX_{t}=f\left(  t,\overline{X}_{t},\theta_{t}\right)  dt+\delta_{2}\left(
t\right)  dB_{t}^{2}%
\end{equation}
and
\begin{equation}
dY_{t}=g\left(  t,\overline{X}_{t},\overline{Y}_{t},\theta_{t}\right)
dt+\delta_{3}\left(  t\right)  dB_{t}^{3}\text{.}%
\end{equation}
where
\begin{equation}
f\left(  t,\overline{X}_{t},\theta_{t}\right)  =\left(  \eta\left(  t\right)
\left(  1+\exp\left(  -\overline{X}_{t}\right)  \right)  -\frac{1}%
{1+\varepsilon-\theta_{t}}\right)  \frac{\beta\left(  t,\theta_{t}\right)
\left(  1+\exp\left(  \overline{X}_{t}\right)  \right)  }{\eta\left(
t\right)  v_{\max}} \label{EnonceF}%
\end{equation}
and%
\begin{equation}
g\left(  t,\overline{X}_{t},\overline{Y}_{t},\theta_{t}\right)  =\left(
1+\exp\left(  -\overline{Y}_{t}\right)  \right)  \gamma\left(  t,\theta
_{t},\frac{v_{\max}\exp\left(  \overline{X}_{t}\right)  }{1+\exp\left(
\overline{X}_{t}\right)  },\frac{\exp\left(  \overline{Y}_{t}\right)  }%
{1+\exp\left(  \overline{Y}_{t}\right)  }\right)  \text{.} \label{EnonceG}%
\end{equation}
We make the following necessary assumption until the end of the section:

\begin{claim}
\label{HypothesisFilter1_12}$\forall i\in\left\{  1,2,3\right\}  ,$
$\delta_{i}\in L_{loc}^{\infty}\left(
\mathbb{R}
;%
\mathbb{R}
_{+}\right)  $ and $\forall t\geq0,$ $\inf\left\{  \delta_{i}\left(  s\right)
;s\in\left[  0,t\right]  \right\}  >0$.
\end{claim}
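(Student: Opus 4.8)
First I would observe that the final statement is an \textbf{assumption} (Assumption \ref{HypothesisFilter1_12}), so formally there is nothing to prove: the two requirements — local essential boundedness of each $\delta_i$ on all of $\mathbb{R}$, and a strictly positive infimum on every interval $\left[0,t\right]$ — are imposed as standing hypotheses for the filtering that follows. Accordingly my plan is not to derive them from the earlier material (which is impossible, as I explain below) but to record the minimal primitive modelling hypothesis from which the two requirements \emph{would} follow, should one prefer to deduce rather than postulate them, and to explain why the assumption is the right one for the filtering step.

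The derivation I would give rests on the natural modelling choice that each noise intensity $\delta_i$ is continuous and strictly positive on $\mathbb{R}$. Under that choice the first requirement is immediate, since a continuous function is bounded on every compact interval $\left[-T,T\right]$, giving $\delta_i\in L^{\infty}_{loc}\left(\mathbb{R};\mathbb{R}_+\right)$. For the second, I would fix $t\geq0$ and invoke the extreme value theorem on the compact set $\left[0,t\right]$: the continuous map $s\mapsto\delta_i\left(s\right)$ attains its minimum there, and that minimum is positive because $\delta_i$ is everywhere positive, so $\inf\left\{\delta_i\left(s\right);s\in\left[0,t\right]\right\}>0$. Relative to Assumptions \ref{HypothesisFilter1_1} and \ref{HypothesisFilter1_6}, the genuinely new content is exactly this uniform positivity together with the extension of the domain from $\mathbb{R}_+$ to $\mathbb{R}$; boundedness from above was already in force.

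The step I expect to be the real obstacle — and the reason this is an assumption rather than a lemma — is the positivity itself. An element of $L^{\infty}_{loc}\left(\mathbb{R}_+;\mathbb{R}_+\right)$ may vanish on a set of positive Lebesgue measure, so no positive essential infimum can be extracted from $L^{\infty}_{loc}$ membership alone; some extra structural input, such as continuity with strict positivity or a direct pointwise bound $\delta_i\geq c_i>0$, is unavoidable. Finally I would justify the assumption by its operational role: $\delta_2$ and $\delta_3$ are the diffusion coefficients of the transformed observation equations $\left(\ref{SPDE_TransformV}\right)$ and $\left(\ref{SPDE_TransformRho}\right)$, and a strictly positive lower bound on compacts keeps these coefficients invertible and the observation noise nondegenerate, which is precisely what the change-of-measure (Girsanov) step of the filter requires and what removes the singularity — signalled in the text preceding this assumption — arising from diffusion terms that can otherwise vanish.
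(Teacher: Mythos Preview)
Your reading is correct: in the paper the environment \texttt{claim} is declared with \texttt{\textbackslash newtheorem\{claim\}[theorem]\{Assumption\}}, so this is Assumption~\ref{HypothesisFilter1_12}, introduced by the sentence ``We make the following necessary assumption until the end of the section,'' and the paper offers no proof whatsoever. Your explanation of why it must be postulated rather than derived, and of its role in keeping $\delta_2^{-1},\delta_3^{-1}$ well defined for the Girsanov density $Z_t$, the Novikov condition in Lemma~\ref{ProbabilityChangeSDE}, and the operators $\mathcal{A}_t^2,\mathcal{A}_t^3$ in $\left(\ref{Operator2}\right)$--$\left(\ref{Operator3}\right)$, is exactly in line with how the paper uses it.
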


Let adopt $\forall t\geq0$, the formal definition
\begin{align}
Z_{t}  &  =\exp\left(  -\frac{1}{2}\int\nolimits_{0}^{t}\left(  \frac
{f^{2}\left(  s,\overline{X}_{s},\theta_{s}\right)  }{\delta_{2}^{2}\left(
s\right)  }+\frac{g^{2}\left(  s,\overline{X}_{s},\overline{Y}_{s},\theta
_{s}\right)  }{\delta_{3}^{2}\left(  s\right)  }\right)  ds\right) \\
&  \times\exp\left(  -\int\nolimits_{0}^{t}\left(  \frac{f\left(
s,\overline{X}_{s},\theta_{s}\right)  }{\delta_{2}\left(  s\right)  }%
dB_{s}^{2}+\frac{g\left(  s,\overline{X}_{s},\overline{Y}_{s},\theta
_{s}\right)  }{\delta_{3}\left(  s\right)  }dB_{s}^{3}\right)  \right)
\nonumber
\end{align}
The following lemma holds.

\begin{lemma}
\label{ProbabilityChangeSDE}If $\left(  \theta_{0},\frac{v_{0}}{v_{\max}}%
,\rho_{0}\right)  \in\left[  0,1\right]  ^{3}$ then under the probability $P$,
$Z$ is an $\left(  \mathcal{F}_{t}\right)  $-martingale. Moreover, $\forall
t\geq0$, $Z_{t}$ is the Radon-Nikodym derivative of the restriction of a
probability $\widetilde{P}$ on $\mathcal{F}_{t}^{23}$ with respect to the
restriction of $P$ on $\mathcal{F}_{t}^{23}$:%
\begin{equation}
Z_{t}=\frac{d\widetilde{P}%
\vert
_{\mathcal{F}_{t}^{23}}}{dP%
\vert
_{\mathcal{F}_{t}^{23}}}\text{.}%
\end{equation}

\end{lemma}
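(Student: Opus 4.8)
The plan is to recognize $Z_t$ as the exponential (Doléans-Dade) martingale associated with the locally bounded, adapted integrand $\left(\tfrac{f(s,\overline{X}_s,\theta_s)}{\delta_2(s)},\tfrac{g(s,\overline{X}_s,\overline{Y}_s,\theta_s)}{\delta_3(s)}\right)$ driving the independent pair of Brownian motions $(B^2,B^3)$, and then to apply the Girsanov theorem. First I would observe that $Z$ is precisely the stochastic exponential $Z_t=\mathcal{E}\!\left(-\int_0^\cdot \tfrac{f}{\delta_2}\,dB^2_s-\int_0^\cdot \tfrac{g}{\delta_3}\,dB^3_s\right)_t$, so by Itô's formula $Z$ satisfies $dZ_t=-Z_t\left(\tfrac{f(t,\overline{X}_t,\theta_t)}{\delta_2(t)}\,dB^2_t+\tfrac{g(t,\overline{X}_t,\overline{Y}_t,\theta_t)}{\delta_3(t)}\,dB^3_t\right)$, hence $Z$ is automatically a nonnegative local martingale and a supermartingale. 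The content of the lemma is therefore the upgrade from \emph{local} martingale to \emph{true} martingale, for which it suffices to verify a Novikov-type condition, namely $E\!\left[\exp\!\left(\tfrac12\int_0^T\left(\tfrac{f^2}{\delta_2^2}+\tfrac{g^2}{\delta_3^2}\right)ds\right)\right]<\infty$ for every $T\geq0$.

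The key point making this condition hold is the boundedness established earlier. By Lemma~\ref{BoundednessSDE} (together with Proposition~\ref{ExistenceAndUniquenessSDE}) the hypothesis $\left(\theta_0,\tfrac{v_0}{v_{\max}},\rho_0\right)\in[0,1]^3$ forces $\left(\theta_t,\tfrac{v_t}{v_{\max}},\rho_t\right)\in[0,1]^3$ for all $t$, $P$-almost surely; the same argument applied to the deterministic equations \eqref{EquationMeanV}--\eqref{EquationMeanRho} (as remarked in Subsection~\ref{NoisedDynamicsModelling2}) gives $\overline{v}_t\in[0,v_{\max}]$ and $\overline{\rho}_t\in[0,1]$. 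Consequently, on the event $\{0<\overline{v}_t<v_{\max},\,0<\overline{\rho}_t<1\}$ where the logistic variables $\overline{X}_t,\overline{Y}_t$ are genuinely defined, the quantities $1+\exp(\pm\overline{X}_t)=\tfrac{v_{\max}}{v_{\max}-\overline{v}_t}$ or $\tfrac{v_{\max}}{\overline v_t}$ and similarly for $\overline{Y}_t$, multiplied by the bounded functions $\alpha,\beta,\gamma,\eta^{-1}$ and by $\tfrac{1}{1+\varepsilon-\theta_t}$ with $\theta_t\in[0,1]$, need not themselves be bounded (they blow up as $\overline v_t\downarrow0$ or $\uparrow v_{\max}$). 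This is the main obstacle, and I would handle it by noting that $f(t,\overline X_t,\theta_t)$ and $g(t,\overline X_t,\overline Y_t,\theta_t)$ are exactly $\partial_t\overline v_t/\overline v_t'$-type expressions; more directly, $\tfrac{f(s,\overline X_s,\theta_s)}{\delta_2(s)}\,ds=\tfrac{d\overline X_s}{\delta_2(s)}$ and likewise for $g$, so the Radon–Nikodym derivative is really governed by the \emph{finite-variation} drifts $d\overline X_s,d\overline Y_s$ of processes taking values in a compact interval away from its endpoints on $[0,T]$ (using Proposition~\ref{RemainInteriorPDEvRho}/Lemma~\ref{RemainInteriorSDE}-type interiority when the initial data lie in $]0,1[^3$, or simply accepting the formal definition on the relevant event as the authors do), whence $\int_0^T\left(\tfrac{f^2}{\delta_2^2}+\tfrac{g^2}{\delta_3^2}\right)ds$ is $P$-a.s. finite and, being dominated by a deterministic bound depending only on the sup-norms of the coefficients over $[0,T]$ and on $\inf_{[0,T]}\delta_i>0$ (Assumption~\ref{HypothesisFilter1_12}) and the interior bound on $(\overline v,\overline\rho)$, is in fact \emph{bounded}. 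A bounded integrand trivially satisfies Novikov's criterion.

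Once Novikov's condition is verified on each $[0,T]$, the Girsanov theorem yields that $Z$ is a genuine $(\mathcal{F}_t)$-martingale under $P$ with $E[Z_t]=1$ for all $t$. Defining $\widetilde P$ on $\bigcup_{t\geq0}\mathcal{F}_t$ by $d\widetilde P\!\mid_{\mathcal{F}_t}=Z_t\,dP\!\mid_{\mathcal{F}_t}$ gives a consistent family (martingale property) and hence a probability measure, and under $\widetilde P$ the processes $\widetilde B^2_t:=B^2_t+\int_0^t\tfrac{f}{\delta_2}\,ds$ and $\widetilde B^3_t:=B^3_t+\int_0^t\tfrac{g}{\delta_3}\,ds$ are independent Brownian motions, which is exactly what makes $(X,Y)$ a process with pure-noise (driftless) dynamics under $\widetilde P$ — the setup required for Zakai's equation. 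Finally, since $X_s$ and $Y_s$ are, by \eqref{RelationVX}--\eqref{RelationRhoY}, deterministic invertible functions of $v_s$ and $\rho_s$, the integrands and hence $Z_t$ are $\mathcal{F}_t^{23}$-measurable; restricting the defining identity to $\mathcal{F}_t^{23}$ (and using that $Z_t$ is $\mathcal{F}_t^{23}$-measurable together with the tower property) gives $Z_t=\dfrac{d\widetilde P\!\mid_{\mathcal{F}_t^{23}}}{dP\!\mid_{\mathcal{F}_t^{23}}}$, completing the proof.
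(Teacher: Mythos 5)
Your proposal follows essentially the same route as the paper: verify a Novikov condition using the boundedness/interiority of $\left(\theta,\tfrac{v}{v_{\max}},\rho\right)$ and of $\left(\overline{v},\overline{\rho}\right)$ together with Assumption~\ref{HypothesisFilter1_12}, conclude via Girsanov/the exponential-martingale property that $Z$ is a true $\left(\mathcal{F}_{t}\right)$-martingale with $E\left[Z_{t}\right]=1$, and then pass from the consistent family $d\widetilde{P}\vert_{\mathcal{F}_{t}}=Z_{t}\,dP\vert_{\mathcal{F}_{t}}$ to a single probability $\widetilde{P}$. The only step where the paper is more explicit is this last extension, for which it invokes the Daniell--Kolmogorov--Tulcea theorem rather than your ``hence a probability measure''; otherwise the arguments coincide, including the (shared) reliance on the state remaining in the interior to control the logistic terms in $f$ and $g$.
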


\begin{proof}
The process $\left(  Z_{t}\right)  _{t\geq0}$ is $\mathcal{F}_{t}^{23}%
$-adapted. Using the properties of the solution $\left(  \overline
{v},\overline{\rho}\right)  $ of the equations $\left(  \ref{EquationMeanV}%
\right)  $ and $\left(  \ref{EquationMeanRho}\right)  $, the relations
$\left(  \ref{RelationMeanVX}\right)  $ and $\left(  \ref{RelationRhoY}%
\right)  $, and the properties of functions $f$ and $g$ given by $\left(
\ref{EnonceF}\right)  $ and $\left(  \ref{EnonceG}\right)  $ the following
Novikov condition is satisfied for every $t\geq0$:
\begin{equation}
E\left[  \exp\left(  \frac{1}{2}\int\nolimits_{0}^{t}\left(  \frac
{f^{2}\left(  s,\overline{X}_{s},\theta_{s}\right)  }{\delta_{2}^{2}\left(
s\right)  }+\frac{g^{2}\left(  s,\overline{X}_{s},\overline{Y}_{s},\theta
_{s}\right)  }{\delta_{3}^{2}\left(  s\right)  }\right)  ds\right)  \right]
<\infty\text{.}%
\end{equation}
Therefore using Proposition 2.50 in \cite{pardouxRascanu} (page124) and the
It\^{o} formula to compute $dZ_{t}$, we deduce that $\left(  Z_{t}\right)  $
is an $\left(  \mathcal{F}_{t}^{23}\right)  $-martingale which satisfies
$E\left[  Z_{t}\right]  =1$ and there are probabilities $\widetilde{P}_{t}$
such that $\forall t\geq0$,%
\begin{equation}
Z_{t}^{n}=\frac{d\widetilde{P}_{t}}{dP%
\vert
_{\mathcal{F}_{t}^{23}}}\text{.}%
\end{equation}
Using the Daniell-Kolmogorov-Tulcea Theorem A.12 stated in \cite{bain} (page
302) there is a probability $\widetilde{P}$ on $\mathcal{F}$ such that its
restriction on $\mathcal{F}_{t}^{23}$ is $\widetilde{P}_{t}$.
\end{proof}

The Lemma \ref{ProbabilityChangeSDE} gives a change of probability which will
be very useful in the remainder of the subsection. If we set
\begin{equation}
\widetilde{B}_{t}^{2}=\int\nolimits_{0}^{t}\frac{dX_{s}}{\delta_{2}\left(
s\right)  }\text{ }%
\end{equation}
and
\begin{equation}
\widetilde{B}_{t}^{3}=\int\nolimits_{0}^{t}\frac{dY_{s}}{\delta_{3}\left(
s\right)  }%
\end{equation}
then $\left(  \left(  \widetilde{B}_{t}^{2},\widetilde{B}_{t}^{3}\right)
\right)  _{t\geq0}$ is a Brownian motion under the probability $\widetilde{P}%
$. Let
\begin{align}
\widetilde{Z}_{t}  &  =\exp\left(  -\frac{1}{2}\int\nolimits_{0}^{t}\left(
\frac{f^{2}\left(  s,\overline{X}_{s},\theta_{s}\right)  }{\delta_{2}%
^{2}\left(  s\right)  }+\frac{g^{2}\left(  s,\overline{X}_{s},\overline{Y}%
_{s},\theta_{s}\right)  }{\delta_{3}^{2}\left(  s\right)  }\right)  ds\right)
\\
&  \times\exp\left(  \int\nolimits_{0}^{t}\left(  \frac{f\left(
s,\overline{X}_{s},\theta_{s}\right)  }{\delta_{2}\left(  s\right)
}d\widetilde{B}_{s}^{2}+\frac{g\left(  s,\overline{X}_{s},\overline{Y}%
_{s},\theta_{s}\right)  }{\delta_{3}\left(  s\right)  }d\widetilde{B}_{s}%
^{3}\right)  \right)  \text{.}\nonumber
\end{align}
Under $\widetilde{P}$, $\widetilde{Z}_{t}$ has the same properties of $Z_{t}$
under $P$ and
\begin{equation}
\widetilde{Z}_{t}=\frac{dP%
\vert
_{\mathcal{F}_{t}^{23}}}{d\widetilde{P}%
\vert
_{\mathcal{F}_{t}^{23}}}\text{.}%
\end{equation}
Moreover, $\left(  E\left[  \widetilde{Z}_{t}%
\vert
\mathcal{F}_{\infty}^{23}\right]  \right)  _{t\geq0}$ is an $\left(
\mathcal{F}_{t}^{23}\right)  $-martingale under $\widetilde{P}$ and has a
continuous version (see Proposition 2.3.1 in \cite{burkholderpardoux}).

In the following, we set $\forall t\geq0$, $\pi_{t}\left(  \varphi\right)
\equiv E\left[  \varphi\left(  \theta_{t}\right)
\vert
\mathcal{F}_{t}^{23}\right]  $ where $\varphi$ is a measurable function such
that
\begin{equation}
E\left[  \left\vert \varphi\left(  \theta_{t}\right)  \right\vert \right]
=\widetilde{E}\left[  \widetilde{Z}_{t}\left\vert \varphi\left(  \theta
_{t}\right)  \right\vert \right]  <\infty. \label{bonPhiFilter1}%
\end{equation}
Now we recall the useful Proposition in \cite{bain,burkholderpardoux}.

\begin{proposition}
(Kallianpur-Striebel)

If $\varphi$ satisfies the condition $\left(  \ref{bonPhiFilter1}\right)  $
then $P$ and $\widetilde{P}$-almost surely the following equality holds:%
\begin{equation}
\pi_{t}\left(  \varphi\right)  =\frac{\widetilde{E}\left[  \widetilde{Z}%
_{t}\varphi\left(  \theta_{t}\right)
\vert
\mathcal{F}_{t}^{23}\right]  }{\widetilde{E}\left[  \widetilde{Z}_{t}%
\vert
\mathcal{F}_{t}^{23}\right]  }. \label{LoiFiltrageSDEFilter1}%
\end{equation}

\end{proposition}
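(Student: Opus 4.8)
The plan is to prove the Kallianpur–Striebel formula in the classical way: deduce it from the abstract Bayes' formula for conditional expectations under a change of measure. The two key ingredients are already in place from Lemma \ref{ProbabilityChangeSDE}: the equivalence of $P$ and $\widetilde{P}$ on each $\mathcal{F}_{t}^{23}$ with the two mutually reciprocal Radon–Nikodym derivatives $Z_{t}$ and $\widetilde{Z}_{t}$, and the fact that under $\widetilde{P}$ the processes $\widetilde{B}^{2},\widetilde{B}^{3}$ are Brownian motions independent of the signal $\theta$ (so that $\theta$ keeps its $P$-law under $\widetilde{P}$). The condition $\left(\ref{bonPhiFilter1}\right)$ guarantees that $\widetilde{Z}_{t}\varphi\left(\theta_{t}\right)$ is $\widetilde{P}$-integrable, so every conditional expectation below is well defined.

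First I would recall the abstract Bayes formula: if $\widetilde{P}\ll P$ on a $\sigma$-algebra $\mathcal{G}$ with density $\Lambda$, then for any sub-$\sigma$-algebra $\mathcal{H}\subseteq\mathcal{G}$ and any $P$-integrable $\mathcal{G}$-measurable $\Psi$,
\begin{equation}
E\left[\Psi\mid\mathcal{H}\right]=\frac{\widetilde{E}\left[\Lambda^{-1}\Psi\mid\mathcal{H}\right]}{\widetilde{E}\left[\Lambda^{-1}\mid\mathcal{H}\right]},
\end{equation}
provided $\widetilde{E}\left[\Lambda^{-1}\mid\mathcal{H}\right]>0$ a.s. I would apply this with $\mathcal{G}=\mathcal{F}_{t}$ (enlarged by null sets), $\mathcal{H}=\mathcal{F}_{t}^{23}$, $\Psi=\varphi\left(\theta_{t}\right)$, and $\Lambda=Z_{t}$, so that $\Lambda^{-1}=\widetilde{Z}_{t}$. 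This immediately yields
\begin{equation}
\pi_{t}\left(\varphi\right)=E\left[\varphi\left(\theta_{t}\right)\mid\mathcal{F}_{t}^{23}\right]=\frac{\widetilde{E}\left[\widetilde{Z}_{t}\varphi\left(\theta_{t}\right)\mid\mathcal{F}_{t}^{23}\right]}{\widetilde{E}\left[\widetilde{Z}_{t}\mid\mathcal{F}_{t}^{23}\right]},
\end{equation}
which is exactly $\left(\ref{LoiFiltrageSDEFilter1}\right)$; the equality holds both $P$- and $\widetilde{P}$-almost surely because the two measures are equivalent on $\mathcal{F}_{t}^{23}$ and the identity is between $\mathcal{F}_{t}^{23}$-measurable random variables.

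It then remains to justify the two points that make the Bayes formula applicable here. The first is that $\varphi\left(\theta_{t}\right)$ is genuinely $\mathcal{F}_{t}$-measurable and $P$-integrable: measurability is clear since $\theta$ is adapted and $\varphi$ is Borel, and integrability is precisely the left-hand side of $\left(\ref{bonPhiFilter1}\right)$, which also reads $\widetilde{E}\left[\widetilde{Z}_{t}\left|\varphi\left(\theta_{t}\right)\right|\right]<\infty$ via the change of measure. The second, and the only genuinely delicate point, is the strict positivity of the denominator $\widetilde{E}\left[\widetilde{Z}_{t}\mid\mathcal{F}_{t}^{23}\right]$: since $\widetilde{Z}_{t}=dP|_{\mathcal{F}_{t}^{23}}/d\widetilde{P}|_{\mathcal{F}_{t}^{23}}>0$ $\widetilde{P}$-a.s.\ (it is an exponential), its conditional expectation is strictly positive $\widetilde{P}$-a.s., hence also $P$-a.s.\ by equivalence. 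I expect this positivity of the normalizing constant — together with keeping careful track of which null sets are adjoined to $\mathcal{F}_{t}^{23}$ so that the quotient is well defined on a set of full measure — to be the main technical obstacle; everything else is a direct invocation of Lemma \ref{ProbabilityChangeSDE} and the standard Bayes rule, as found in \cite{bain} and \cite{burkholderpardoux}.
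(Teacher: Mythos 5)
Your proof is correct, and it is worth noting that the paper itself does not prove this proposition at all: it simply recalls it from \cite{bain} and \cite{burkholderpardoux}. The abstract Bayes-formula argument you give is precisely the standard proof found in those references (e.g.\ Proposition 3.16 in \cite{bain}), so you are supplying the proof the paper delegates to the literature, and you correctly identify the two points that need checking (integrability of $\widetilde{Z}_{t}\varphi\left(\theta_{t}\right)$, which is exactly condition $\left(\ref{bonPhiFilter1}\right)$, and strict positivity of the normalizer, which follows since $\widetilde{Z}_{t}$ is a Doleans exponential and the conditional expectation of a strictly positive integrable variable is strictly positive a.s.). One point deserves care: your invocation of the Bayes rule takes $\mathcal{G}=\mathcal{F}_{t}$ and $\Lambda=Z_{t}$, which requires $Z_{t}$ to be the density of $\widetilde{P}$ with respect to $P$ on the \emph{full} $\sigma$-algebra $\mathcal{F}_{t}$, not merely on $\mathcal{F}_{t}^{23}$ as Lemma \ref{ProbabilityChangeSDE} literally records (indeed $Z_{t}$ involves $\theta_{s}$ and is not $\mathcal{F}_{t}^{23}$-measurable, so the Lemma's statement is itself slightly off). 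The Girsanov--Novikov construction in the Lemma's proof does deliver the change of measure on $\mathcal{F}_{t}$, so your argument goes through, but you should cite that stronger fact explicitly rather than the Lemma as stated. With that caveat your derivation is complete; the equality holding both $P$- and $\widetilde{P}$-a.s.\ follows, as you say, from the equivalence of the two measures on $\mathcal{F}_{t}^{23}$.
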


There are instructive comments on a more general but simalar process $\left(
\pi_{t}\right)  _{t\geq0}$ in \cite{bain} especially in the Theorem 2.1 (page
14). For arbitrary $\varphi\in L^{\infty}\left(
\mathbb{R}
;%
\mathbb{R}
\right)  $ and $\forall t\geq0$, let
\begin{equation}
\zeta_{t}=\widetilde{E}\left[  \widetilde{Z}_{t}%
\vert
\mathcal{F}_{t}^{23}\right]
\end{equation}
and
\begin{equation}
\varsigma_{t}\left(  \varphi\right)  =\zeta_{t}\pi_{t}\left(  \varphi\right)
.
\end{equation}
Then the equality $\left(  \ref{LoiFiltrageSDEFilter1}\right)  $ becomes
\begin{equation}
\pi_{t}\left(  \varphi\right)  =\frac{\varsigma_{t}\left(  \varphi\right)
}{\varsigma_{t}\left(  \mathbf{1}\right)  }. \label{LoiFiltrageSDEvNormalise}%
\end{equation}
Let $\left(  \widetilde{\mathcal{F}}_{t}^{23}\right)  $ denote the filtration
generated by the two dimensional process $\left(  \left(  \widetilde{B}%
_{t}^{2},\widetilde{B}_{t}^{3}\right)  \right)  _{t\geq0}$. Naturally, we have
$\widetilde{\mathcal{F}}_{t}^{23}\subseteq\mathcal{F}_{t}^{23}$ and conversely
$\mathcal{F}_{t}^{23}\subseteq\widetilde{\mathcal{F}}_{t}^{23}$ holds since
the following equations has unique solutions:%
\begin{equation}
X_{t}=\int\nolimits_{0}^{t}\delta_{2}\left(  s\right)  d\widetilde{B}_{s}^{2}%
\end{equation}
and
\begin{equation}
Y_{t}=\int\nolimits_{0}^{t}\delta_{3}\left(  s\right)  d\widetilde{B}_{s}^{3}.%
\end{equation}
Since $\widetilde{\mathcal{F}}_{t}^{23}=\mathcal{F}_{t}^{23}$ and $\left(
\left(  \widetilde{B}_{t}^{2},\widetilde{B}_{t}^{3}\right)  \right)  _{t\geq
0}$ is a Brownan motion under $\widetilde{P}$ we have
\begin{equation}
\varsigma_{t}\left(  \varphi\right)  =\widetilde{E}\left[  \widetilde{Z}%
_{t}\varphi\left(  \theta_{t}\right)
\vert
\mathcal{F}_{\infty}^{23}\right].
\end{equation}
The \textit{unnormalized} law $\varsigma$ of $\theta$ is given by the following

\begin{theorem}
\label{TheoremZakaiEqSDE}(The Zakai equation)

If $O\subseteq%
\mathbb{R}
$ is an open set containing $\left[  0,1\right]  $ and $\varphi\in
C^{2}\left(  O\right)  $ then $\forall t\geq0$,
\begin{equation}
\varsigma_{t}\left(  \varphi\right)  =\varsigma_{0}\left(  \varphi\right)
+\int\nolimits_{0}^{t}\varsigma_{s}\left(  \mathcal{A}_{s}^{1}\varphi\right)
ds+\int\nolimits_{0}^{t}\varsigma_{s}\left(  \mathcal{A}_{s}^{2}%
\varphi\right)  dX_{s}+\int\nolimits_{0}^{t}\varsigma_{s}\left(
\mathcal{A}_{s}^{3}\varphi\right)  dY_{s} \label{ZakaiEqSDE}%
\end{equation}
where
\begin{equation}
\mathcal{A}_{t}^{1}\varphi\left(  x\right)  =f_{1}\left(  t,x\right)
\varphi^{\prime}\left(  x\right)  +\frac{1}{2}g_{1}^{2}\left(  t,x\right)
\varphi^{\prime\prime}\left(  x\right)  \text{,} \label{Operator1Filter1}%
\end{equation}%
\begin{equation}
\mathcal{A}_{t}^{2}\varphi\left(  x\right)  =\frac{f\left(  t,\overline{X}%
_{t},\theta_{t}\right)  }{\delta_{2}^{2}\left(  t\right)  }\varphi\left(
x\right),  \label{Operator2}%
\end{equation}
and%

\begin{equation}
\mathcal{A}_{t}^{3}\varphi\left(  x\right)  =\frac{g\left(  t,\overline{X}%
_{t},\overline{Y}_{t},\theta_{t}\right)  }{\delta_{3}^{2}\left(  t\right)
}\varphi\left(  x\right)  \text{.} \label{Operator3}%
\end{equation}

\end{theorem}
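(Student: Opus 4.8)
The plan is to derive the evolution equation for the unnormalized filter $\varsigma_t(\varphi) = \widetilde{E}[\widetilde{Z}_t\,\varphi(\theta_t)\,\vert\,\mathcal{F}_\infty^{23}]$ by combining the It\^o expansion of $\varphi(\theta_t)$ with the expansion of the likelihood $\widetilde{Z}_t$, and then taking the conditional expectation under $\widetilde{P}$. First I would apply the It\^o formula to $\varphi(\theta_t)$ using equation $\left(\ref{StochasticPDE1}\right)$ (with the diffusion term dropped, as in the lumped model) to get
\[
\varphi(\theta_t) = \varphi(\theta_0) + \int_0^t \mathcal{A}_s^1\varphi(\theta_s)\,ds + \int_0^t g_1(s,\theta_s)\varphi'(\theta_s)\,dB_s^1,
\]
with $\mathcal{A}_s^1$ as in $\left(\ref{Operator1Filter1}\right)$. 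Next I would write down $d\widetilde{Z}_t$: since $\widetilde{Z}_t$ is an exponential martingale under $\widetilde{P}$ driven by $\widetilde{B}^2,\widetilde{B}^3$, It\^o's formula gives
\[
d\widetilde{Z}_t = \widetilde{Z}_t\Bigl(\frac{f(t,\overline{X}_t,\theta_t)}{\delta_2(t)}\,d\widetilde{B}_t^2 + \frac{g(t,\overline{X}_t,\overline{Y}_t,\theta_t)}{\delta_3(t)}\,d\widetilde{B}_t^3\Bigr),
\]
and then express this in terms of $dX_t$ and $dY_t$ via $d\widetilde{B}_t^2 = dX_t/\delta_2(t)$, $d\widetilde{B}_t^3 = dY_t/\delta_3(t)$, so that the coefficients become exactly $\mathcal{A}_t^2$ and $\mathcal{A}_t^3$ applied to the constant function — i.e. $d\widetilde{Z}_t = \widetilde{Z}_t\bigl(\frac{f}{\delta_2^2}dX_t + \frac{g}{\delta_3^2}dY_t\bigr)$.

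Then I would apply the It\^o product rule to $\widetilde{Z}_t\,\varphi(\theta_t)$. The crucial point is the cross-variation term: under $\widetilde{P}$ the signal noise $B^1$ and the observation noises $\widetilde{B}^2,\widetilde{B}^3$ are independent (the original $B^1,B^2,B^3$ and the initial data are independent, and the Girsanov change only reshuffles $B^2,B^3$ into $\widetilde{B}^2,\widetilde{B}^3$), so $d\langle \varphi(\theta),\widetilde{Z}\rangle_t = 0$. This yields
\[
\widetilde{Z}_t\varphi(\theta_t) = \varphi(\theta_0) + \int_0^t \widetilde{Z}_s\,\mathcal{A}_s^1\varphi(\theta_s)\,ds + \int_0^t \widetilde{Z}_s\,\varphi'(\theta_s)g_1(s,\theta_s)\,dB_s^1 + \int_0^t \widetilde{Z}_s\varphi(\theta_s)\Bigl(\tfrac{f}{\delta_2^2}dX_s + \tfrac{g}{\delta_3^2}dY_s\Bigr).
\]
Finally I would take $\widetilde{E}[\,\cdot\,\vert\,\mathcal{F}_\infty^{23}]$. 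The stochastic integral against $B^1$ vanishes in conditional expectation by independence of $B^1$ from $\mathcal{F}_\infty^{23}$ and a conditional Fubini/martingale argument; the $dX_s$ and $dY_s$ integrands are $\mathcal{F}_\infty^{23}$-measurable in the conditioning sense, so conditional expectation commutes with those stochastic integrals (a stochastic Fubini theorem for conditioning, as in \cite{bain,burkholderpardoux}); and the $ds$-integral passes inside by ordinary Fubini. Recognizing $\widetilde{E}[\widetilde{Z}_s\varphi(\theta_s)\,\vert\,\mathcal{F}_\infty^{23}] = \varsigma_s(\varphi)$ and $\widetilde{E}[\widetilde{Z}_s\varphi(\theta_s)\tfrac{f}{\delta_2^2}\,\vert\,\mathcal{F}_\infty^{23}] = \varsigma_s(\mathcal{A}_s^2\varphi)$, and similarly for the $dY_s$ term, gives $\left(\ref{ZakaiEqSDE}\right)$.

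The main obstacle is the rigorous justification of interchanging $\widetilde{E}[\,\cdot\,\vert\,\mathcal{F}_\infty^{23}]$ with the stochastic integrals $\int_0^t(\cdot)\,dX_s$ and $\int_0^t(\cdot)\,dY_s$ — this is the conditional (stochastic) Fubini step, which requires suitable integrability of the integrands, guaranteed here by Lemmas \ref{BoundednessSDE}--\ref{LeaveEdgeSDE} (the state variables live in $[0,1]^3$), Assumption \ref{HypothesisFilter1_12} (the $\delta_i$ are bounded below and above on compacts), and the boundedness of $f,g$ which follows from the explicit forms $\left(\ref{EnonceF}\right)$--$\left(\ref{EnonceG}\right)$ together with Assumptions \ref{HypothesisFilter1_1}, \ref{HypothesisFilter1_2}, \ref{HypothesisFilter1_4}. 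One also needs $\varphi\in C^2(O)$ with $O\supseteq[0,1]$ so that $\varphi,\varphi',\varphi''$ are bounded on the range of $\theta$, making $\mathcal{A}_s^1\varphi(\theta_s)$ bounded; this is exactly why the hypothesis is phrased in terms of an open neighborhood of $[0,1]$. I would invoke the relevant conditional-expectation/stochastic-integral commutation result from \cite{bain} or \cite{burkholderpardoux} rather than reprove it.
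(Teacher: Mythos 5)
Your proposal is correct and follows essentially the same route as the paper: It\^{o} expansion of $\varphi(\theta_t)$ and of the exponential martingale $\widetilde{Z}_t$ rewritten in terms of $dX$ and $dY$, the product rule with vanishing cross-variation, and then the interchange of $\widetilde{E}[\,\cdot\,\vert\,\mathcal{F}_\infty^{23}]$ with the $ds$, $dX$ and $dY$ integrals, which the paper handles exactly as you propose by citing the adapted Lemma 2.2.4 of \cite{burkholderpardoux} (its Lemma \ref{LemmaPermutIntegralExpect}).
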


Before giving the proof of the Theorem \ref{TheoremZakaiEqSDE} we first state
an adapted version of the Lemma 2.2.4 proved in \cite{burkholderpardoux} (page 83).

\begin{lemma}
\label{LemmaPermutIntegralExpect} Let $\left(  \xi_{t}\right)  _{t\geq0}$ be
an $\mathcal{F}_{t}$-progressive process such that $\forall t\geq0$,
\[
E\left[  \int\nolimits_{0}^{t}\xi_{s}^{2}ds\right]  <\infty
\]
then
\[
\widetilde{E}\left[  \int\nolimits_{0}^{t}\xi_{s}dB_{s}^{1}%
\vert
\mathcal{F}_{\infty}^{23}\right]  =0,
\]%
\[
\widetilde{E}\left[  \int\nolimits_{0}^{t}\xi_{s}dX_{s}%
\vert
\mathcal{F}_{\infty}^{23}\right]  =\int\nolimits_{0}^{t}\widetilde{E}\left[
\xi_{s}%
\vert
\mathcal{F}_{\infty}^{23}\right]  dX_{s}%
\]
and%
\[
\widetilde{E}\left[  \int\nolimits_{0}^{t}\xi_{s}dY_{s}%
\vert
\mathcal{F}_{\infty}^{23}\right]  =\int\nolimits_{0}^{t}\widetilde{E}\left[
\xi_{s}%
\vert
\mathcal{F}_{\infty}^{23}\right]  dY_{s}\text{.}%
\]

\end{lemma}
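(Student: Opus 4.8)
The plan is to prove each of the three identities by first checking them for a dense, well-behaved class of integrands --- namely simple (elementary) processes --- and then passing to the limit in $L^2$. For the first identity, the key observation is that under $\widetilde{P}$, the process $\left(B_t^1\right)_{t\geq 0}$ remains a Brownian motion (the change of measure from $P$ to $\widetilde{P}$ only involves the drifts of $X$ and $Y$ through Girsanov, and by the independence structure of the Brownian motions $B^1,B^2,B^3$, the law of $B^1$ is unaffected), and moreover $B^1$ is independent of $\mathcal{F}_\infty^{23} = \widetilde{\mathcal{F}}_\infty^{23}$, the $\sigma$-algebra generated by $\left(\widetilde{B}^2,\widetilde{B}^3\right)$. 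For a simple process $\xi_s = \sum_k \zeta_{t_k} \mathbf{1}_{]t_k,t_{k+1}]}(s)$ with $\zeta_{t_k}$ being $\mathcal{F}_{t_k}$-measurable, one has $\int_0^t \xi_s\, dB_s^1 = \sum_k \zeta_{t_k}\left(B_{t_{k+1}}^1 - B_{t_k}^1\right)$, and conditioning on $\mathcal{F}_\infty^{23}$ one uses that $B_{t_{k+1}}^1 - B_{t_k}^1$ has zero conditional mean (by independence of increments and independence from the observation filtration). Then a standard $L^2$-approximation argument, using the It\^o isometry and the conditional Jensen inequality $\widetilde{E}\left[\left(\widetilde{E}\left[M\,\vert\,\mathcal{G}\right]\right)^2\right] \leq \widetilde{E}\left[M^2\right]$, extends this to general $\xi$ satisfying the stated integrability.

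For the second and third identities I would argue analogously but now exploit that $X$ (respectively $Y$) is $\mathcal{F}_\infty^{23}$-measurable: writing $dX_s = \delta_2(s)\, d\widetilde{B}_s^2$, the stochastic integral $\int_0^t \xi_s\, dX_s$ is taken against a $\widetilde{P}$-Brownian motion whose natural filtration is exactly $\mathcal{F}_t^{23}$. For a simple process, $\widetilde{E}\left[\zeta_{t_k}\left(X_{t_{k+1}} - X_{t_k}\right)\,\vert\,\mathcal{F}_\infty^{23}\right] = \widetilde{E}\left[\zeta_{t_k}\,\vert\,\mathcal{F}_\infty^{23}\right]\left(X_{t_{k+1}} - X_{t_k}\right)$ since the increment of $X$ is $\mathcal{F}_\infty^{23}$-measurable, which is precisely the Riemann-sum form of $\int_0^t \widetilde{E}\left[\xi_s\,\vert\,\mathcal{F}_\infty^{23}\right]\, dX_s$. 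Passing to the limit requires checking that the conditional expectation $s \mapsto \widetilde{E}\left[\xi_s\,\vert\,\mathcal{F}_\infty^{23}\right]$ is itself a legitimate integrand against $dX_s$ (it is progressively measurable up to modification, and square-integrable by conditional Jensen) and that the It\^o integral is continuous in $L^2$ with respect to its integrand in a way compatible with conditioning.

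The main obstacle I anticipate is the limiting argument for the second and third identities: one must justify interchanging the $L^2$-limit defining the stochastic integral with the conditional expectation operator $\widetilde{E}\left[\,\cdot\,\vert\,\mathcal{F}_\infty^{23}\right]$, and simultaneously verify that $\widetilde{E}\left[\xi_s^n\,\vert\,\mathcal{F}_\infty^{23}\right] \to \widetilde{E}\left[\xi_s\,\vert\,\mathcal{F}_\infty^{23}\right]$ in the right $L^2\left(d\lambda\otimes d\widetilde{P}\right)$ sense so that the corresponding stochastic integrals converge. This is handled by the contraction property of conditional expectation on $L^2$ together with the It\^o isometry, but it requires care about which measure ($P$ or $\widetilde{P}$) the integrability hypothesis $E\left[\int_0^t \xi_s^2\, ds\right] < \infty$ is stated under --- one should note that the hypothesis is under $P$, and translate it to a $\widetilde{P}$-integrability statement via the Radon-Nikodym density $\widetilde{Z}_t$ and the boundedness properties of $f$, $g$ and $\delta_i$ established earlier (Assumption \ref{HypothesisFilter1_12} and Lemma \ref{ProbabilityChangeSDE}), or alternatively localize by stopping times. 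Once the integrands are controlled uniformly, the three identities follow by the density of simple processes and continuity of all the maps involved; this is exactly the structure of the cited Lemma 2.2.4 in \cite{burkholderpardoux}, adapted to our notation and the explicit form of $X$, $Y$.
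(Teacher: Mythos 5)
The paper does not actually prove this lemma: it is stated as ``an adapted version of Lemma 2.2.4 in \cite{burkholderpardoux}'' and used as a black box, so there is no in-paper proof to compare against. Your sketch is the standard proof of that cited result --- approximation by simple processes, the fact that under $\widetilde{P}$ the triple $\left(B^{1},\widetilde{B}^{2},\widetilde{B}^{3}\right)$ is a Brownian motion so that $B^{1}$ carries no Girsanov drift and is independent of $\mathcal{F}_{\infty}^{23}$, the $\mathcal{F}_{\infty}^{23}$-measurability of the increments of $X$ and $Y$, and an $L^{2}$ passage to the limit --- and it is correct in outline; the one step stated too quickly is the first identity, where $\zeta_{t_{k}}$ is $\mathcal{F}_{t_{k}}$-measurable and correlated with $\mathcal{F}_{\infty}^{23}$, so one cannot directly assert that the increment has zero conditional mean: one must first replace conditioning on $\mathcal{F}_{\infty}^{23}$ by conditioning on $\mathcal{F}_{t_{k+1}}^{23}$ (using that future increments of $\left(\widetilde{B}^{2},\widetilde{B}^{3}\right)$ are $\widetilde{P}$-independent of $\mathcal{F}_{t_{k+1}}$) and then condition out the $B^{1}$-increment. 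You also rightly flag that the square-integrability hypothesis is stated under $P$ while the identities are under $\widetilde{P}$, which must be reconciled via the density or by localization --- a point the paper's statement itself leaves implicit.
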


\begin{proof}
(of Theorem \ref{TheoremZakaiEqSDE})

Let consider the probability $\widetilde{P}$.%

\begin{equation}
d\varphi\left(  \theta_{t}\right)  =\mathcal{A}_{t}^{1}\varphi\left(
\theta_{t}\right)  dt+g_{1}\left(  t,\theta_{t}\right)  \varphi^{\prime
}\left(  \theta_{t}\right)  dB_{t}^{1},%
\end{equation}%
\begin{equation}
d\widetilde{Z}_{t}=\mathcal{A}_{t}^{2}\widetilde{Z}_{t}dX_{t}+\mathcal{A}%
_{t}^{3}\widetilde{Z}_{t}dY_{t},%
\end{equation}
and
\begin{align}
\widetilde{Z}_{t}\varphi\left(  \theta_{t}\right)   &  =\widetilde{Z}%
_{0}\varphi\left(  \theta_{0}\right)  +\int\nolimits_{0}^{t}\widetilde{Z}%
_{s}\mathcal{A}_{s}^{1}\varphi ds+\int\nolimits_{0}^{t}\widetilde{Z}_{s}%
g_{1}\left(  s,\theta_{s}\right)  \varphi^{\prime}\left(  \theta_{s}\right)
dB_{s}^{1}\\
&  +\int\nolimits_{0}^{t}\widetilde{Z}_{s}\mathcal{A}_{s}^{2}\varphi
dX_{s}+\int\nolimits_{0}^{t}\widetilde{Z}_{s}\mathcal{A}_{s}^{3}\varphi
dY_{s}.\nonumber
\end{align}
We now use Lemma \ref{LemmaPermutIntegralExpect} since $\left(  \theta
,X,Y\right)  $ is continuous and bounded, $\varphi\in C^{2}\left(  O\right)  $
and the parameters of the model are locally bounded
with respect to the time and Lipschitz continuous with respect to the other
variables.
\begin{align*}
\varsigma_{t}\left(  \varphi\right)   &  =\widetilde{E}\left[  \widetilde
{Z}_{t}\varphi\left(  \theta_{t}\right)
\vert
\mathcal{F}_{\infty}^{23}\right] \\
&  =\widetilde{E}\left[  \widetilde{Z}_{0}\varphi\left(  \theta_{0}\right)
\vert
\mathcal{F}_{\infty}^{23}\right]  +\widetilde{E}\left[  \int\nolimits_{0}%
^{t}\widetilde{Z}_{s}\mathcal{A}_{s}^{1}\varphi\left(  \theta_{s}\right)  ds%
\vert
\mathcal{F}_{\infty}^{23}\right]  +\widetilde{E}\left[  \int\nolimits_{0}%
^{t}\widetilde{Z}_{s}g^{1}\left(  s,\theta_{s}\right)  \varphi^{\prime}\left(
\theta_{s}\right)  dB_{s}^{1}%
\vert
\mathcal{F}_{\infty}^{23}\right] \\
&  +\widetilde{E}\left[  \int\nolimits_{0}^{t}\widetilde{Z}_{s}\mathcal{A}%
_{s}^{2}\varphi\left(  \theta_{s}\right)  dX_{s}%
\vert
\mathcal{F}_{\infty}^{23}\right]  +\widetilde{E}\left[  \int\nolimits_{0}%
^{t}\widetilde{Z}_{s}\mathcal{A}_{s}^{3}\varphi\left(  \theta_{s}\right)
dY_{s}%
\vert
\mathcal{F}_{\infty}^{23}\right] \\
&  =\varsigma_{0}\left(  \varphi\right)  +\int\nolimits_{0}^{t}\widetilde
{E}\left[  \widetilde{Z}_{s}\mathcal{A}_{s}^{1}\varphi\left(  \theta
_{s}\right)
\vert
\mathcal{F}_{\infty}^{23}\right]  ds+\int\nolimits_{0}^{t}\widetilde{E}\left[
\widetilde{Z}_{s}\mathcal{A}_{s}^{2}\varphi\left(  \theta_{s}\right)
\vert
\mathcal{F}_{\infty}^{23}\right]  dX_{s}\\
&  +\int\nolimits_{0}^{t}\widetilde{E}\left[  \widetilde{Z}_{s}\mathcal{A}%
_{s}^{3}\varphi\left(  \theta_{s}\right)
\vert
\mathcal{F}_{\infty}^{23}\right]  dY_{s}\\
&  =\varsigma_{0}\left(  \varphi\right)  +\int\nolimits_{0}^{t}\varsigma
_{s}\left(  \mathcal{A}_{s}^{1}\varphi\right)  ds+\int\nolimits_{0}%
^{t}\varsigma_{s}\left(  \mathcal{A}_{s}^{2}\varphi\right)  dX_{s}%
+\int\nolimits_{0}^{t}\varsigma_{s}\left(  \mathcal{A}_{s}^{3}\varphi\right)
dY_{s}.%
\end{align*}

\end{proof}

The \textit{normalized} law $\pi$ of $\theta$ is given by the following

\begin{theorem}
\label{TheoremKushnerStratonovichEqSDE}(The Kushner-Stratonovich equation)

If $O\subseteq%
\mathbb{R}
$ is an open set containing $\left[  0,1\right]  $ and $\varphi\in
C^{2}\left(  O\right)  $ then $\forall t\geq0$,%
\begin{align}
\pi_{t}\left(  \varphi\right)   &  =\pi_{0}\left(  \varphi\right)
+\int\nolimits_{0}^{t}\pi_{s}\left(  \mathcal{A}_{s}^{1}\varphi\right)
ds+\int\nolimits_{0}^{t}\pi_{s}\left(  \varphi\right)  \left(  \pi_{s}%
^{2}\left(  \mathcal{A}_{s}^{2}\mathbf{1}\right)  +\pi_{s}^{2}\left(
\mathcal{A}_{s}^{3}\mathbf{1}\right)  \right)
ds\label{KushnerStratonovichEqSDE}\\
&  -\int\nolimits_{0}^{t}\left(  \pi_{s}\left(  \mathcal{A}_{s}^{2}%
\varphi\right)  \pi_{s}\left(  \mathcal{A}_{s}^{2}\mathbf{1}\right)  +\pi
_{s}\left(  \mathcal{A}_{s}^{3}\varphi\right)  \pi_{s}\left(  \mathcal{A}%
_{s}^{3}\mathbf{1}\right)  \right)  ds\nonumber\\
&  +\int\nolimits_{0}^{t}\left(  \pi_{s}\left(  \mathcal{A}_{s}^{2}%
\varphi\right)  -\pi_{s}\left(  \varphi\right)  \pi_{s}\left(  \mathcal{A}%
_{s}^{2}\mathbf{1}\right)  \right)  dX_{s}\nonumber\\
&  +\int\nolimits_{0}^{t}\left(  \pi_{s}\left(  \mathcal{A}_{s}^{3}%
\varphi\right)  -\pi_{s}\left(  \varphi\right)  \pi_{s}\left(  \mathcal{A}%
_{s}^{3}\mathbf{1}\right)  \right)  dY_{s}.\nonumber
\end{align}

\end{theorem}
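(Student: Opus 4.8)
The plan is to derive the Kushner-Stratonovich equation \eqref{KushnerStratonovichEqSDE} from the Zakai equation \eqref{ZakaiEqSDE} by applying the It\^o formula to the ratio $\pi_t(\varphi)=\varsigma_t(\varphi)/\varsigma_t(\mathbf{1})$, exactly as in the classical derivation (cf. \cite{bain}, \cite{burkholderpardoux}). First I would write down the Zakai equation for $\varsigma_t(\varphi)$ and, taking $\varphi=\mathbf{1}$, the Zakai equation for the normalizing factor $\varsigma_t(\mathbf{1})$; note that since $\mathcal{A}_t^1\mathbf{1}=0$ (both $\varphi'$ and $\varphi''$ vanish), this reduces to $d\varsigma_t(\mathbf{1})=\varsigma_t(\mathcal{A}_t^2\mathbf{1})\,dX_t+\varsigma_t(\mathcal{A}_t^3\mathbf{1})\,dY_t$. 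The idea is then to set $N_t=1/\varsigma_t(\mathbf{1})$ and apply It\^o's formula to the product $\pi_t(\varphi)=\varsigma_t(\varphi)N_t$.

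The key computational steps are as follows. First compute $dN_t$ by It\^o's formula applied to $x\mapsto 1/x$, which introduces the first-order term $-N_t^2\,d\varsigma_t(\mathbf{1})$ and the second-order term $N_t^3\,d\langle\varsigma(\mathbf{1})\rangle_t$; the quadratic variation of $\varsigma(\mathbf{1})$ comes from its $dX$ and $dY$ parts and, since $X$ and $Y$ are driven by independent Brownian motions $\widetilde B^2,\widetilde B^3$ with diffusion coefficients $\delta_2,\delta_3$, equals $\bigl(\delta_2^2(s)\varsigma_s^2(\mathcal{A}_s^2\mathbf{1})+\delta_3^2(s)\varsigma_s^2(\mathcal{A}_s^3\mathbf{1})\bigr)\,ds$. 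Then apply the product rule $d\pi_t(\varphi)=N_t\,d\varsigma_t(\varphi)+\varsigma_t(\varphi)\,dN_t+d\langle\varsigma(\varphi),N\rangle_t$, substitute the Zakai dynamics for $d\varsigma_t(\varphi)$, and collect terms. After dividing through by the appropriate powers of $\varsigma_t(\mathbf{1})$ and re-expressing every $\varsigma_s(\cdot)/\varsigma_s(\mathbf{1})$ as $\pi_s(\cdot)$, the $dX_s$ coefficient becomes $\pi_s(\mathcal{A}_s^2\varphi)-\pi_s(\varphi)\pi_s(\mathcal{A}_s^2\mathbf{1})$ and similarly for $dY_s$, while the drift terms combine into the two $ds$-integrals displayed in \eqref{KushnerStratonovichEqSDE}, the $\pi_s^2(\mathcal{A}_s^j\mathbf{1})$ terms arising from the It\^o correction $N_t^3\,d\langle\varsigma(\mathbf{1})\rangle_t$ and the $\pi_s(\mathcal{A}_s^j\varphi)\pi_s(\mathcal{A}_s^j\mathbf{1})$ terms from the cross-variation $d\langle\varsigma(\varphi),N\rangle_t$.

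I would need to carry a bit of care with the $\delta_2^2,\delta_3^2$ factors: because $\mathcal{A}_s^2$ and $\mathcal{A}_s^3$ already carry a $1/\delta_2^2$ and $1/\delta_3^2$ respectively (see \eqref{Operator2}--\eqref{Operator3}) while $dX_s=f\,ds+\delta_2\,d\widetilde B_s^2$ and similarly for $dY_s$, the quadratic variations $d\langle X\rangle_s=\delta_2^2(s)\,ds$ exactly cancel one of the $\delta_2^{-2}$ factors in each It\^o correction, which is why no $\delta_i$ survives in the final equation. I would verify this cancellation explicitly on the cross term $d\langle\varsigma(\varphi),N\rangle_t=-N_t^2\bigl(\delta_2^2(s)\varsigma_s(\mathcal{A}_s^2\varphi)\varsigma_s(\mathcal{A}_s^2\mathbf{1})+\delta_3^2(s)\varsigma_s(\mathcal{A}_s^3\varphi)\varsigma_s(\mathcal{A}_s^3\mathbf{1})\bigr)\,ds$ and on the $N_t^3$ term.

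The main obstacle I expect is not conceptual but bookkeeping: keeping the many $\varsigma$-terms, their normalizations, and the $\delta_i^2$ cancellations straight so that the drift regroups cleanly into the stated form. A secondary point worth a sentence is justifying that $\varsigma_t(\mathbf{1})=\zeta_t>0$ almost surely so that $N_t$ is well defined and the It\^o formula for $1/x$ applies; this follows because $\zeta_t=\widetilde E[\widetilde Z_t\mid\mathcal{F}_t^{23}]$ is the conditional expectation of a strictly positive random variable, hence strictly positive. With these ingredients in place the identity \eqref{KushnerStratonovichEqSDE} follows by matching coefficients, and the integrability conditions needed to apply It\^o and Lemma \ref{LemmaPermutIntegralExpect} are the same ones already used in the proof of Theorem \ref{TheoremZakaiEqSDE}, namely the boundedness of $(\theta,X,Y)$ and $\varphi\in C^2(O)$ together with the local boundedness and Lipschitz continuity of the model parameters.
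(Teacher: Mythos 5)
Your plan is the same as the paper's: take $\varphi=\mathbf{1}$ in the Zakai equation to obtain the SDE for $\zeta_t=\varsigma_t\left(\mathbf{1}\right)$, apply It\^o's formula to $1/\zeta_t$, and then integrate by parts in $\pi_t\left(\varphi\right)=\varsigma_t\left(\varphi\right)\cdot\left(1/\zeta_t\right)$. The paper additionally records the explicit exponential solution $\left(\ref{ZakaiNormalizerEq}\right)$ for $\zeta_t$, which also disposes of your positivity concern for $N_t=1/\zeta_t$.

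The one step that does not close as you announce is the $\delta_i^{2}$ bookkeeping. You correctly note that under $\widetilde{P}$ one has $dX_s=\delta_2\left(s\right)d\widetilde{B}_s^{2}$, hence $d\left\langle X\right\rangle_s=\delta_2^{2}\left(s\right)ds$. Carrying this through, the It\^o correction in $d\left(1/\zeta_t\right)$ is $\zeta_t^{-1}\left(\delta_2^{2}\left(t\right)\pi_t^{2}\left(\mathcal{A}_t^{2}\mathbf{1}\right)+\delta_3^{2}\left(t\right)\pi_t^{2}\left(\mathcal{A}_t^{3}\mathbf{1}\right)\right)dt$, and the cross-variation contributes $-\left(\delta_2^{2}\left(t\right)\pi_t\left(\mathcal{A}_t^{2}\varphi\right)\pi_t\left(\mathcal{A}_t^{2}\mathbf{1}\right)+\delta_3^{2}\left(t\right)\pi_t\left(\mathcal{A}_t^{3}\varphi\right)\pi_t\left(\mathcal{A}_t^{3}\mathbf{1}\right)\right)dt$. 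Since $\mathcal{A}_t^{2}\mathbf{1}=f/\delta_2^{2}$ carries $\delta_2^{-2}$, the product $\delta_2^{2}\pi_t^{2}\left(\mathcal{A}_t^{2}\mathbf{1}\right)=\pi_t^{2}\left(f\right)/\delta_2^{2}$ is not $\pi_t^{2}\left(\mathcal{A}_t^{2}\mathbf{1}\right)=\pi_t^{2}\left(f\right)/\delta_2^{4}$: only one of the two powers of $\delta_2^{-2}$ cancels, so an explicit factor $\delta_2^{2}\left(s\right)$ survives in front of each drift-correction term. Your assertion that ``no $\delta_i$ survives'' and that the drift regroups into exactly the displayed form therefore holds only when $\delta_2=\delta_3=1$. (The $dX_s$ and $dY_s$ coefficients are unaffected and do match the statement.) Be aware that the paper's own proof has the identical defect --- it writes $d\left(1/\zeta_t\right)$ as if $d\left\langle X\right\rangle_t=dt$ --- so the equation $\left(\ref{KushnerStratonovichEqSDE}\right)$ as printed is consistent only under that normalization; the corrected drift terms are $\delta_j^{2}\left(s\right)$ times those shown, for $j\in\left\{2,3\right\}$. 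Apart from this, your derivation is the standard and correct one.
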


\begin{proof}
Using Theorem \ref{TheoremZakaiEqSDE}
\begin{align*}
\zeta_{t}  &  =\varsigma_{t}\left(  \mathbf{1}\right) \\
&  =\varsigma_{0}\left(  \mathbf{1}\right)  +\int\nolimits_{0}^{t}%
\varsigma_{s}\left(  \mathcal{A}_{s}^{2}\mathbf{1}\right)  dX_{s}%
+\int\nolimits_{0}^{t}\varsigma_{s}\left(  \mathcal{A}_{s}^{3}\mathbf{1}%
\right)  dX_{s}\\
&  =1+\int\nolimits_{0}^{t}\zeta_{s}\pi_{s}\left(  \mathcal{A}_{s}%
^{2}\mathbf{1}\right)  dX_{s}+\int\nolimits_{0}^{t}\zeta_{s}\pi_{s}\left(
\mathcal{A}_{s}^{3}\mathbf{1}\right)  dY_{s}.%
\end{align*}
It follows that
\begin{align*}
\zeta_{t}  &  =\exp\left(  -\frac{1}{2}\int\nolimits_{0}^{t}\pi_{s}^{2}\left(
\mathcal{A}_{s}^{2}\mathbf{1}\right)  +\pi_{s}^{2}\left(  \mathcal{A}_{s}%
^{3}\mathbf{1}\right)  ds\right) \\
&  \times\exp\left(  \int\nolimits_{0}^{t}\pi_{s}\left(  \mathcal{A}_{s}%
^{2}\mathbf{1}\right)  dX_{s}+\int\nolimits_{0}^{t}\pi_{s}\left(
\mathcal{A}_{s}^{3}\mathbf{1}\right)  dY_{s}\right)
\end{align*}
and%
\begin{equation}
\zeta_{t}=\exp\left(  -\frac{1}{2}\int\nolimits_{0}^{t}\pi_{s}^{2}\left(
\mathcal{A}_{s}^{2}\mathbf{1}\right)  +\pi_{s}^{2}\left(  \mathcal{A}_{s}%
^{3}\mathbf{1}\right)  ds+\int\nolimits_{0}^{t}\pi_{s}\left(  \mathcal{A}%
_{s}^{2}\mathbf{1}\right)  dX_{s}+\int\nolimits_{0}^{t}\pi_{s}\left(
\mathcal{A}_{s}^{3}\mathbf{1}\right)  dY_{s}\right)  \text{.}
\label{ZakaiNormalizerEq}%
\end{equation}
We can also compute
\begin{align*}
d\left(  \frac{1}{\zeta_{t}}\right)   &  =\frac{1}{\zeta_{t}^{3}}\left(
\varsigma_{t}^{2}\left(  \mathcal{A}_{t}^{2}\mathbf{1}\right)  +\varsigma
_{t}^{2}\left(  \mathcal{A}_{t}^{3}\mathbf{1}\right)  \right)  dt-\frac
{1}{\zeta_{t}^{2}}\left(  \varsigma_{t}\left(  \mathcal{A}_{t}^{2}%
\mathbf{1}\right)  dX_{t}+\varsigma_{t}\left(  \mathcal{A}_{s}^{3}%
\mathbf{1}\right)  dY_{t}\right) \\
&  =\frac{1}{\zeta_{t}}\left(  \pi_{t}^{2}\left(  \mathcal{A}_{t}%
^{2}\mathbf{1}\right)  dt+\pi_{t}^{2}\left(  \mathcal{A}_{t}^{3}%
\mathbf{1}\right)  dt\right)  -\frac{1}{\zeta_{t}}\left(  \pi_{t}\left(
\mathcal{A}_{t}^{2}\mathbf{1}\right)  dX_{t}+\pi_{t}\left(  \mathcal{A}%
_{t}^{3}\mathbf{1}\right)  dY_{t}\right)
\end{align*}
and therefore,%
\begin{align*}
d\pi_{t}\left(  \varphi\right)   &  =d\left(  \frac{\varsigma_{t}\left(
\varphi\right)  }{\zeta_{t}}\right) \\
&  =\pi_{t}\left(  \mathcal{A}_{t}^{1}\varphi\right)  dt+\pi_{t}\left(
\mathcal{A}_{t}^{2}\varphi\right)  dX_{t}+\pi_{t}\left(  \mathcal{A}_{t}%
^{3}\varphi\right)  dY_{t}+\pi_{t}\left(  \varphi\right)  \left(  \pi_{t}%
^{2}\left(  \mathcal{A}_{t}^{2}\mathbf{1}\right)  dt+\pi_{t}^{2}\left(
\mathcal{A}_{t}^{3}\mathbf{1}\right)  dt\right) \\
&  -\pi_{t}\left(  \varphi\right)  \left(  \pi_{t}\left(  \mathcal{A}_{t}%
^{2}\mathbf{1}\right)  dX_{t}+\pi_{t}\left(  \mathcal{A}_{t}^{3}%
\mathbf{1}\right)  dY_{t}\right)  -\pi_{t}\left(  \mathcal{A}_{t}^{2}%
\varphi\right)  \pi_{t}\left(  \mathcal{A}_{t}^{2}\mathbf{1}\right)
dt-\pi_{t}\left(  \mathcal{A}_{t}^{3}\varphi\right)  \pi_{t}\left(
\mathcal{A}_{t}^{3}\mathbf{1}\right)  dt\\
&  =\left(  \pi_{t}\left(  \mathcal{A}_{t}^{1}\varphi\right)  +\pi_{t}\left(
\varphi\right)  \pi_{t}^{2}\left(  \mathcal{A}_{t}^{2}\mathbf{1}\right)
+\pi_{t}\left(  \varphi\right)  \pi_{t}^{2}\left(  \mathcal{A}_{t}%
^{3}\mathbf{1}\right)  \right)  dt-\pi_{t}\left(  \mathcal{A}_{t}^{2}%
\varphi\right)  \pi_{t}\left(  \mathcal{A}_{t}^{2}\mathbf{1}\right)  dt\\
&  +\pi_{t}\left(  \mathcal{A}_{t}^{3}\varphi\right)  \pi_{t}\left(
\mathcal{A}_{t}^{3}\mathbf{1}\right)  dt+\pi_{t}\left(  \mathcal{A}_{t}%
^{2}\varphi\right)  dX_{t}-\pi_{t}\left(  \varphi\right)  \pi_{t}\left(
\mathcal{A}_{t}^{2}\mathbf{1}\right)  dX_{t}+\pi_{t}\left(  \mathcal{A}%
_{t}^{3}\varphi\right)  dY_{t}\\
&  -\pi_{t}\left(  \varphi\right)  \pi_{t}\left(  \mathcal{A}_{t}%
^{3}\mathbf{1}\right)  dY_{t}.%
\end{align*}

\end{proof}

We end this subsection with the following useful

\begin{theorem}
\label{UniciteZakaiSDE}Assume that $P$-almost surely $\left(  \theta_{0}%
,v_{0},\rho_{0}\right)  \in\left]  0,1\right[  ^{3}$. Let $O$ in the Theorem
\ref{TheoremZakaiEqSDE} be bounded and $\varphi\in L^{2}\left(  O\right)  $.
If there is $T>0$ such that $f_{1}\in L^{\infty}\left(  \left]  0,T\right[
\times\Omega;W^{1,\infty}\left(  O;%
\mathbb{R}
\right)  \right)  $, $g_{1}\in L^{\infty}\left(  \left]  0,T\right[
\times\Omega;W^{2,\infty}\left(  O;%
\mathbb{R}
\right)  \right)  $, $f\in L^{\infty}\left(  \left]  0,T\right[  \times
\Omega;L_{loc}^{\infty}\left(
\mathbb{R}
\times O;%
\mathbb{R}
\right)  \right)  $ and $g\in L^{\infty}\left(  \left]  0,T\right[
\times\Omega;L_{loc}^{\infty}\left(
\mathbb{R}
^{2}\times O;%
\mathbb{R}
\right)  \right)  $ then the solution of the equation $\left(
\ref{ZakaiEqSDE}\right)  $ is unique and $\varsigma$ can be identified to an
element of $L^{\infty}\left(  \left]  0,T\right[  \times\Omega;H_{0}%
^{1}\left(  O\right)  \right)  $.
\end{theorem}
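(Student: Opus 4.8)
Theorem \ref{TheoremZakaiEqSDE} already produces a solution of \eqref{ZakaiEqSDE}, namely $\varsigma_t(\varphi)=\widetilde{E}[\widetilde{Z}_t\varphi(\theta_t)\mid\mathcal F_{\infty}^{23}]$, so only uniqueness and the $H_0^1(O)$--regularity remain. The plan is to turn \eqref{ZakaiEqSDE} into an honest SPDE for a density and apply Theorem \ref{TheoremGeneralUniciteStochasticEvolutionPBFilter1}. First I would work under $\widetilde{P}$, under which $(\widetilde{B}^2,\widetilde{B}^3)$ is a two--dimensional Brownian motion generating $(\mathcal F_t^{23})$ and $dX_t=\delta_2(t)\,d\widetilde{B}_t^2$, $dY_t=\delta_3(t)\,d\widetilde{B}_t^3$; by Hypothesis \ref{HypothesisFilter1_12} and the boundedness of $\theta,\overline X,\overline Y$, the operators $\delta_2\mathcal A^2_t=f/\delta_2$ and $\delta_3\mathcal A^3_t=g/\delta_3$ are bounded non--anticipative multiplication operators. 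Testing \eqref{ZakaiEqSDE} against $\varphi\in C_c^{\infty}(O)$ and integrating by parts the first-- and second--order parts of $\mathcal A^1_t$ identifies $\varsigma_t$ with a density $p_t$ solving, in the weak sense on $O$,
\[
dp_t=(\mathcal A^1_t)^{\ast}p_t\,dt+\frac{f}{\delta_2}\,p_t\,d\widetilde{B}_t^2+\frac{g}{\delta_3}\,p_t\,d\widetilde{B}_t^3,\qquad (\mathcal A^1_t)^{\ast}q=\frac12\partial_{xx}\!\left(g_1^2q\right)-\partial_x\!\left(f_1q\right),
\]
which I would read in the Gelfand triple $H_0^1(O)\subset L^2(O)\subset H^{-1}(O)$ with $\mathcal Z=\mathbb R^2$, $p=2$, $F(t,\cdot)=-(\mathcal A^1_t)^{\ast}$, $G(t,q)=(\frac{f}{\delta_2}q,\frac{g}{\delta_3}q)$ and $f\equiv 0$.

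Conditions (ii)--(vi) of Theorem \ref{TheoremGeneralUniciteStochasticEvolutionPBFilter1} are then routine: $F$ and $G$ are affine resp.\ linear, $G(t,0)=0$ since $\kappa_1(0)=0$, and the boundedness and local Lipschitz regularity of $\delta_i,\alpha,\beta,\gamma$, the $W^{1,\infty}$/$W^{2,\infty}$ hypotheses on $f_1,g_1$ and the $L^{\infty}_{loc}$ hypotheses on $f,g$ supply growth, monotonicity, hemicontinuity and measurability; in particular, for $q\in H_0^1(O)$ integration by parts gives $\langle-(\mathcal A^1_t)^{\ast}q,q\rangle_{L^2}=\frac12\int_O g_1^2(q')^2\,dx+\int_O(\frac12 f_1'-\frac14(g_1^2)'')q^2\,dx\ge-C\|q\|_{L^2(O)}^2$, which is exactly the monotonicity inequality once $\lambda\|q\|_{L^2}^2$ is added. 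The step I expect to be the real obstacle is the coercivity condition (i): it would demand $\frac12\int_O g_1^2(q')^2\,dx\ge c\|q\|_{H_0^1(O)}^2$, and this \emph{fails}, because $g_1^2=\delta_1^2\kappa_1^2$ vanishes on $O\setminus\,]0,1[$ and at the endpoints $0$ and $1$, so $(\mathcal A^1_t)^{\ast}$ is only degenerate parabolic.

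I would bypass this as follows. For uniqueness the monotonicity alone is enough: if $p^1,p^2$ are two $H_0^1(O)$--valued solutions with the same datum, It\^o's formula for $t\mapsto\|p^1_t-p^2_t\|_{L^2(O)}^2$, together with the monotonicity estimate above and the boundedness of $G$, yields $\frac{d}{dt}\widetilde{E}\|p^1_t-p^2_t\|_{L^2(O)}^2\le C\,\widetilde{E}\|p^1_t-p^2_t\|_{L^2(O)}^2$, hence $p^1\equiv p^2$ by Gronwall's lemma; alternatively one may argue by duality, solving the backward adjoint Zakai equation with terminal datum $\varphi\in C^2(\overline O)\cap H_0^1(O)$ and noting that $s\mapsto(\varsigma^1_s-\varsigma^2_s)(\psi_s)$ is constant, hence $0$. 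The $H_0^1(O)$--membership then follows from two facts: by Lemma \ref{RemainInteriorSDE} and Lemma \ref{LeaveEdgeSDE} the signal $\theta_t$ stays in $]0,1[$, so $\varsigma_t$ is carried by the compact set $[0,1]$ sitting inside the open set $O$, whence any $H^1(O)$ density is automatically in $H_0^1(O)$; and the requisite $H^1(O)$--bound, uniform on $]0,T[\times\Omega$, comes from the smoothing effect of the (degenerate) parabolic part $(\mathcal A^1_t)^{\ast}$ --- e.g.\ by differentiating the density equation in $x$ and running the same energy estimate on $\partial_x p_t$, using precisely the $W^{1,\infty}$/$W^{2,\infty}$ regularity of $f_1,g_1$ assumed in the statement (and the regularity of the initial conditional law). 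This last regularity step is the delicate part, since the degeneracy of $g_1^2$ prevents one from simply invoking the a priori estimate of Theorem \ref{TheoremGeneralUniciteStochasticEvolutionPBFilter1}.
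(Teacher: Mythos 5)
You set the problem up exactly as the paper does: pass to the adjoint (Fokker--Planck) form of $\left(\ref{ZakaiEqSDE}\right)$ for the density of $\varsigma$, work in the Gelfand triple $H_0^1\left(O\right)\subset L^2\left(O\right)\subset H^{-1}\left(O\right)$ under $\widetilde{P}$, and invoke a general variational SPDE theorem (the paper uses the linear version, Theorem \ref{TheoremGeneralUniciteEDPS}, rather than Theorem \ref{TheoremGeneralUniciteStochasticEvolutionPBFilter1}, but that is immaterial); your integration by parts for $\left\langle \mathcal{A}_t^{1\ast}\psi,\psi\right\rangle$ and your bounds on the multiplication operators $\mathcal{A}^2,\mathcal{A}^3$ coincide with the paper's computation. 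You diverge at the coercivity condition $\left(\ref{ConditionCoercivite}\right)$: you declare it fails because $g_1^2=\delta_1^2\kappa_1^2$ degenerates, and you replace it by weak monotonicity plus Gronwall. The paper instead claims coercivity holds, asserting that $\psi\mapsto\left\Vert\kappa_1\psi^{\prime}\right\Vert_{L^2\left(O\right)}$ is a norm on $H_0^1\left(O\right)$ equivalent to the usual one because $\left\Vert\kappa_1\psi^{\prime}\right\Vert>0$ whenever $\left\Vert\psi^{\prime}\right\Vert>0$, and then reads off uniqueness and the $H_0^1$ bound simultaneously from Theorem \ref{TheoremGeneralUniciteEDPS}. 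Your scepticism is well placed: $\kappa_1$ vanishes identically on $O\setminus\left]0,1\right[$, which has positive measure since $O$ is an open set strictly containing $\left[0,1\right]$, so the asserted positivity (hence the constant $C_2^{T,O}>0$) is not justified as stated; pointwise positivity on a subinterval does not give a uniform bound $\left\Vert\kappa_1\psi^{\prime}\right\Vert\geq c\left\Vert\psi^{\prime}\right\Vert$. Your uniqueness route --- the energy identity for $\left\Vert p_t^1-p_t^2\right\Vert_{L^2}^2$, absorbing the It\^{o} correction $\left\Vert G\left(p^1\right)-G\left(p^2\right)\right\Vert^2\leq C\left\Vert p^1-p^2\right\Vert_{L^2}^2$ (legitimate by Assumption \ref{HypothesisFilter1_12} and the boundedness of $\theta,\overline{X},\overline{Y}$), then Gronwall --- is therefore the more robust one, and it is sound within the class $L^2\left(\Omega\times\left]0,T\right[;H_0^1\left(O\right)\right)$.

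The genuine gap is the regularity half of the statement. Without coercivity you lose the a priori estimate that Theorem \ref{TheoremGeneralUniciteEDPS} supplies, and your substitute --- differentiating the density equation in $x$ and running the energy estimate on $\partial_x p_t$ --- does not close: the same degeneracy of $g_1^2$ that destroys coercivity at order zero destroys it at order one, so no $H^1\left(O\right)$ bound uniform in $\left(t,\omega\right)$ comes out of that computation, as you acknowledge yourself. Consequently your argument proves uniqueness only among solutions already known to be $H_0^1\left(O\right)$-valued, and it does not prove that $\varsigma$ can be identified with an element of $L^{\infty}\left(\left]0,T\right[\times\Omega;H_0^1\left(O\right)\right)$. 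To finish along your lines you would need an independent regularity argument for the unnormalized conditional density (for instance via the Kallianpur--Striebel representation together with regularity of the transition density of $\theta$, or by a nondegenerate approximation $g_1^2+\epsilon$ followed by passage to the limit); otherwise you must fall back on the paper's coercivity claim, which as written needs repair.
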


Before giving the a proof for the Theorem \ref{UniciteZakaiSDE} we first state
a result based on the Theorem 3.2.4 and Remark 3.2.6 in
\cite{burkholderpardoux} (pages 105 and 106).

\begin{theorem}
\label{TheoremGeneralUniciteEDPS}Let $\mathcal{X}$ and $\mathcal{Y}$ denote
two Hilbert spaces such that $\mathcal{X}$ is continuously embedded and dense
in $\mathcal{Y}$ which is identified with its dual space. Let $\mathcal{X}%
^{\prime}$ denote dual space of $\mathcal{X}$, $F\in L^{\infty}\left(  \left]
0,T\right[  \times\Omega;\mathcal{L}\left(  \mathcal{X},\mathcal{X}^{\prime
}\right)  \right)  $, $f\in L^{2}\left(  \left]  0,T\right[  \times
\Omega;\mathcal{X}^{\prime}\right)  $, $g\in L^{2}\left(  \left]  0,T\right[
\times\Omega;\mathcal{B}\left(  \mathcal{Y}^{N},\mathcal{Y}\right)  \right)
$, $G\in L^{\infty}\left(  \left]  0,T\right[  \times\Omega;\mathcal{L}\left(
\mathcal{X};\mathcal{L}\left(  \mathcal{Y}^{N},\mathcal{Y}\right)  \right)
\right)  $. We identify $\mathcal{L}\left(  \mathcal{Y}^{N},\mathcal{Y}%
\right)  $ with $\mathcal{Y}^{N}$ and assume that there are constants $c,C>0$
such that $\forall\psi\in\mathcal{X}$, $\forall t\in\left[  0,T\right]  $,
\begin{equation}
2\left\langle F\left(  t\right)  \psi,\psi\right\rangle +C\left\Vert
\psi\right\Vert _{\mathcal{Y}}^{2}\geq c\left\Vert \psi\right\Vert
_{\mathcal{X}}^{2}+\sum\nolimits_{i=1}^{N}\left\Vert G^{i}\left(
t,\psi\right)  \right\Vert _{\mathcal{Y}}^{2}\text{.}
\label{ConditionCoercivite}%
\end{equation}
Let $B$ denote the standard $\mathcal{Y}^{N}$-valued Brownian motion. Then
there is a unique $\Phi\in L^{2}\left(  \left]  0,T\right[  \times
\Omega;\mathcal{X}\right)  $ satisfying $\forall t\in\left[  0,T\right]  $,
\begin{equation}
\Phi\left(  t\right)  =\phi-\int\nolimits_{0}^{t}\left(  F\left(  s\right)
\Phi\left(  s\right)  +f\left(  s\right)  \right)  ds+\int\nolimits_{0}%
^{t}\left(  G\left(  s,\Phi\left(  s\right)  \right)  +g\left(  s\right)
\right)  dB_{s}%
\end{equation}

\end{theorem}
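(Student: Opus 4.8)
This is a linear stochastic evolution equation, with affine inhomogeneities $f$ and $g$, posed in the variational framework $\mathcal{X}\hookrightarrow\mathcal{Y}\cong\mathcal{Y}'\hookrightarrow\mathcal{X}'$, and the quickest route is to obtain it from the cited Theorem~3.2.4 and Remark~3.2.6 of \cite{burkholderpardoux}: one checks that, for the present operators, $F(t)\in\mathcal{L}(\mathcal{X},\mathcal{X}')$ and $G(t)\in\mathcal{L}(\mathcal{X};\mathcal{L}(\mathcal{Y}^{N},\mathcal{Y}))$ are linear and bounded uniformly in $(t,\omega)$, that $G(t,0)=0$, that $f$ and $g$ lie in $L^{2}$ of the respective spaces, that the required measurability holds by hypothesis, and that the single coercivity inequality \eqref{ConditionCoercivite} is exactly the mixed coercivity/monotonicity condition used there --- because the operators are linear, the monotonicity inequality between two arguments is \eqref{ConditionCoercivite} applied to their difference. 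A self-contained argument instead proceeds by the Faedo--Galerkin scheme sketched in the next two paragraphs.

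First one fixes a complete system $(w_{k})_{k\geq1}\subset\mathcal{X}$ orthonormal in $\mathcal{Y}$, sets $\mathcal{X}_{n}=\mathrm{span}(w_{1},\dots,w_{n})$ with $\mathcal{Y}$-orthogonal projector $P_{n}$, and projects the equation onto $\mathcal{X}_{n}$. Since $F(t)$ and $G(t)$ are bounded linear operators, the resulting finite-dimensional system for $\Phi^{n}(t)=\sum_{k\leq n}c^{n}_{k}(t)w_{k}$ has coefficients that are affine, hence Lipschitz with linear growth, in the $c^{n}_{k}$, and so admits a unique strong solution on $[0,T]$ by the finite-dimensional theory already invoked in the paper (Theorem~5.2.1 of \cite{oksendal}). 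Applying the finite-dimensional It\^{o} formula to $\left\Vert\Phi^{n}(t)\right\Vert_{\mathcal{Y}}^{2}$, using \eqref{ConditionCoercivite} to dominate the drift, absorbing the cross terms involving $f$ and $g$ by Young's inequality, and controlling the martingale part by the Burkholder--Davis--Gundy inequality, a Gronwall argument produces bounds uniform in $n$ for $E\sup_{t\in[0,T]}\left\Vert\Phi^{n}(t)\right\Vert_{\mathcal{Y}}^{2}$ and for $E\int_{0}^{T}\left\Vert\Phi^{n}(s)\right\Vert_{\mathcal{X}}^{2}\,ds$; in particular $(\Phi^{n})$ is bounded in $L^{2}(]0,T[\times\Omega;\mathcal{X})\cap L^{2}(\Omega;C([0,T];\mathcal{Y}))$.

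Since the equation is linear, no strong compactness is needed: one extracts a subsequence converging weakly in $L^{2}(]0,T[\times\Omega;\mathcal{X})$ to some $\Phi$, transports this convergence through the bounded linear maps $F(\cdot)$ and $G(\cdot)$ (into $L^{2}(\cdot;\mathcal{X}')$ and $L^{2}(\cdot;\mathcal{Y}^{N})$) and through the stochastic integral (weakly continuous on $L^{2}$-bounded families of integrands), and identifies $\Phi$ as an adapted solution in the announced class. Uniqueness is obtained by setting $\Psi=\Phi^{1}-\Phi^{2}$ for two solutions, applying the It\^{o} formula to $\left\Vert\Psi(t)\right\Vert_{\mathcal{Y}}^{2}$, using \eqref{ConditionCoercivite} once more to reach $E\left\Vert\Psi(t)\right\Vert_{\mathcal{Y}}^{2}\leq C\int_{0}^{t}E\left\Vert\Psi(s)\right\Vert_{\mathcal{Y}}^{2}\,ds$, and concluding by Gronwall. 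The step I expect to be the main obstacle is the rigorous justification of this energy identity for the limit $\Phi$: since $\Phi$ is only $\mathcal{X}$-valued for almost every $t$ while its drift is $\mathcal{X}'$-valued, one cannot differentiate $\left\Vert\Phi(t)\right\Vert_{\mathcal{Y}}^{2}$ directly, and one must invoke the variational It\^{o} formula of \cite{pardoux} (which simultaneously yields the continuous $\mathcal{Y}$-valued modification of $\Phi$), after verifying that its hypotheses are met; the time and $\omega$ dependence of $F,G,f,g$ enters only through the measurability assumptions of the statement.
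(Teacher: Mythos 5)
Your proposal is correct, and it matches the paper's treatment: the paper gives no proof of this theorem at all, stating it only as an adaptation of Theorem 3.2.4 and Remark 3.2.6 of \cite{burkholderpardoux}, which is precisely the reduction your first paragraph carries out (including the key observation that, by linearity, the monotonicity condition between two arguments is just \eqref{ConditionCoercivite} applied to their difference). Your Faedo--Galerkin reconstruction is the standard proof underlying that cited result --- the a priori bound closes because the cross term $2(G(\Phi^{n}),g)$ can be absorbed into $c\left\Vert \Phi^{n}\right\Vert _{\mathcal{X}}^{2}$ via Young's inequality and the uniform bound on $G$ in $\mathcal{L}(\mathcal{X};\mathcal{Y}^{N})$ --- and you correctly identify the variational It\^{o} formula for $\left\Vert \cdot\right\Vert _{\mathcal{Y}}^{2}$ in the Gelfand triple as the one step requiring the machinery of \cite{pardoux}.
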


Looking at the proof of the Theorem \ref{TheoremGeneralUniciteEDPS} given in
\cite{burkholderpardoux} it is clear that it still remains true if $f$ and $G$
have a Lipschitz continuous dependance on $\Phi$.

\begin{proof}
(of the Theorem \ref{UniciteZakaiSDE})

We use the Theorem \ref{TheoremGeneralUniciteEDPS} setting $\mathcal{X}=H_{0}%
^{1}\left(  O\right)  $ and $\mathcal{Y}=L^{2}\left(  O\right)  $. Note that
the inclusions $\mathcal{X}\subseteq\mathcal{Y}\subseteq\mathcal{X}^{\prime}$
hold with continuous dense injections. Let $T>0$ be an abritrary fixed time
and $\mathcal{A}_{s}^{1\ast}$,$\mathcal{A}_{s}^{2\ast}$,$\mathcal{A}%
_{s}^{3\ast}$ be the respective adjoint operators of $\mathcal{A}_{s}^{1}%
$,$\mathcal{A}_{s}^{2}$,$\mathcal{A}_{s}^{3}:\mathcal{X}\rightarrow
\mathcal{Y}$. Omitting $\varphi$ in $\left(  \ref{ZakaiEqSDE}\right)  $ we
have the following SPDE
\begin{equation}
\varsigma_{t}=\varsigma_{0}+\int\nolimits_{0}^{t}\mathcal{A}_{s}^{1\ast
}\varsigma_{s}ds+\int\nolimits_{0}^{t}\mathcal{A}_{s}^{2\ast}\varsigma
_{s}dX_{s}+\int\nolimits_{0}^{t}\mathcal{A}_{s}^{3\ast}\varsigma_{s}%
dY_{s}\text{.} \label{ZakaiEqSDEMeasure}%
\end{equation}
Let $\psi\in\mathcal{X}$ and $t\in\left]  0,T\right]  $.
\begin{align*}
-2\left\langle \mathcal{A}_{t}^{1\ast}\psi,\psi\right\rangle  &
=-2\left\langle \psi,\mathcal{A}_{t}^{1}\psi\right\rangle \\
&  =-2\int\nolimits_{U}\left(  f_{1}\left(  t,x\right)  \psi^{\prime}\left(
x\right)  +g_{1}^{2}\left(  t,x\right)  \psi^{\prime\prime}\left(  x\right)
\right)  \psi\left(  x\right)  dx\\
&  =\int\nolimits_{U}f_{1}^{\prime}\left(  s,x\right)  \psi^{2}\left(
x\right)  dx+\int\nolimits_{U}\left(  g_{1}^{2}\left(  t,x\right)  \psi\left(
x\right)  \right)  ^{\prime}\psi^{\prime}\left(  x\right)  dx\\
&  =\int\nolimits_{U}f_{1}^{\prime}\left(  s,x\right)  \psi^{2}\left(
x\right)  dx+\int\nolimits_{U}g_{1}^{2}\left(  t,x\right)  \left(
\psi^{\prime}\left(  x\right)  \right)  ^{2}dx+\int\nolimits_{U}\left(
g_{1}^{2}\left(  t,x\right)  \right)  ^{\prime}\psi\left(  x\right)
\psi^{\prime}\left(  x\right)  dx\\
&  =\int\nolimits_{U}\left(  f_{1}^{\prime}\left(  s,x\right)  -\frac{1}%
{2}\left(  g_{1}^{2}\left(  t,x\right)  \right)  ^{\prime\prime}\right)
\psi^{2}\left(  x\right)  dx+\int\nolimits_{U}g_{1}^{2}\left(  t,x\right)
\left(  \psi^{\prime}\left(  x\right)  \right)  ^{2}dx\\
&  \geq-C_{1}^{T,O}\left\Vert \psi\right\Vert _{\mathcal{Y}}^{2}+C_{2}%
^{T,O}\left\Vert \psi^{\prime}\right\Vert _{\mathcal{Y}}^{2}%
\end{align*}
with
\[
C_{1}^{T,O}=\underset{x\in U,s\in\left]  0,T\right]  }{ess\sup}\left\vert
f_{1}^{\prime}\left(  s,x\right)  -\frac{1}{2}\left(  g_{1}^{2}\left(
t,x\right)  \right)  ^{\prime\prime}\right\vert \geq0
\]
and
\[
C_{2}^{T,O}=\underset{t\in\left]  0,T\right]  }{\inf\left\{  \delta_{1}\left(
t\right)  \right\}  }C_{3}^{T,O}>0
\]
The existence of $C_{3}^{T,O}$ is guaranteed because $\psi\in\mathcal{X\mapsto
}\left\Vert \kappa_{1}\left(  x\right)  \psi^{\prime}\right\Vert
_{\mathcal{Y}}$ is a norm equivalent to the usual norm\footnote{See
Proposition 8.13 in \cite{brezis} (page 218) on Poincar\'{e}'s inequality and
the open mapping Theorem 2.6 (page 35 ).} since the Assumption
\ref{HypothesisFilter1_8} is satisfied and we necessarily have $\left\Vert
\kappa_{1}\left(  x\right)  \psi^{\prime}\right\Vert _{\mathcal{Y}}^{2}>0$
when $\left\Vert \psi^{\prime}\right\Vert _{\mathcal{Y}}^{2}>0$.
$\mathcal{A}_{t}^{2}$ and $\mathcal{A}_{t}^{3}$ are self-adjoint and
\[
\left\Vert \mathcal{A}_{t}^{2}\psi\left(  x\right)  \right\Vert _{\mathcal{Y}%
}^{2}+\left\Vert \mathcal{A}_{t}^{3}\psi\left(  x\right)  \right\Vert
_{\mathcal{Y}}^{2}\leq C_{4}^{T,O}\int\nolimits_{U}\psi^{2}\left(  x\right)
dx
\]
with
\[
C_{4}^{T,O}=\underset{x\in U,t\in\left]  0,T\right]  }{ess\sup}\left(
\frac{f^{2}\left(  t,\overline{X}_{t},\theta_{t}\right)  }{\delta_{2}%
^{4}\left(  t\right)  }+\frac{g^{2}\left(  t,\overline{X}_{t},\overline{Y}%
_{t},\theta_{t}\right)  }{\delta_{3}^{4}\left(  t\right)  }\right)  \geq0
\]
The existence of $C_{4}^{T,O}$ is due to the boundedness of $f_{i}$, $\forall
i\in\left\{  2,3\right\}  $ and Lemma \ref{RemainInteriorSDE}. Hence, the
condition $\left(  \ref{ConditionCoercivite}\right)  $ is satisfied with
$C>C_{1}^{T,O}+C_{4}^{T,O}$ and $c\leq\min\left\{  C_{1}^{T,O}+C_{4}%
^{T,O},C_{2}^{T,O}\right\}  $. Thefore $\varsigma_{t}\in\mathcal{X}$ is the
unique solution of $\left(  \ref{ZakaiEqSDEMeasure}\right)  $.
\end{proof}

The Theorem \ref{UniciteZakaiSDE} gives conditions under which $\left(
\varsigma_{t}\right)  _{t\geq0}$ and therefore $\left(  \pi_{t}\right)
_{t\geq0}$ are uniquely defined by their respective equations. Note that the
domain of $\left(  \varsigma_{t}\right)  _{t\geq0}$ and $\left(  \pi
_{t}\right)  _{t\geq0}$ has been extended from $C^{2}\left(  O\right)  $ to
$L^{2}\left(  O\right)  $.

\section{State estimation with discrete time observations}\label{OtherIssues}

In this section we, consider a realistic case where observations are discretely
made with respect to an increasing sequence of nonnegative stopping times
$\left(  \tau_{n}\right)  _{n\in%
\mathbb{N}
}$ such that $\underset{n\rightarrow\infty}{\lim}\tau_{n}=\infty$. That
situation occurs frequently when following a phenomenon since it is difficult
to collect data continuously. The problem here is to find $E\left[
\varphi\left(  \theta_{t}\right)
\vert
\mathcal{F}_{n}^{23}\right]  $, $\forall t\in\left[  \tau_{n},\tau
_{n+1}\right[  $, $\forall\varphi\in L_{loc}^{\infty}\left(
\mathbb{R}
\right)  $. We may mention here that $\left(  \mathcal{F}_{n}^{23}\right)  $
is the discrete filtration generated by the process $\left(  \left(
v_{\tau_{n}},\rho_{\tau_{n}}\right)  \right)  _{n\in%
\mathbb{N}
}$, parameters and all $P$-null sets. We can distinguish two cases. Indeed, if
$t\in\left]  \tau_{n},\tau_{n+1}\right[  $, assuming that the law of
$\theta_{\tau_{n}}$ is known then we have to solve a prediction problem. The
case $t=\tau_{n}$ corresponds to a discrete filtering problem. We study those
two situations in the following.

\subsection{Prediction problem}\label{SubsectionPredictionSDE}

\qquad In this subsection, we assume that at each time $t\geq0$ only the
observations $\left(  \left(  v_{s\wedge\tau},\rho_{s\wedge\tau}\right)
\right)  _{0\leq s\leq t}$ are available with $\tau$ a stopping time. To deal
with the prediction problem we can just assume as in \cite{burkholderpardoux}
that after $\tau$ the observations are reduced to a new independent Brownian
motion. That is $\forall t\geq0$,
\begin{equation}
\widehat{X}_{t}=\overline{X}_{t}+\int\nolimits_{0}^{t}\mathbf{1}_{\left\{
s\leq\tau\right\}  }\delta_{2}\left(  s\right)  dB_{s}^{2}+W_{t}%
^{2}-W_{t\wedge\tau}^{2}%
\end{equation}
and%
\begin{equation}
\widehat{Y}_{t}=\overline{Y}_{t}+\int\nolimits_{0}^{t}\mathbf{1}_{\left\{
s\leq\tau\right\}  }\delta_{3}\left(  s\right)  dB_{s}^{3}+W_{t}%
^{3}-W_{t\wedge\tau}^{3}\text{.}%
\end{equation}
where $\left(  W^{2},W^{3}\right)  =W$ is an independent two-dimensional
Brownian motion. We can easily check that $\left(  \widehat{X}_{t\wedge\tau
},\widehat{Y}_{t\wedge\tau}\right)  =\left(  X_{t\wedge\tau},Y_{t\wedge\tau
}\right)  $ and $\left(  \widehat{X}_{t\vee\tau},\widehat{Y}_{t\vee\tau
}\right)  =\left(  X_{\tau}+W_{t}^{2}-W_{t\wedge\tau}^{2},Y_{\tau}+W_{t}%
^{3}-W_{t\wedge\tau}^{3}\right)  $. That permits us to use a similar approach
with the Subsection \ref{SubsectionContinuousObservationSDE}. If $\left(
\widehat{\mathcal{F}}_{t}^{23}\right)  $ is the filtration generated by
$\left(  \widehat{X},\widehat{Y}\right)  $, parameters and all $P$-null sets
then by the independence of $W$ we have $E\left[  \theta_{t}%
\vert
\mathcal{F}_{t\wedge\tau}^{23}\right]  =E\left[  \theta_{t}%
\vert
\widehat{\mathcal{F}}_{t}^{23}\right]  $. Let $\forall\varphi\in L^{\infty
}\left(
\mathbb{R}
;%
\mathbb{R}
\right)  ,$ $\forall t\geq0,$
\begin{equation}
\widehat{Z}_{t}=\widetilde{Z}_{t\wedge\tau}%
\end{equation}%
\begin{equation}
\widehat{\pi}_{t}\left(  \varphi\right)  =E\left[  \varphi\left(  \theta
_{t}\right)
\vert
\widehat{\mathcal{F}}_{t}^{23}\right]
\end{equation}%
\begin{equation}
\widehat{\zeta}_{t}=\widetilde{E}\left[  \widehat{Z}_{t}%
\vert
\widehat{\mathcal{F}}_{t}^{23}\right]  =\zeta_{t\wedge\tau}%
\end{equation}%
\begin{equation}
\widehat{\varsigma}_{t}\left(  \varphi\right)  =\widetilde{E}\left[
\widehat{Z}_{t}\varphi\left(  \theta_{t}\right)
\vert
\widehat{\mathcal{F}}_{t}^{23}\right]  =\widehat{\zeta}_{t}\widehat{\pi}%
_{t}\left(  \varphi\right)
\end{equation}
Note that if $t\leq\tau$ then $\widehat{\varsigma}_{t}=\varsigma_{t}$ and
$\zeta_{t}=\widehat{\zeta}_{t}$.

The dynamics of the unnormalized law $\widehat{\varsigma}$ is given by the

\begin{theorem}
\label{TheoremZakaiPredEqSDE}(The Zakai prediction equation)

If $O\subseteq%
\mathbb{R}
$ is an open set containing $\left[  0,1\right]  $ and $\varphi\in
C^{2}\left(  O\right)  $ then $\forall t>\tau$,
\begin{align}
\widehat{\varsigma}_{t}\left(  \varphi\right)   &  =\varsigma_{0}\left(
\varphi\right)  \varsigma_{\tau}\left(  \mathbf{1}\right)  +\int
\nolimits_{0}^{t}\widehat{\varsigma}_{s}\left(  \mathcal{A}_{s}^{1}%
\varphi\right)  ds+\int\nolimits_{0}^{\tau}\varsigma_{s}\left(  \mathcal{A}%
_{s}^{2}\varphi-\varsigma_{0}\left(  \varphi\right)  \mathcal{A}_{s}%
^{2}\mathbf{1}\right)  dX_{s}\label{ZakaiPredEqSDE}\\
&  +\int\nolimits_{0}^{\tau}\varsigma_{s}\left(  \mathcal{A}_{s}^{3}%
\varphi-\varsigma_{0}\left(  \varphi\right)  \mathcal{A}_{s}^{3}%
\mathbf{1}\right)  dY_{s}\nonumber
\end{align}
where $\forall i\in\left\{  1,2,3\right\}  $, $\mathcal{A}_{t}^{1}$ and
$\varsigma$ are given in the Theorem \ref{TheoremZakaiEqSDE}.
\end{theorem}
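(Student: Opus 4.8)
The plan is to reduce everything to the already-proven Zakai equation (Theorem \ref{TheoremZakaiEqSDE}) by exploiting the explicit relation $\widehat{Z}_t=\widetilde{Z}_{t\wedge\tau}$ together with the independence of the auxiliary Brownian motion $W=(W^2,W^3)$ after $\tau$. First I would split the analysis at the stopping time $\tau$. On $\{t\le\tau\}$ there is nothing to do: we noted $\widehat{\varsigma}_t=\varsigma_t$ and $\widehat\zeta_t=\zeta_t$, so the claimed formula is exactly \eqref{ZakaiEqSDE} restricted to $[0,\tau]$, and in particular $\widehat{\varsigma}_\tau(\varphi)=\varsigma_\tau(\varphi)$. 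The work is therefore on $\{t>\tau\}$, where I write $\widehat{\varsigma}_t(\varphi)=\widehat{\varsigma}_\tau(\varphi)+\big(\widehat{\varsigma}_t(\varphi)-\widehat{\varsigma}_\tau(\varphi)\big)$ and analyze the increment.

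For the increment, observe that for $t>\tau$ the observation process $(\widehat X,\widehat Y)$ only picks up the independent noise $W$, so $\widehat{\mathcal{F}}_t^{23}=\widehat{\mathcal{F}}_\tau^{23}\vee\sigma(W_s-W_{s\wedge\tau}:s\le t)$ with the second factor independent of the signal $\theta$ and of $\mathcal{F}_\tau$. Hence, since $\widehat{Z}_t=\widehat{Z}_\tau=\widetilde{Z}_\tau$ is $\widehat{\mathcal F}_\tau^{23}$-measurable for $t\ge\tau$,
\begin{equation}
\widehat{\varsigma}_t(\varphi)=\widetilde E\big[\widetilde{Z}_\tau\,\varphi(\theta_t)\,\big|\,\widehat{\mathcal F}_t^{23}\big]
=\widetilde E\big[\widetilde{Z}_\tau\,\widetilde E[\varphi(\theta_t)\,|\,\mathcal F_\tau\vee\sigma(W)]\,\big|\,\widehat{\mathcal F}_\tau^{23}\big].
\end{equation}
By Itô's formula applied to $\varphi(\theta_s)$ between $\tau$ and $t$, using $d\varphi(\theta_s)=\mathcal A_s^1\varphi(\theta_s)\,ds+g_1(s,\theta_s)\varphi'(\theta_s)\,dB_s^1$ as in the proof of Theorem \ref{TheoremZakaiEqSDE}, and taking the conditional expectation (the $dB^1$ term drops by Lemma \ref{LemmaPermutIntegralExpect}, the $W$-increments being independent of $B^1$), one gets
\begin{equation}
\widehat{\varsigma}_t(\varphi)=\widehat{\varsigma}_\tau(\varphi)+\int\nolimits_\tau^t\widehat{\varsigma}_s\big(\mathcal A_s^1\varphi\big)\,ds .
\end{equation}
Here I use that for $s\ge\tau$, $\widehat\zeta_s=\zeta_{s\wedge\tau}=\zeta_\tau$ and $\widehat\pi_s(\mathcal A_s^1\varphi)=\widetilde E[\cdots]/\widehat\zeta_s$ reassemble into $\widehat{\varsigma}_s(\mathcal A_s^1\varphi)$; no stochastic integral against $dX$ or $dY$ appears because on $(\tau,t]$ those differentials carry only independent noise whose conditional expectation vanishes.

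It remains to rewrite the "initial" block $\widehat{\varsigma}_\tau(\varphi)$ and the $[0,\tau]$ part of $\int\widehat{\varsigma}_s(\mathcal A_s^1\varphi)\,ds$ in the form stated. Since $\widehat{\varsigma}_\tau(\varphi)=\varsigma_\tau(\varphi)=\zeta_\tau\pi_\tau(\varphi)$ and, by the Kallianpur--Striebel-type normalization, $\pi_\tau(\varphi)$ together with $\zeta_\tau=\varsigma_\tau(\mathbf 1)$ reproduces $\varsigma_0(\varphi)\varsigma_\tau(\mathbf 1)$ modulo the martingale corrections from \eqref{ZakaiEqSDE} up to time $\tau$, I would substitute the full Zakai expansion \eqref{ZakaiEqSDE} for $\varsigma_\tau(\varphi)$ and for $\varsigma_\tau(\mathbf 1)$, then use $\varsigma_s(\mathcal A_s^2\varphi)-\varsigma_0(\varphi)\varsigma_s(\mathcal A_s^2\mathbf 1)$ and the analogous $\mathcal A^3$ combination to collect the $dX_s$ and $dY_s$ integrals over $[0,\tau]$. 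Finally $\int_0^\tau\widehat{\varsigma}_s(\mathcal A_s^1\varphi)\,ds+\int_\tau^t\widehat{\varsigma}_s(\mathcal A_s^1\varphi)\,ds=\int_0^t\widehat{\varsigma}_s(\mathcal A_s^1\varphi)\,ds$ merges the drift terms, giving exactly \eqref{ZakaiPredEqSDE}. The main obstacle I anticipate is the bookkeeping in this last step: carefully justifying that the $dX$ and $dY$ stochastic integrals truncate at $\tau$ with precisely the integrand $\varsigma_s(\mathcal A_s^i\varphi-\varsigma_0(\varphi)\mathcal A_s^i\mathbf 1)$ — this requires expanding $\varsigma_\tau(\varphi)=\varsigma_0(\varphi)\,\varsigma_\tau(\mathbf 1)\cdot[\varsigma_\tau(\varphi)/(\varsigma_0(\varphi)\varsigma_\tau(\mathbf 1))]$ consistently and matching the Itô terms, where one must be attentive to the fact that $\varsigma_0(\varphi)$ is the deterministic (or $\mathcal F_0$-measurable) initial value $\pi_0(\varphi)$, not a process. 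All analytic prerequisites (boundedness of $(\theta,X,Y)$ by Lemmas \ref{BoundednessSDE}--\ref{LeaveEdgeSDE}, local boundedness and Lipschitz regularity of the coefficients, and applicability of Lemma \ref{LemmaPermutIntegralExpect}) are already in force under the running assumptions.
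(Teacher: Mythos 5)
Your proposal is correct, but it reaches \eqref{ZakaiPredEqSDE} by a genuinely different decomposition than the paper. The paper applies It\^{o}'s formula to $\varphi(\theta_t)$ on all of $[0,t]$, multiplies by $\widehat{Z}_t=\widetilde{Z}_{t\wedge\tau}$, and then integrates $\widetilde{Z}_{\tau}\int_0^{\tau}\mathcal{A}^1_s\varphi(\theta_s)\,ds$ by parts against $d\widetilde{Z}_s=\mathcal{A}^2_s\widetilde{Z}_s\,dX_s+\mathcal{A}^3_s\widetilde{Z}_s\,dY_s$; the $dX$ and $dY$ integrands on $[0,\tau]$ then emerge from recombining $\int_0^s\mathcal{A}^1_r\varphi(\theta_r)\,dr$ into $\varphi(\theta_s)-\varphi(\theta_0)$ via Lemma \ref{LemmaPermutIntegralExpect}, and the leading term from the identification $\widetilde{E}[\widehat{Z}_t\varphi(\theta_0)\mid\widehat{\mathcal{F}}^{23}_{\infty}]=\varsigma_0(\varphi)\,\varsigma_{\tau}(\mathbf{1})$. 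You instead first isolate the genuinely new content, namely $\widehat{\varsigma}_t(\varphi)=\varsigma_{\tau}(\varphi)+\int_{\tau}^{t}\widehat{\varsigma}_s(\mathcal{A}^1_s\varphi)\,ds$ for $t>\tau$ — obtained from It\^{o} on $(\tau,t]$, the constancy of $\widehat{Z}_s=\widetilde{Z}_{\tau}$ there, and the independence of the post-$\tau$ noise $W$ — and then back-substitute the already proved Zakai equation \eqref{ZakaiEqSDE} for both $\varsigma_{\tau}(\varphi)$ and $\varsigma_{\tau}(\mathbf{1})$ (the latter with $\mathcal{A}^1_s\mathbf{1}=0$). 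The terms $\varsigma_0(\varphi)\int_0^{\tau}\varsigma_s(\mathcal{A}^i_s\mathbf{1})\,dX_s$ and their $dY$ analogues then cancel exactly, because $\varsigma_0(\varphi)$ is $\mathcal{F}_0^{23}$-measurable and passes through the stochastic integrals, and what remains is precisely \eqref{ZakaiPredEqSDE}. Your route avoids the stochastic integration by parts with iterated integrals and sidesteps the factorization $\widetilde{E}[\varphi(\theta_0)\mathcal{A}^2_s\widetilde{Z}_s\mid\widehat{\mathcal{F}}^{23}_{\infty}]=\varsigma_0(\varphi)\varsigma_s(\mathcal{A}^2_s\mathbf{1})$ that the paper's computation relies on; it also makes transparent that the prediction equation is just the Zakai equation up to $\tau$ followed by pure forward propagation of the signal law. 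The one loose spot is your final bookkeeping step, which you describe only qualitatively (``modulo the martingale corrections''); carried out, it is an exact cancellation as above, so nothing essential is missing.
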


\begin{proof}
Using the Lemma \ref{LemmaPermutIntegralExpect} and integration by part
formula we have
\begin{align*}
\widehat{\varsigma}_{t}\left(  \varphi\right)   &  =\widetilde{E}\left[
\widehat{Z}_{t}\varphi\left(  \theta_{t}\right)
\vert
\widehat{\mathcal{F}}_{t}^{23}\right] \\
&  =\widetilde{E}\left[  \widehat{Z}_{t}\varphi\left(  \theta_{0}\right)
\vert
\widehat{\mathcal{F}}_{\infty}^{23}\right]  +\widetilde{E}\left[  \widehat
{Z}_{t}\int\nolimits_{0}^{t}\mathcal{A}_{s}^{1}\varphi\left(  \theta
_{s}\right)  ds%
\vert
\widehat{\mathcal{F}}_{\infty}^{23}\right]  +\widetilde{E}\left[  \widehat
{Z}_{t}\int\nolimits_{0}^{t}g^{1}\left(  \theta_{s}\right)  D\varphi\left(
\theta_{s}\right)  dB_{s}^{1}%
\vert
\widehat{\mathcal{F}}_{\infty}^{23}\right] \\
&  =\widehat{\varsigma}_{0}\left(  \varphi\right)  \widehat{\varsigma}%
_{t}\left(  \mathbf{1}\right)  +\widetilde{E}\left[  \widetilde{Z}_{\tau}%
\int\nolimits_{0}^{\tau}\mathcal{A}_{s}^{1}\varphi\left(  \theta_{s}\right)
ds%
\vert
\widehat{\mathcal{F}}_{\infty}^{23}\right]  +\int\nolimits_{\tau}%
^{t}\widetilde{E}\left[  \widetilde{Z}_{s}\mathcal{A}_{s}^{1}\varphi\left(
\theta_{s}\right)
\vert
\widehat{\mathcal{F}}_{\infty}^{23}\right]  ds\\
&  =\varsigma_{0}\left(  \varphi\right)  \varsigma_{\tau}\left(
\mathbf{1}\right)  +\widetilde{E}\left[  \int\nolimits_{0}^{\tau}\widehat
{Z}_{s}\mathcal{A}_{s}^{1}\varphi\left(  \theta_{s}\right)  ds%
\vert
\widehat{\mathcal{F}}_{\infty}^{23}\right]  +\widetilde{E}\left[
\int\nolimits_{0}^{\tau}\mathcal{A}_{s}^{2}\widetilde{Z}_{s}\int
\nolimits_{0}^{s}\mathcal{A}_{r}^{1}\varphi\left(  \theta_{r}\right)  drdX_{s}%
\vert
\widehat{\mathcal{F}}_{\infty}^{23}\right] \\
&  +\widetilde{E}\left[  \int\nolimits_{0}^{\tau}\mathcal{A}_{s}^{3}%
\widetilde{Z}_{s}\int\nolimits_{0}^{s}\mathcal{A}_{r}^{1}\varphi\left(
\theta_{r}\right)  drdY_{s}%
\vert
\widehat{\mathcal{F}}_{\infty}^{23}\right]  +\int\nolimits_{\tau}%
^{t}\widetilde{E}\left[  \widehat{Z}_{s}\mathcal{A}_{s}^{1}\varphi\left(
\theta_{s}\right)
\vert
\widehat{\mathcal{F}}_{\infty}^{23}\right]  ds
\end{align*}%
\begin{align*}
&  =\varsigma_{0}\left(  \varphi\right)  \varsigma_{\tau}\left(
\mathbf{1}\right)  +\int\nolimits_{0}^{t}\widetilde{E}\left[  \widehat{Z}%
_{s}\mathcal{A}_{s}^{1}\varphi\left(  \theta_{s}\right)
\vert
\widehat{\mathcal{F}}_{\infty}^{23}\right]  ds+\int\nolimits_{0}^{\tau
}\widetilde{E}\left[  \left(  \varphi\left(  \theta_{s}\right)  -\varphi
\left(  \theta_{0}\right)  \right)  \mathcal{A}_{s}^{2}\widetilde{Z}_{s}%
\vert
\widehat{\mathcal{F}}_{\infty}^{23}\right]  dX_{s}\\
&  +\int\nolimits_{0}^{\tau}\widetilde{E}\left[  \left(  \varphi\left(
\theta_{s}\right)  -\varphi\left(  \theta_{0}\right)  \right)  \mathcal{A}%
_{s}^{3}\widetilde{Z}_{s}%
\vert
\widehat{\mathcal{F}}_{\infty}^{23}\right]  dY_{s}\\
&  =\varsigma_{0}\left(  \varphi\right)  \varsigma_{\tau}\left(
\mathbf{1}\right)  +\int\nolimits_{0}^{t}\widehat{\varsigma}_{s}\left(
\mathcal{A}_{s}^{1}\varphi\right)  ds+\int\nolimits_{0}^{\tau}\varsigma
_{s}\left(  \mathcal{A}_{s}^{2}\varphi-\varsigma_{0}\left(  \varphi\right)
\mathcal{A}_{s}^{2}\mathbf{1}\right)  dX_{s}\\
&  +\int\nolimits_{0}^{\tau}\varsigma_{s}\left(  \mathcal{A}_{s}^{3}%
\varphi-\varsigma_{0}\left(  \varphi\right)  \mathcal{A}_{s}^{3}%
\mathbf{1}\right)  dY_{s}%
\end{align*}

\end{proof}

The dynamics of the normalized law $\widehat{\pi}$ is given by the

\begin{theorem}
\label{TheoremKushnerStratonovichPredEqSDE}(The Kushner-Stratonovich
prediction equation)

If $O\subseteq%
\mathbb{R}
$ is an open set containing $\left[  0,1\right]  $ and $\varphi\in
C^{2}\left(  O\right)  $ then $\forall t>\tau$,%
\begin{align}
\widehat{\pi}_{t}\left(  \varphi\right)   &  =\pi_{0}\left(  \varphi\right)
+\int\nolimits_{0}^{t}\widehat{\pi}_{s}\left(  \mathcal{A}_{s}^{1}%
\varphi\right)  ds-\int\nolimits_{0}^{\tau}\pi_{s}\left(  \mathcal{A}_{s}%
^{2}\varphi-\pi_{0}\left(  \varphi\right)  \mathcal{A}_{s}^{2}\mathbf{1}%
\right)  \pi_{s}\left(  \mathcal{A}_{s}^{2}\mathbf{1}\right)
ds\label{KushnerStratonovichPredEqSDE}\\
&  -\int\nolimits_{0}^{\tau}\pi_{s}\left(  \mathcal{A}_{s}^{3}\varphi-\pi
_{0}\left(  \varphi\right)  \mathcal{A}_{s}^{3}\mathbf{1}\right)  \pi
_{s}\left(  \mathcal{A}_{s}^{3}\mathbf{1}\right)  ds\nonumber
\end{align}

\end{theorem}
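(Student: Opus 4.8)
The plan is to normalize the Zakai prediction equation of Theorem \ref{TheoremZakaiPredEqSDE} exactly as the Kushner--Stratonovich equation was obtained from the Zakai equation in the proof of Theorem \ref{TheoremKushnerStratonovichEqSDE}. Recall that $\widehat{\pi}_{t}\left(\varphi\right)=\widehat{\varsigma}_{t}\left(\varphi\right)/\widehat{\zeta}_{t}$ with $\widehat{\zeta}_{t}=\widehat{\varsigma}_{t}\left(\mathbf{1}\right)=\zeta_{t\wedge\tau}$, and that $\widehat{\varsigma}_{t}\left(\varphi\right)$ solves $\left(\ref{ZakaiPredEqSDE}\right)$. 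First I would set $\varphi=\mathbf{1}$ in $\left(\ref{ZakaiPredEqSDE}\right)$: since $\mathcal{A}_{s}^{1}\mathbf{1}=0$ and $\varsigma_{0}\left(\mathbf{1}\right)=1$, the drift and both correction integrands vanish, which re-derives $\widehat{\zeta}_{t}=\zeta_{t\wedge\tau}$ and, via $\left(\ref{ZakaiNormalizerEq}\right)$, the stopped-exponential representation of $\widehat{\zeta}$, so that $d\widehat{\zeta}_{t}=\mathbf{1}_{\left\{t\leq\tau\right\}}\widehat{\zeta}_{t}\left(\widehat{\pi}_{t}\left(\mathcal{A}_{t}^{2}\mathbf{1}\right)dX_{t}+\widehat{\pi}_{t}\left(\mathcal{A}_{t}^{3}\mathbf{1}\right)dY_{t}\right)$.

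Next I would apply the It\^{o} formula to $1/\widehat{\zeta}_{t}$, which as in the proof of Theorem \ref{TheoremKushnerStratonovichEqSDE} gives
\[
d\left(\frac{1}{\widehat{\zeta}_{t}}\right)=\frac{\mathbf{1}_{\left\{t\leq\tau\right\}}}{\widehat{\zeta}_{t}}\left(\left(\widehat{\pi}_{t}^{2}\left(\mathcal{A}_{t}^{2}\mathbf{1}\right)+\widehat{\pi}_{t}^{2}\left(\mathcal{A}_{t}^{3}\mathbf{1}\right)\right)dt-\widehat{\pi}_{t}\left(\mathcal{A}_{t}^{2}\mathbf{1}\right)dX_{t}-\widehat{\pi}_{t}\left(\mathcal{A}_{t}^{3}\mathbf{1}\right)dY_{t}\right),
\]
and then the It\^{o} product rule to $\widehat{\pi}_{t}\left(\varphi\right)=\widehat{\varsigma}_{t}\left(\varphi\right)\cdot\left(1/\widehat{\zeta}_{t}\right)$, using $\left(\ref{ZakaiPredEqSDE}\right)$ for $d\widehat{\varsigma}_{t}\left(\varphi\right)$ and retaining the cross-variation term. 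Every resulting term carries the factor $\mathbf{1}_{\left\{t\leq\tau\right\}}$ except the drift $\widehat{\varsigma}_{t}\left(\mathcal{A}_{t}^{1}\varphi\right)/\widehat{\zeta}_{t}\,dt=\widehat{\pi}_{t}\left(\mathcal{A}_{t}^{1}\varphi\right)dt$; this is exactly why the prediction filter keeps evolving through the generator $\mathcal{A}^{1}$ after $\tau$, while the quadratic corrections and the observation integrals are frozen at their time-$\tau$ values. Collecting the terms, rewriting $\widehat{\varsigma}_{s}\left(\psi\right)=\widehat{\zeta}_{s}\widehat{\pi}_{s}\left(\psi\right)$, and using that on $\left[0,\tau\right]$ one has $\widehat{\pi}_{s}=\pi_{s}$ so the $\mathbf{1}_{\left\{s\leq\tau\right\}}$-weighted integrals reduce to integrals over $\left[0,\tau\right]$ with $\pi$, would produce $\left(\ref{KushnerStratonovichPredEqSDE}\right)$ for every $t>\tau$.

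For the exchanges between conditional expectation and stochastic integration I would invoke Lemma \ref{LemmaPermutIntegralExpect} with $\widehat{Z}_{t}=\widetilde{Z}_{t\wedge\tau}$ replacing $\widetilde{Z}_{t}$ and $\widehat{\mathcal{F}}_{\infty}^{23}$ replacing $\mathcal{F}_{\infty}^{23}$; its $L^{2}$ hypothesis holds because $\left(\theta_{t},X_{t},Y_{t}\right)$ is continuous with $\theta$ bounded by Lemma \ref{RemainInteriorSDE} and the logistic relations $\left(\ref{RelationVX}\right)$, $\left(\ref{RelationRhoY}\right)$, because $\varphi,\varphi^{\prime},\varphi^{\prime\prime}$ are bounded on the part of $O$ visited by $\theta$, because $f_{1},g_{1},f,g$ are locally bounded and Lipschitz in the state, and because Assumption \ref{HypothesisFilter1_12} keeps $1/\delta_{2},1/\delta_{3}$ locally bounded.

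The main obstacle I anticipate is the rigorous treatment of the kink of $1/\widehat{\zeta}_{t}$ at the stopping time $\tau$. The cleanest way around it is to carry out the computation separately on $\left\{t\leq\tau\right\}$, where it reproduces verbatim the proof of Theorem \ref{TheoremKushnerStratonovichEqSDE}, and on $\left\{t>\tau\right\}$, where $\widehat{\zeta}_{t}$ is $\mathcal{F}_{\tau}^{23}$-measurable and constant in $t$ so only the $\mathcal{A}^{1}$-drift survives, and then to glue the two regimes using the $P$-a.s. continuity of $t\mapsto\widehat{\varsigma}_{t}\left(\varphi\right)$ and $t\mapsto\widehat{\zeta}_{t}$ at $\tau$; a localization with the stopping times $\tau\wedge n$ makes this splicing precise. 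A secondary point is that $\widehat{\pi}_{t}\left(\mathcal{A}_{t}^{1}\varphi\right)$ involves $\varphi^{\prime}$ and $\varphi^{\prime\prime}$, so the hypothesis $\varphi\in C^{2}\left(O\right)$ together with the boundedness of $\theta$ is needed to keep the first integral well defined for all $t>\tau$ and not merely up to $\tau$.
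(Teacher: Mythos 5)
Your overall strategy -- normalize the Zakai prediction equation by applying It\^{o}'s formula to $1/\widehat{\zeta}_{t}$ and the product rule to $\widehat{\varsigma}_{t}\left(\varphi\right)\cdot\left(1/\widehat{\zeta}_{t}\right)$ -- is the same family of argument as the paper's, which also divides $\left(\ref{ZakaiPredEqSDE}\right)$ by $\widehat{\zeta}_{t}=\zeta_{\tau}$ and integrates by parts against $d\left(1/\zeta_{s}\right)$. The preliminary observations (that $\widehat{\zeta}_{t}=\zeta_{t\wedge\tau}$ is frozen after $\tau$, that only the $\mathcal{A}^{1}$ drift survives for $t>\tau$, and the justification of Lemma \ref{LemmaPermutIntegralExpect} with $\widehat{Z}$ in place of $\widetilde{Z}$) are all sound and consistent with what the paper does.

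However, there is a genuine gap at the decisive step. The route you describe -- ``every term carries $\mathbf{1}_{\left\{t\leq\tau\right\}}$ except the $\mathcal{A}^{1}$ drift, so the quadratic corrections and the observation integrals are frozen at their time-$\tau$ values'' -- terminates at the representation
\begin{equation*}
\widehat{\pi}_{t}\left(\varphi\right)=\pi_{\tau}\left(\varphi\right)+\int\nolimits_{\tau}^{t}\widehat{\pi}_{s}\left(\mathcal{A}_{s}^{1}\varphi\right)ds
\end{equation*}
with $\pi_{\tau}\left(\varphi\right)$ given by the Kushner--Stratonovich equation $\left(\ref{KushnerStratonovichEqSDE}\right)$ evaluated at $\tau$. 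That expression still contains the stochastic integrals $\int\nolimits_{0}^{\tau}\left(\pi_{s}\left(\mathcal{A}_{s}^{i}\varphi\right)-\pi_{s}\left(\varphi\right)\pi_{s}\left(\mathcal{A}_{s}^{i}\mathbf{1}\right)\right)dX_{s}$ (resp.\ $dY_{s}$) and carries the weight $\pi_{s}\left(\varphi\right)$ inside the Lebesgue correction terms, whereas the asserted formula $\left(\ref{KushnerStratonovichPredEqSDE}\right)$ contains no $dX_{s}$ or $dY_{s}$ integrals at all and has the weight $\pi_{0}\left(\varphi\right)$. Your closing sentence ``collecting the terms \ldots\ would produce $\left(\ref{KushnerStratonovichPredEqSDE}\right)$'' therefore asserts exactly the nontrivial cancellation without supplying it. In the paper this cancellation is the whole content of the proof: one must start from the already \emph{centred} integrands $\varsigma_{s}\left(\mathcal{A}_{s}^{i}\varphi-\varsigma_{0}\left(\varphi\right)\mathcal{A}_{s}^{i}\mathbf{1}\right)$ of $\left(\ref{ZakaiPredEqSDE}\right)$, integrate them by parts against $1/\zeta_{\tau}$, and then use the Zakai equation to recognize $\varsigma_{s}\left(\varphi\right)-\varsigma_{0}\left(\varphi\right)$ inside the $\int\left(\cdot\right)d\left(1/\zeta_{s}\right)$ terms so that the $dX_{s}$ and $dY_{s}$ contributions cancel, leaving only the $\pi_{0}\left(\varphi\right)$-weighted Lebesgue integrals. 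Since the two endpoint formulas are not visibly identical (their difference is a nonzero-looking combination of a drift term in $\pi_{s}\left(\varphi\right)-\pi_{0}\left(\varphi\right)$ and the two innovation-type stochastic integrals), you cannot pass from one to the other by inspection; either you must reproduce the paper's cancellation mechanism explicitly, or you end up proving a different-looking (even if probabilistically equivalent) identity than the one stated.
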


\begin{proof}
Let consider the probability $\widetilde{P}$ and $t>\tau$. Using integration
by part formula we have
\begin{align*}
\frac{\widehat{\varsigma}_{t}}{\widehat{\zeta}_{t}}\left(  \varphi\right)   &
=\varsigma_{0}\left(  \varphi\right)  \frac{\varsigma_{\tau}\left(
\mathbf{1}\right)  }{\zeta_{\tau}}+\frac{1}{\zeta_{\tau}}\int\nolimits_{\tau
}^{t}\widehat{\varsigma}_{s}\left(  \mathcal{A}_{s}^{1}\varphi\right)
ds+\frac{1}{\zeta_{\tau}}\int\nolimits_{0}^{\tau}\widehat{\varsigma}%
_{s}\left(  \mathcal{A}_{s}^{1}\varphi\right)  ds\\
&  +\frac{1}{\zeta_{\tau}}\int\nolimits_{0}^{\tau}\varsigma_{s}\left(
\mathcal{A}_{s}^{2}\varphi-\varsigma_{0}\left(  \varphi\right)  \mathcal{A}%
_{s}^{2}\mathbf{1}\right)  dX_{s}+\frac{1}{\zeta_{\tau}}\int\nolimits_{0}%
^{\tau}\varsigma_{s}\left(  \mathcal{A}_{s}^{3}\varphi-\varsigma_{0}\left(
\varphi\right)  \mathcal{A}_{s}^{3}\mathbf{1}\right)  dY_{s}\\
&  =\varsigma_{0}\left(  \varphi\right)  -\int\nolimits_{0}^{\tau}\pi
_{s}\left(  \mathcal{A}_{s}^{2}\varphi-\varsigma_{0}\left(  \varphi\right)
\mathcal{A}_{s}^{2}\mathbf{1}\right)  \pi_{s}\left(  \mathcal{A}_{s}%
^{2}\mathbf{1}\right)  ds\\
&  -\int\nolimits_{0}^{\tau}\pi_{s}\left(  \mathcal{A}_{s}^{3}\varphi
-\varsigma_{0}\left(  \varphi\right)  \mathcal{A}_{s}^{3}\mathbf{1}\right)
\pi_{s}\left(  \mathcal{A}_{s}^{3}\mathbf{1}\right)  ds\\
&  +\int\nolimits_{0}^{\tau}\left(  \int\nolimits_{0}^{s}\varsigma_{r}\left(
\mathcal{A}_{r}^{2}\varphi-\varsigma_{0}\left(  \varphi\right)  \mathcal{A}%
_{r}^{2}\mathbf{1}\right)  dr\right)  d\left(  \frac{1}{\zeta_{s}}\right) \\
&  +\int\nolimits_{0}^{\tau}\left(  \int\nolimits_{0}^{s}\varsigma_{r}\left(
\mathcal{A}_{r}^{3}\varphi-\varsigma_{0}\left(  \varphi\right)  \mathcal{A}%
_{r}^{3}\mathbf{1}\right)  dr\right)  d\left(  \frac{1}{\zeta_{s}}\right) \\
&  +\int\nolimits_{0}^{t}\widehat{\pi}_{s}\left(  \mathcal{A}_{s}^{1}%
\varphi\right)  ds+\int\nolimits_{0}^{\tau}\left(  \int\nolimits_{0}%
^{s}\varsigma_{r}\left(  \mathcal{A}_{r}^{1}\varphi\right)  dr\right)
d\left(  \frac{1}{\zeta_{s}}\right)
\end{align*}%
\begin{align*}
&  +\int\nolimits_{0}^{\tau}\pi_{s}\left(  \mathcal{A}_{s}^{2}\varphi
-\varsigma_{0}\left(  \varphi\right)  \mathcal{A}_{s}^{2}\mathbf{1}\right)
dX_{s}+\int\nolimits_{0}^{\tau}\pi_{s}\left(  \mathcal{A}_{s}^{3}%
\varphi-\varsigma_{0}\left(  \varphi\right)  \mathcal{A}_{s}^{3}%
\mathbf{1}\right)  dY_{s}\\
&  =\varsigma_{0}\left(  \varphi\right)  +\int\nolimits_{0}^{t}\widehat{\pi
}_{s}\left(  \mathcal{A}_{s}^{1}\varphi\right)  ds-\int\nolimits_{0}^{\tau}%
\pi_{s}\left(  \mathcal{A}_{s}^{2}\varphi-\varsigma_{0}\left(  \varphi\right)
\mathcal{A}_{s}^{2}\mathbf{1}\right)  \pi_{s}\left(  \mathcal{A}_{s}%
^{2}\mathbf{1}\right)  ds\\
&  -\int\nolimits_{0}^{\tau}\pi_{s}\left(  \mathcal{A}_{s}^{3}\varphi
-\varsigma_{0}\left(  \varphi\right)  \mathcal{A}_{s}^{3}\mathbf{1}\right)
\pi_{s}\left(  \mathcal{A}_{s}^{3}\mathbf{1}\right)  ds+\int\nolimits_{0}%
^{\tau}\left(  \varsigma_{s}\left(  \varphi\right)  -\varsigma_{0}\left(
\varphi\right)  \right)  d\left(  \frac{1}{\zeta_{s}}\right) \\
&  -\int\nolimits_{0}^{\tau}\left(  \int\nolimits_{0}^{s}\varsigma_{0}\left(
\varphi\right)  \varsigma_{r}\left(  \mathcal{A}_{r}^{2}\mathbf{1}\right)
dr\right)  d\left(  \frac{1}{\zeta_{s}}\right)  -\int\nolimits_{0}^{\tau
}\left(  \int\nolimits_{0}^{s}\varsigma_{0}\left(  \varphi\right)
\varsigma_{r}\left(  \mathcal{A}_{r}^{3}\mathbf{1}\right)  dr\right)  d\left(
\frac{1}{\zeta_{s}}\right) \\
&  =\varsigma_{0}\left(  \varphi\right)  +\int\nolimits_{0}^{t}\widehat{\pi
}_{s}\left(  \mathcal{A}_{s}^{1}\varphi\right)  ds-\int\nolimits_{0}^{\tau}%
\pi_{s}\left(  \mathcal{A}_{s}^{2}\varphi-\varsigma_{0}\left(  \varphi\right)
\mathcal{A}_{s}^{2}\mathbf{1}\right)  \pi_{s}\left(  \mathcal{A}_{s}%
^{2}\mathbf{1}\right)  ds\\
&  -\int\nolimits_{0}^{\tau}\pi_{s}\left(  \mathcal{A}_{s}^{3}\varphi
-\varsigma_{0}\left(  \varphi\right)  \mathcal{A}_{s}^{3}\mathbf{1}\right)
\pi_{s}\left(  \mathcal{A}_{s}^{3}\mathbf{1}\right)  ds
\end{align*}
Since $\varsigma_{0}=\pi_{0}$ the results follows.
\end{proof}

The following existence and uniqueness result holds.

\begin{theorem}
\label{UniciteZakaiPredSDE}Assume that $P$-almost surely $\left(  \theta
_{0},v_{0},\rho_{0}\right)  \in\left]  0,1\right[  ^{3}$. Let $O$ in the
Theorem \ref{TheoremZakaiPredEqSDE} be bounded and $\varphi\in L^{2}\left(
O\right)  $. If there is $T>0$ such that $f_{1}\in L^{\infty}\left(  \left]
0,T\right[  \times\Omega;W^{1,\infty}\left(  O;%
\mathbb{R}
\right)  \right)  $, $g_{1}\in L^{\infty}\left(  \left]  0,T\right[
\times\Omega;\right.  $ $\left.  W^{2,\infty}\left(  O;%
\mathbb{R}
\right)  \right)  $, $f\in L^{\infty}\left(  \left]  0,T\right[  \times
\Omega;L_{loc}^{\infty}\left(
\mathbb{R}
\times\left[  0,v_{\max}\right]  \times O;%
\mathbb{R}
\right)  \right)  $ and $g\in L^{\infty}\left(  \left]  0,T\right[
\times\Omega;L_{loc}^{\infty}\left(
\mathbb{R}
^{2}\times\left[  0,1\right]  \right.  \right.  $ $\left.  \left.  \times O;%
\mathbb{R}
\right)  \right)  $ then the solution of the equation $\left(
\ref{ZakaiPredEqSDE}\right)  $ is unique and $\widehat{\varsigma}$ can be
identified to an element of $L^{\infty}\left(  \left]  0,T\right[
\times\Omega;H_{0}^{1}\left(  O\right)  \right)  $.
\end{theorem}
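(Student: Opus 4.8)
The plan is to split the time axis at the stopping time $\tau$ (replaced by $\tau\wedge T$, so that we may assume $\tau\le T$) and to treat $[0,\tau]$ and $[\tau,T]$ separately, exploiting the fact that after $\tau$ no new information on $\theta$ enters the equation satisfied by $\widehat\varsigma$.

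First I would observe that for $t\le\tau$ one has $\widehat Z_t=\widetilde Z_{t\wedge\tau}=\widetilde Z_t$, hence $\widehat\varsigma_t=\varsigma_t$, and on $[0,\tau]$ the process $\varsigma$ is exactly the solution of the Zakai equation $(\ref{ZakaiEqSDE})$. Under the present hypotheses on $f_1,g_1,f,g$, which entail those of Theorem~\ref{UniciteZakaiSDE}, that theorem gives at once that the restriction of $\widehat\varsigma$ to $[0,\tau]$ is the unique solution and belongs to $L^\infty(]0,\tau[\times\Omega;H_0^1(O))$; moreover, by the time-continuity built into the framework of Theorem~\ref{TheoremGeneralUniciteEDPS}, the terminal value $\widehat\varsigma_\tau$ is a well-defined $\mathcal F_\tau^{23}$-measurable element of $L^2(\Omega;H_0^1(O))$, essentially bounded in $H_0^1(O)$.

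Next, for $t>\tau$ every stochastic integral in $(\ref{ZakaiPredEqSDE})$ has its upper limit frozen at $\tau$ and the Lebesgue integral splits, so that, in the adjoint (measure) form obtained by omitting $\varphi$ as in $(\ref{ZakaiEqSDEMeasure})$, the process $\widehat\varsigma$ satisfies on $[\tau,T]$
\[
\widehat\varsigma_t=\widehat\varsigma_\tau+\int\nolimits_\tau^t\mathcal A_s^{1\ast}\widehat\varsigma_s\,ds .
\]
This is a linear, noiseless evolution equation, and I would apply Theorem~\ref{TheoremGeneralUniciteEDPS} with $\mathcal X=H_0^1(O)$, $\mathcal Y=L^2(O)$, $F(s)=-\mathcal A_s^{1\ast}$, $f\equiv 0$, $G\equiv 0$, $g\equiv 0$ and initial datum $\phi=\widehat\varsigma_\tau$, after shifting the time origin to $\tau$ (legitimate because $\tau$ is a stopping time and, after $\tau$, the filtration $\widehat{\mathcal F}^{23}$ carries no further information on $\theta$, so the argument can be run pathwise). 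The required coercivity $(\ref{ConditionCoercivite})$ is nothing but the inequality
\[
-2\langle\mathcal A_t^{1\ast}\psi,\psi\rangle\ge -C_1^{T,O}\|\psi\|_{\mathcal Y}^2+C_2^{T,O}\|\psi'\|_{\mathcal Y}^2
\]
already established in the proof of Theorem~\ref{UniciteZakaiSDE}, and it holds here a fortiori because the contributions of $\mathcal A^2,\mathcal A^3$ (and of $G$) are absent. Hence $(\ref{ZakaiPredEqSDE})$ has a unique solution on $[\tau,T]$ in $L^2(]\tau,T[\times\Omega;H_0^1(O))$, and, being controlled by the energy estimate in terms of $\|\widehat\varsigma_\tau\|_{H_0^1(O)}$, it actually lies in $L^\infty(]\tau,T[\times\Omega;H_0^1(O))$.

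Finally I would concatenate the two pieces: they coincide at $t=\tau$ by construction, so their union is the unique $L^\infty(]0,T[\times\Omega;H_0^1(O))$-solution of $(\ref{ZakaiPredEqSDE})$ on $[0,T]$. I expect the only delicate point to be the bookkeeping around the random time $\tau$: one must verify that $\widehat\varsigma_\tau$ is genuinely $\mathcal F_\tau^{23}$-measurable and $H_0^1(O)$-valued (which is where the time-continuity of $\varsigma$ from Theorem~\ref{UniciteZakaiSDE} is used) and that the post-$\tau$ dynamics may indeed be treated as a deterministic parabolic problem, i.e. that the auxiliary Brownian motion $W$ introduced after $\tau$ contributes nothing to $\widehat\varsigma$. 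Everything else reduces to the two abstract results already invoked for the Zakai equation.
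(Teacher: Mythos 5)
Your argument is correct, but it takes a genuinely different route from the paper's. The paper does not split time at $\tau$: it passes directly to the adjoint (measure-valued) form of $\left(\ref{ZakaiPredEqSDE}\right)$,
\[
\widehat{\varsigma}_{t}=\varsigma_{\tau}\left(\mathbf{1}\right)\varsigma_{0}+\int\nolimits_{0}^{t}\mathcal{A}_{s}^{1\ast}\widehat{\varsigma}_{s}\,ds+\int\nolimits_{0}^{\tau}\bigl(\mathcal{A}_{s}^{2\ast}\varsigma_{s}-\varsigma_{s}\left(\mathcal{A}_{s}^{2}\mathbf{1}\right)\varsigma_{0}\bigr)\,dX_{s}+\int\nolimits_{0}^{\tau}\bigl(\mathcal{A}_{s}^{3\ast}\varsigma_{s}-\varsigma_{s}\left(\mathcal{A}_{s}^{3}\mathbf{1}\right)\varsigma_{0}\bigr)\,dY_{s},
\]
and applies Theorem \ref{TheoremGeneralUniciteEDPS} a single time on all of $\left]0,T\right[$ with $\mathcal{X}=H_{0}^{1}\left(O\right)$, $\mathcal{Y}=L^{2}\left(O\right)$, the stochastic integrands being understood as carrying the factor $\mathbf{1}_{\left\{s\leq\tau\right\}}$; the coercivity constants are exactly those already computed in the proof of Theorem \ref{UniciteZakaiSDE}. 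Your decomposition into $\left[0,\tau\right]$ (where $\widehat{\varsigma}=\varsigma$ and Theorem \ref{UniciteZakaiSDE} applies verbatim) and $\left[\tau,T\right]$ (where the dynamics degenerates to the noiseless parabolic flow $d\widehat{\varsigma}_{t}=\mathcal{A}_{t}^{1\ast}\widehat{\varsigma}_{t}\,dt$) reaches the same conclusion and is arguably more transparent: it makes explicit that after $\tau$ no observation enters and that only the coercivity of $-\mathcal{A}^{1\ast}$ is needed there, the contributions of $\mathcal{A}^{2}$, $\mathcal{A}^{3}$ and of the noise being absent. The price is the bookkeeping you yourself flag, namely the $\mathcal{F}_{\tau}^{23}$-measurability of the initial datum $\widehat{\varsigma}_{\tau}$ and the pasting of the two solutions at a random time, which the paper's single global application of the abstract theorem sidesteps entirely. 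One caveat common to both arguments: Theorem \ref{TheoremGeneralUniciteEDPS} as stated yields $L^{2}\left(\left]0,T\right[\times\Omega;H_{0}^{1}\left(O\right)\right)$ together with continuity in $L^{2}\left(O\right)$, so the promotion to $L^{\infty}$ in time with values in $H_{0}^{1}\left(O\right)$ claimed in the statement rests on an additional parabolic regularity estimate that neither your appeal to ``the energy estimate'' nor the paper's proof spells out; your proposal is no less justified than the original on this point.
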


\begin{proof}
The proof is similar to the one of the Theorem \ref{TheoremZakaiEqSDE}. We
also use the Theorem \ref{TheoremGeneralUniciteEDPS} setting $\mathcal{X}%
=H_{0}^{1}\left(  O\right)  $ and $\mathcal{Y}=L^{2}\left(  O\right)  $.
Omitting $\varphi$ in $\left(  \ref{ZakaiPredEqSDE}\right)  $ we have the
Fokker-Planck type equation
\begin{align}
\widehat{\varsigma}_{t}  &  =\varsigma_{\tau}\left(  \mathbf{1}\right)
\varsigma_{0}+\int\nolimits_{0}^{t}\mathcal{A}_{s}^{1\ast}\widehat{\varsigma
}_{s}ds+\int\nolimits_{0}^{\tau}\left(  \mathcal{A}_{s}^{2\ast}\varsigma
_{s}-\varsigma_{s}\left(  \mathcal{A}_{s}^{2}\mathbf{1}\right)  \varsigma
_{0}\right)  dX_{s}\label{ZakaiPredEqSDEMeasure}\\
&  +\int\nolimits_{0}^{\tau}\left(  \mathcal{A}_{s}^{3\ast}\varsigma
_{s}-\varsigma_{s}\left(  \mathcal{A}_{s}^{3}\mathbf{1}\right)  \varsigma
_{0}\right)  dY_{s}\nonumber
\end{align}
We already know properties of the operators $\mathcal{A}_{s}^{i}$ given by
$\left(  \ref{Operator1Filter1}\right)  -\left(  \ref{Operator3}\right)  $ and
their adjoints. Conditions of the Theorem \ref{TheoremGeneralUniciteEDPS} are
satisfied and the result follows.
\end{proof}

\subsection{Discrete filtering problem}\label{SubsectionDiscreteFiltering}

\qquad In this subsection, we consider the discrete filtering problem mentioned
above. The process $\left(  \theta,v,\rho\right)  $ is markovian and therefore
the discrete process $\left(  \left(  \theta_{\tau_{n}},v_{\tau_{n}}%
,\rho_{\tau_{n}}\right)  \right)  _{n\in%
\mathbb{N}
}$ is a Markov chain. To achieve our objective we will make some
approximations in order to have a discrete filtering problem. To make simple
the notations, we introduce when there is not ambiguity the index $n$ to play
the role of $\tau_{n}$. If $\Delta_{n}$ denotes the difference $\tau
_{n+1}-\tau_{n}$ and it is sufficiently small then the following
approximations hold for a given $\vartheta\in\left[  0,1\right]  $:
\begin{align}
\Delta\theta_{n} &  \equiv\theta_{n+1}-\theta_{n}\\
&  \simeq\Delta\tau_{n}f_{1,n}\left(  \left(  1-\vartheta\right)  \theta
_{n}+\vartheta\theta_{n+1}\right)  +\sqrt{\Delta\tau_{n}}g_{1,n}\left(
\theta_{n}\right)  \xi_{1,n}\text{,}\nonumber
\end{align}%
\begin{align}
\Delta X_{n} &  \equiv X_{n+1}-X_{n}\\
&  \simeq\sqrt{\Delta\tau_{n}}\delta_{2,n}\xi_{2,n}\nonumber\\
&  +\Delta\tau_{n}f_{n}\left(  \left(  1-\vartheta\right)  \overline{X}%
_{n}+\vartheta\overline{X}_{n+1},\left(  1-\vartheta\right)  \theta
_{n}+\vartheta\theta_{n+1}\right)  \nonumber
\end{align}
and
\begin{align}
\Delta Y_{n} &  \equiv Y_{n+1}-Y_{n}\\
&  \simeq\sqrt{\Delta\tau_{n}}\delta_{3,n}\xi_{3,n}\nonumber\\
&  +\Delta\tau_{n}g_{n}\left(  \left(  1-\vartheta\right)  \overline{X}%
_{n}+\vartheta\overline{X}_{n+1},\left(  1-\vartheta\right)  \overline{Y}%
_{n}+\vartheta\overline{Y}_{n+1},\left(  1-\vartheta\right)  \theta
_{n}+\vartheta\theta_{n+1}\right)  \nonumber
\end{align}
with $\left(  \xi_{n}\right)  _{n\in%
\mathbb{N}
}=\left(  \left(  \xi_{1,n},\xi_{2,n},\xi_{3,n}\right)  \right)  _{n\in%
\mathbb{N}
}$ a sequence of independent indentically distributed centred and normalized
gaussian vectors. The use of the term $\vartheta$ corresponds to the
well-known \textit{theta method} in the large literature of numerical
analysis. It is justified by the fact that the mathematical expectation of
$\left(  \left(  \theta_{t},v_{t},\rho_{t}\right)  \right)  _{t\geq0}$ is
differentiable and we can use the finite increments formula. That cannot be
applied to the Brownian term if we want to keep safe the properties of the
It\^{o} integral. We refer to the works in
\cite{higham,hutzenthaler,kloeden,saito} and references therein to know
further about stochastic numerical schemes.

Let $\forall n\in%
\mathbb{N}
$ , $\forall x\in%
\mathbb{R}
$, $\overline{Z}_{0}=\overline{\Lambda}_{0}=1$,
\begin{align}
\overline{\Lambda}_{n}\left(  x,y,z\right)   &  =\exp\left(  -\Delta\tau
_{n}f_{n}^{2}\left(  \left(  1-\vartheta\right)  \overline{X}_{n}%
+\vartheta\overline{X}_{n+1}\right.  \right.  \left.  \left.  ,\left(
1-\vartheta\right)  \theta_{n}+\vartheta x\right)  /2\delta_{2,n}^{2}\right)
\\
&  \;\;\;\times\exp\left(  -\Delta\tau_{n}g_{n}^{2}\left(  \left(
1-\vartheta\right)  \overline{X}_{n}+\vartheta\overline{X}_{n+1},\left(
1-\vartheta\right)  \overline{Y}_{n}+\vartheta\overline{Y}_{n+1},\right.
\right.  \left.  \left.  \left(  1-\vartheta\right)  \theta_{n}+\vartheta
x\right)  /2\delta_{3,n}^{2}\right) \nonumber\\
&  \;\;\;\times\exp\left(  yf_{n}\left(  \left(  1-\vartheta\right)
\overline{X}_{n}+\vartheta\overline{X}_{n+1}\right.  \right.  \left.  \left.
,\left(  1-\vartheta\right)  \theta_{n}+\vartheta x\right)  /\delta_{2,n}%
^{2}\right) \nonumber\\
&  \;\;\;\times\exp\left(  zg_{n}\left(  \left(  1-\vartheta\right)
\overline{X}_{n}+\vartheta\overline{X}_{n+1},\left(  1-\vartheta\right)
\overline{Y}_{n}+\vartheta\overline{Y}_{n+1},\right.  \right.  \left.  \left.
\left(  1-\vartheta\right)  \theta_{n}+\vartheta x\right)  /\delta_{3,n}%
^{2}\right) \nonumber
\end{align}
and%
\begin{equation}
\overline{Z}_{n}=\prod\nolimits_{i=0}^{n}\overline{\Lambda}_{i}\left(
\theta_{i+1},\Delta X_{i},\Delta Y_{i}\right)
\end{equation}
By the Girsanov theorem, the discrete process $\left(  \overline{Z}%
_{n}\right)  _{n\in%
\mathbb{N}
}$ is an $\left(  \mathcal{F}_{n}^{23}\right)  $-martingale and there is a
probability $\overline{P}$ such that
\begin{equation}
\overline{Z}_{n}=\frac{dP%
\vert
_{\mathcal{F}_{n}^{23}}}{d\overline{P}%
\vert
_{\mathcal{F}_{n}^{23}}}\text{.}%
\end{equation}
Note that parameters of the model and $\theta$ keep the same law either under
$P$ or $\overline{P}$. Moreover, under $\overline{P}$ parameters and $\theta$
are independent with the process $\left(  \left(  \frac{\Delta X_{n}}%
{\sqrt{\Delta_{n}}\delta_{2,n}},\frac{\Delta Y_{n}}{\sqrt{\Delta_{n}}%
\delta_{3,n}}\right)  \right)  _{n\in%
\mathbb{N}
}$ which is a sequence of independent identically distributed centred and
normalized gaussian vectors.

Let also define $\forall n\in%
\mathbb{N}
$ , $\forall x\in%
\mathbb{R}
$, $\forall\varphi\in L^{\infty}\left(
\mathbb{R}
;%
\mathbb{R}
\right)  $%
\begin{equation}
\zeta_{n}=\overline{E}\left[  \overline{Z}_{n}%
\vert
\mathcal{F}_{n}^{23}\right],
\end{equation}%
\begin{equation}
\pi_{n}\left(  \varphi\right)  =E\left[  \varphi\left(  \theta_{n}\right)
\vert
\mathcal{F}_{n}^{23}\right],
\end{equation}%
\begin{equation}
\varsigma_{n,n+1}\left(  \varphi\right)  =\overline{E}\left[  \overline{Z}%
_{n}\varphi\left(  \theta_{n+1}\right)
\vert
\mathcal{F}_{n+1}^{23}\right],
\end{equation}%
\begin{equation}
\varsigma_{n}\left(  \varphi\right)  =\overline{E}\left[  \overline{Z}%
_{n}\varphi\left(  \theta_{n}\right)
\vert
\mathcal{F}_{n}^{23}\right]  =\overline{\zeta}_{n}\pi_{n}\left(
\varphi\right)
\end{equation}
and
\begin{equation}
P_{n}\left(  x,\varphi\right)  =\overline{E}\left[  \varphi\left(
\theta_{n+1}\right)
\vert
\theta_{n}=x\right].
\end{equation}

The main result of this subsection is the following

\begin{theorem}
$\forall n\in%
\mathbb{N}
$ , $\forall\varphi\in L^{\infty}\left(
\mathbb{R}
;%
\mathbb{R}
\right)  $,
\begin{equation}
\varsigma_{n+1}\left(  \varphi\right)  =\varsigma_{n}\left(  P_{n}\left(
.,\overline{\Lambda}_{n+1}\left(  .,x,y\right)  \varphi\right)  \right)
\vert
_{x=\Delta X_{n},y=\Delta Y_{n}} \label{ZakaiDiscreteEq}%
\end{equation}
and%
\begin{equation}
\zeta_{n+1}=\varsigma_{n}\left(  P_{n}\left(  .,\overline{\Lambda}%
_{n+1}\left(  .,x,y\right)  \right)  \right)
\vert
_{x=\Delta X_{n},y=\Delta Y_{n}} \label{ZakaiNormalizerDiscreteEq}%
\end{equation}
where $\zeta_{0}=1$, $\varsigma_{0}$ is assumed known and
\begin{equation}
P_{n}\left(  x,\varphi\right)  =E\left[  \varphi\left(  \frac{x+f_{1,n}\left(
\left(  1-\vartheta\right)  x\right)  }{1+\alpha_{n}w_{n}\vartheta}\Delta
\tau_{n}+\frac{g_{1,n}\left(  x\right)  \sqrt{\Delta\tau_{n}}}{1+\alpha
_{n}w_{n}\vartheta}\xi_{n}^{1}\right)  \right]  \text{.}%
\end{equation}

\end{theorem}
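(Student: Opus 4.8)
The plan is to carry out the classical discrete-time nonlinear-filtering recursion in the reference-measure form, exploiting three structural facts already available: the one-step multiplicative update $\overline{Z}_{n+1}=\overline{Z}_{n}\,\overline{\Lambda}_{n+1}\left(\theta_{n+1},\Delta X_{n},\Delta Y_{n}\right)$, which is read off the definition of $\overline{Z}$; the fact recorded after the Girsanov change of measure that under $\overline{P}$ the signal $\left(\theta_{n}\right)_{n}$ remains a Markov chain with the same transition kernel $P_{n}$; and the fact that under $\overline{P}$ the normalized increment $\left(\Delta X_{n}/(\sqrt{\Delta_{n}}\,\delta_{2,n}),\,\Delta Y_{n}/(\sqrt{\Delta_{n}}\,\delta_{3,n})\right)$ is a standard Gaussian vector independent of the whole pair (signal, parameters), so that $\mathcal{F}_{n+1}^{23}=\mathcal{F}_{n}^{23}\vee\sigma\left(\Delta X_{n},\Delta Y_{n}\right)$ with $\left(\Delta X_{n},\Delta Y_{n}\right)$ independent of $\mathcal{F}_{n}^{23}\vee\sigma\left(\theta_{0},\dots,\theta_{n+1}\right)$ and of $\overline{Z}_{n}$.

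First I would start from $\varsigma_{n+1}\left(\varphi\right)=\overline{E}\left[\overline{Z}_{n+1}\varphi\left(\theta_{n+1}\right)\mid\mathcal{F}_{n+1}^{23}\right]$, substitute the update of $\overline{Z}$, and apply the freezing (independence) lemma: since $\left(\Delta X_{n},\Delta Y_{n}\right)$ is $\mathcal{F}_{n+1}^{23}$-measurable and independent of $\mathcal{F}_{n}^{23}\vee\sigma\left(\theta_{0},\dots,\theta_{n+1}\right)$, conditioning on $\mathcal{F}_{n+1}^{23}$ is achieved by freezing $\left(\Delta X_{n},\Delta Y_{n}\right)$ to a deterministic pair $\left(x,y\right)$, conditioning the rest on $\mathcal{F}_{n}^{23}$, and reinstating $\left(x,y\right)=\left(\Delta X_{n},\Delta Y_{n}\right)$ afterwards; this gives $\varsigma_{n+1}\left(\varphi\right)=\left(\overline{E}\left[\overline{Z}_{n}\,\overline{\Lambda}_{n+1}\left(\theta_{n+1},x,y\right)\varphi\left(\theta_{n+1}\right)\mid\mathcal{F}_{n}^{23}\right]\right)\big|_{x=\Delta X_{n},\,y=\Delta Y_{n}}$. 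I would then insert a conditioning on $\mathcal{F}_{n}^{23}\vee\sigma\left(\theta_{0},\dots,\theta_{n}\right)$: because $\overline{Z}_{n}$ is measurable with respect to that $\sigma$-algebra and, under $\overline{P}$, the conditional law of $\theta_{n+1}$ given it depends only on $\theta_{n}$ (Markov property plus independence of the observation noise from the signal), the inner expectation collapses to $\overline{Z}_{n}\,P_{n}\bigl(\theta_{n},\overline{\Lambda}_{n+1}\left(\cdot,x,y\right)\varphi\bigr)$, a product of an $\mathcal{F}_{n}^{23}\vee\sigma\left(\theta_{0},\dots,\theta_{n}\right)$-measurable factor by a function of $\theta_{n}$. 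Taking $\overline{E}\left[\,\cdot\mid\mathcal{F}_{n}^{23}\right]$ and recognizing the definition of $\varsigma_{n}$ yields $\varsigma_{n+1}\left(\varphi\right)=\varsigma_{n}\bigl(P_{n}\left(\cdot,\overline{\Lambda}_{n+1}\left(\cdot,x,y\right)\varphi\right)\bigr)\big|_{x=\Delta X_{n},\,y=\Delta Y_{n}}$, i.e. $\left(\ref{ZakaiDiscreteEq}\right)$; equation $\left(\ref{ZakaiNormalizerDiscreteEq}\right)$ and the initialization $\zeta_{0}=1$ then follow by taking $\varphi=\mathbf{1}$ and using $\zeta_{n+1}=\varsigma_{n+1}\left(\mathbf{1}\right)$ together with $\overline{Z}_{0}=1$.

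For the explicit form of $P_{n}\left(x,\varphi\right)$ I would start from the $\vartheta$-scheme $\theta_{n+1}=\theta_{n}+\Delta\tau_{n}f_{1,n}\bigl(\left(1-\vartheta\right)\theta_{n}+\vartheta\theta_{n+1}\bigr)+\sqrt{\Delta\tau_{n}}\,g_{1,n}\left(\theta_{n}\right)\xi_{n}^{1}$ and use that $f_{1}\left(t,\cdot\right)=\alpha\left(t\right)\left(1-w\left(t\right)\,\cdot\right)$ is affine in the state. Hence the implicit relation can be solved in closed form for $\theta_{n+1}$, which becomes an affine function of the Gaussian $\xi_{n}^{1}$ whose coefficients depend on $\theta_{n}$ only; averaging $\varphi$ of that expression over $\xi_{n}^{1}\sim\mathcal{N}\left(0,1\right)$ — legitimate because under $\overline{P}$ the variable $\xi_{n}^{1}$ is independent of $\theta_{n}$ and $\theta$ keeps its $P$-law — produces the stated formula for $P_{n}\left(x,\varphi\right)$.

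The step I expect to be the main obstacle is the rigorous justification of the freezing step together with the attendant $\sigma$-algebra bookkeeping. One must verify that under $\overline{P}$ the new normalized observation increment is genuinely independent of $\mathcal{F}_{n}^{23}$ \emph{augmented} by $\sigma\left(\theta_{0},\dots,\theta_{n+1}\right)$ and by $\overline{Z}_{n}$, not merely of $\mathcal{F}_{n}^{23}$ itself — this is exactly where the precise Girsanov construction and the normalization property of $\overline{\Lambda}_{n+1}$ (that $\overline{\Lambda}_{n+1}\left(\theta_{n+1},\cdot,\cdot\right)$ integrates to one against the Gaussian reference law, so that $\overline{Z}$ really is a martingale under $\overline{P}$) enter. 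A related delicacy is that $\overline{\Lambda}_{n+1}$ depends, besides on $\theta_{n+1}$ and on the observation increment, on the deterministic-part variables $\overline{X}_{n},\overline{X}_{n+1},\overline{Y}_{n},\overline{Y}_{n+1}$; writing $\left(\ref{ZakaiDiscreteEq}\right)$ with $P_{n}$ acting "on $\theta_{n}$" tacitly absorbs $\overline{X}_{n},\overline{Y}_{n}$ into the (enlarged) state of the Markov chain, and this should be made explicit. Apart from these points the argument is the standard Zakai recursion followed by the elementary inversion of the $\vartheta$-scheme.
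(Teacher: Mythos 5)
Your proposal is correct and follows essentially the same route as the paper: the multiplicative update $\overline{Z}_{n+1}=\overline{Z}_{n}\,\overline{\Lambda}_{n+1}\left(\theta_{n+1},\Delta X_{n},\Delta Y_{n}\right)$, the tower/Markov step producing $P_{n}$, the freezing of $\left(\Delta X_{n},\Delta Y_{n}\right)$ when passing from $\mathcal{F}_{n+1}^{23}$ to $\mathcal{F}_{n}^{23}$, the choice $\varphi=\mathbf{1}$ for the normalizer, and the inversion of the affine $\vartheta$-scheme for the explicit kernel. The only difference is the order of the freezing and Markov-conditioning steps, which is immaterial, and your cautionary remarks about the $\sigma$-algebra bookkeeping flag exactly the points the paper's own proof passes over silently.
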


\begin{proof}%
\begin{align*}
\varsigma_{n+1}\left(  \varphi\right)   &  =\overline{E}\left[  \overline
{Z}_{n+1}\varphi\left(  \theta_{n+1}\right)
\vert
\mathcal{F}_{n+1}^{23}\right] \\
&  =\overline{E}\left[  \overline{Z}_{n}\overline{\Lambda}_{n+1}\left(
\theta_{n+1},\Delta X_{n},\Delta Y_{n}\right)  \varphi\left(  \theta
_{n+1}\right)
\vert
\mathcal{F}_{n+1}^{23}\right] \\
&  =\overline{E}\left[  \overline{Z}_{n}\overline{E}\left[  \overline{\Lambda
}_{n+1}\left(  \theta_{n+1},\Delta X_{n},\Delta Y_{n}\right)  \varphi\left(
\theta_{n+1}\right)
\vert
\theta_{n},\mathcal{F}_{n+1}^{23}\right]
\vert
\mathcal{F}_{n+1}^{23}\right] \\
&  =\overline{E}\left[  \overline{Z}_{n}P_{n}\left(  \theta_{n},\overline
{\Lambda}_{n+1}\left(  .,x,y\right)  \varphi\right)
\vert
_{x=\Delta X_{n},y=\Delta Y_{n}}%
\vert
\mathcal{F}_{n+1}^{23}\right] \\
&  =\overline{E}\left[  \overline{Z}_{n}P_{n}\left(  \theta_{n},\overline
{\Lambda}_{n+1}\left(  .,x,y\right)  \varphi\left(  .\right)  \right)
\vert
\mathcal{F}_{n}^{23}\right]
\vert
_{x=\Delta X_{n},y=\Delta Y_{n}}\\
&  =\varsigma_{n}\left(  P_{n}\left(  .,\overline{\Lambda}_{n+1}\left(
.,x,y\right)  \varphi\right)  \right)
\vert
_{x=\Delta X_{n},y=\Delta Y_{n}}%
\end{align*}
The equation $\left(  \ref{ZakaiNormalizerDiscreteEq}\right)  $ is obtained
when we apply simply the formula $\zeta_{n+1}=\varsigma_{n}\left(
\mathbf{1}\right)  $.
\begin{align*}
P_{n}\left(  x,\varphi\right)   &  =\overline{E}\left[  \varphi\left(
\theta_{n+1}\right)
\vert
\theta_{n}=x\right] \\
&  =E\left[  \varphi\left(  \theta_{n+1}\right)
\vert
\theta_{n}=x\right] \\
&  =E\left[  \varphi\left(  \frac{x+\alpha_{n}\left(  1-w_{n}\left(
1-\vartheta\right)  x\right)  }{1+\alpha_{n}w_{n}\vartheta}\Delta\tau
_{n}+\frac{g_{1,n}\left(  x\right)  \sqrt{\Delta\tau_{n}}}{1+\alpha_{n}%
w_{n}\vartheta}\xi_{n}^{1}\right)  \right]  \text{.}%
\end{align*}

\end{proof}

\section{Numerical illustrations of the time continuous filtering}\label{NumericalEstimations}

\qquad In this section, we carry out some simulations in order to have an idea
on the behaviour of the optimal filter we have theoreticall studied in
previous sections. We use the Theorem \ref{TheoremZakaiEqSDE} to solve the
SPDE $\left(  \ref{ZakaiEqSDEMeasure}\right)  $. For the reasons of stabilty,
memory space and simulation time, we take relatively big space stepsize
$\Delta x=0.1$ and time stepsize $\Delta t=10^{-3}$ for the resolution of the
equation $\left(  \ref{ZakaiEqSDEMeasure}\right)  $. We simply use the
well-known Euler's numerical scheme\footnote{See the reference \cite{kloeden}.}.
The parameters are taken following \cite{fotsa,fotsa3}. The control strategy $u$
is given for every time $t\geq0$ by
\begin{equation}
u(t)=\sin^{2}\left(  \omega_{1}\left(  t-\varphi_{1}\right)  ^{2}\right)
\exp\left(  -\omega_{2}\left(  t-\varphi_{2}\right)  ^{2}\right)  \text{.}%
\end{equation}
The functions $\alpha,\beta$ and $\gamma$ are taken with the following form
\begin{equation}
\alpha\left(  t\right)  =p_{1}\left(  t\right)  +b_{1}\left(  1-\cos\left(
c_{1}t\right)  \right)  \left(  t-d_{1}\right)  ^{2}\text{, }\forall t\in%
\mathbb{R}
_{+}\text{,}%
\end{equation}%
\begin{equation}
\beta\left(  t,x\right)  =b_{2}\left(  1-\cos\left(  c_{2}t\right)  \right)
\left(  t-d_{2}\right)  ^{2}p_{2}\left(  x\right)  \text{, }\forall\left(
t,x\right)  \in%
\mathbb{R}
_{+}\times%
\mathbb{R}
\text{,}%
\end{equation}
and
\begin{equation}
\gamma\left(  t,x_{1},x_{2},x_{3}\right)  =b_{3}\left(  1-\cos\left(
c_{3}t\right)  \right)  \left(  t-d_{3}\right)  ^{2}\left(  x_{1}-\kappa
x_{3}\right)  x_{2}\text{, }\forall\left(  t,x\right)  \in%
\mathbb{R}
_{+}\times%
\mathbb{R}
^{3}\text{.}%
\end{equation}
$p_{1}$ is a nonnegative function of the time $t$ and $p_{2}$ is a real
positive function of $x.$ $\forall i\in\left\{  1,2,3\right\}  ,$ $b_{i}$,
$c_{i}$, and $d_{i}$ are positive coefficients corresponding respectively to
the maximal amplitude, the pulsation and the global maximun of $\alpha,\beta$
and $\gamma$. $\kappa$ is a positive constant regulating the evolution of the
rot volume with respect to the inhibition rate. The terms $1-cos(\left(
c_{i}t\right)  $ represent the seasonality probably due to climatic and
environmental variations. Concerning the random parts of the equations, the
functions $\delta_{i}$ are assumed constant (the upper bound for instance) and
$\kappa_{1}\left(  x\right)  =x\left(  1-x\right)  $ if $x\in\left]
0,1\right[  $ and is null elsewhere. The initial conditions are taken such as
$\theta\left(  0\right)  \in\left\{  0.05,0.75\right\}  $, $v\left(  0\right)
\in\left\{  0.25,0.50\right\}  $  and $\rho\left(  0\right)  \in\left\{
0.25,0.75\right\}  $.

The following table gives the assumed parameters values.

\begin{table}[ptbh]
\centering\label{TableParamFilter1}%
\begin{tabular}
[c]{|c|c|c|c|c|c|}\hline
\textbf{Parameters} & \textbf{Values} & \textbf{Source} & \textbf{Parameters}
& \textbf{Values} & \textbf{Source}\\\hline
$b_{1}$ & $5\ln\left(  10\right)  $ & \cite{fotsa3} & $v_{\max}$ & $1$ &
\cite{fotsa3}\\
$b_{2}$ & $v_{\max}\ln\left(  10^{5}v_{\max}\left(  1-\varepsilon\eta^{\ast
}\right)  \right)  /2$ & \cite{fotsa3} & $\varepsilon$ & $10^{-4}$ &
\cite{fotsa3}\\
$b_{3}$ & $v_{\max}\ln(10^{5}v_{\max})$ & \cite{fotsa3} & $\sigma$ & $0.9$ &
\cite{fotsa,fotsa3}\\
$c_{i},$ $i=1,2,3$ & $10\pi$ & \cite{fotsa,fotsa3} & $\kappa$ & $1$ &
\cite{fotsa3}\\
$d_{i},$ $i=1,2,3$ & $7.5\times10^{-1}$ & \cite{fotsa,fotsa3} & $\Delta t$ &
$10^{-3}$ & Assumed\\
$\omega_{1}$ & $25\pi$ & \cite{fotsa3} & $\eta\left(  t\right)  $ & $1/\left(
1+\varepsilon\right)  $ & \cite{fotsa3}\\
$\omega_{2}$ & $10$ & \cite{fotsa3} & $p_{1}\left(  t\right)  $ & $0$ &
\cite{fotsa3}\\
$\varphi_{1}$ & $0.4$ & \cite{fotsa3} & $p_{2}\left(  x\right)  $ & $1$ &
Assumed\\
$\varphi_{2}$ & $0.6$ & \cite{fotsa3} & $\delta_{1}=\delta_{i},$ $i=2,3$ &
$10^{-2}$ & Assumed\\\hline
\end{tabular}
\caption{Simulation parameters for the filtering}%
\end{table}

We display two groups of figures. The first one represents the dynamics both
of the inhibition rate and filter corresponding to each values of $\rho\left(
0\right)  \leq\theta\left(  0\right)  $. The second group of figures shows
relative errors of the filter corresponding to each values of $\rho\left(
0\right)  \leq\theta\left(  0\right)  $.

\begin{figure}[ptbh]
\centering\label{figInhOF}%
\begin{tabular}
[c]{cc}%
\raisebox{-0cm}{\includegraphics[
natheight=12.171500cm,
natwidth=16.138599cm,
height=6.1549cm,
width=8.1385cm
]{./OF1EstimInhRat11.jpg}} & \raisebox{-0cm}{\includegraphics[
natheight=12.171500cm,
natwidth=16.138599cm,
height=6.1549cm,
width=8.1385cm
]{./OF1EstimInhRat12.jpg}}\\
$\theta\left(  0\right)  =v\left(  0\right)  =0.05$ & $\theta\left(  0\right)
=0.05$ and $v\left(  0\right)  =0.5$\\
\raisebox{-0cm}{\includegraphics[
natheight=12.171500cm,
natwidth=16.138599cm,
height=6.1549cm,
width=8.1385cm
]{./OF1EstimInhRat21.jpg}} & \raisebox{-0cm}{\includegraphics[
natheight=12.171500cm,
natwidth=16.138599cm,
height=6.1549cm,
width=8.1385cm
]{./OF1EstimInhRat22.jpg}}\\
$\theta\left(  0\right)  =0.75$ and $v\left(  0\right)  =0.05$ &
$\theta\left(  0\right)  =0.75$ and $v\left(  0\right)  =0.5$%
\end{tabular}
\caption{Inhibition rate and optimal filter}%
\end{figure}

\begin{figure}[ptbh]
\centering\label{figErrOF}%
\begin{tabular}
[c]{cc}%
\raisebox{-0cm}{\includegraphics[
natheight=12.171500cm,
natwidth=16.138599cm,
height=6.1527cm,
width=8.1385cm
]{./OF1ReAbsErr11.jpg}} & \raisebox{-0cm}{\includegraphics[
natheight=12.171500cm,
natwidth=16.138599cm,
height=6.1527cm,
width=8.1385cm
]{./OF1ReAbsErr12.jpg}}\\
$\theta\left(  0\right)  =v\left(  0\right)  =0.05$ & $\theta\left(  0\right)
=0.05$ and $v\left(  0\right)  =0.5$\\
\raisebox{-0cm}{\includegraphics[
natheight=12.171500cm,
natwidth=16.138599cm,
height=6.1527cm,
width=8.1385cm
]{./OF1ReAbsErr21.jpg}} & \raisebox{-0cm}{\includegraphics[
natheight=12.171500cm,
natwidth=16.138599cm,
height=6.1527cm,
width=8.1385cm
]{./OF1ReAbsErr22.jpg}}\\
$\theta\left(  0\right)  =0.75$ and $v\left(  0\right)  =0.05$ &
$\theta\left(  0\right)  =0.75$ and $v\left(  0\right)  =0.5$%
\end{tabular}
\caption{Relative absolute estimation error of the optimal filter}%
\end{figure}

\newpage

Looking at the simulations, the filter display fairly good behaviour. The estimation seems better when started soon, that is $\theta$, $v$, $\rho$ are relatively small. We also note that the variance of the absolute relative error is often big and we think it is due to the strong nonlinearity of the model, the small size of parameters $\delta_{i}$, $i=1,2,3$ and even the stepsizes of the
numerical scheme.

\section{Discussion}\label{DiscussionFiltering1}

\qquad This work is concerned by a filtering problem on anthracnose disease
dynamics. The aim has been was to provide an estimation of the inhibition rate based on
the assumption that the fruit volume and the rotted volume are easier to know.
Our used approch is similar with the one in the references \cite{fotsa3} except
that we assumed a noised dynamics. The noise has been modelled by Brownian motions
in order to keep a certain regularity on the solutions although taking into
account uncontrolled parameters variations (changes on at least climate and
environment) and errors on measurements. We have proposed and proved the
well-posedness for two modelling of the noised dynamics of the observations
trying to remain realistic. That work has been done both for a within host
version and a space distributed version. The first modelling seems to be more
natural but prensents some singularities in the noise. Those singularities
make difficult the application of classical filtering theory. We have then
proposed through a logistic transformation the second modelling which keeps roughly speaking the
same properties is easier to manage.

The filtering procedure has consisted into the determination of the law of the
inhibition rate at each time conditionally upon the fruit volume and the rotted
volume measurements. We have derived for that objective the Zakai and the
Kushner-Stratonovich equations respectively for the unnormalized and the
normalized conditional distributions. Unfortunately, we have restricted
ourselves to the non-spatial model because the spatial distributed model
requires more sophisticated techinical tools. Indeed, the problem consists in
that case to find a measure valued process operating on a functional space,
since at each fixed time the inhibition rate is not anymore a real but a
function of the space variable. However, we think that it might be possible to
deal with that problem if we consider gaussian spaces\footnote{See the Chapter
5 of the book \cite{malliavin}.} and existing works such as
\cite{boulanger,loges,bogachev,bogachev11,bogachev13} on resolution of
Fokker-Plank equations on infinite dimensional spaces. Additionaly to the main
filtering problem, we have also study related realistic problems such as
prevision and discrete filtering. That has appeared important to the authors since the observations
are often discrete and incomplete.

In order to illustrate numerically the filter behaviour, we have carried out
several simulations solving a stochastic partial differential equation corresponding to the
unnormalized conditional distribution. Following the literature
\cite{bobrovsky82,bobrovsky,katzur,steinberg,yaesh,zeitouni}, the filter is
more effective as the size of the noise is weaker. Unfortunatly, that induces an increase of the variance of the filter since there is a division by the variance of the observation noise. Moreover, it makes more difficult the computations in terms of stability of the numerical scheme, time and memory required. We suggest based on the theory of Luenberger-like observers (see \cite{luenberger}) to multiply the terms coming fromthe observations in filtering equations by an adequate constant. We could also
replace those terms by the minimum between them and an adequate constant. That changes may permit to reduce the variance of the filter and unfortunately could neglect the informations brought by the observations. We expect in future studies to survey
rigourously the properties of our filters since as far as we know that has been tried in very
restrictive cases in the literature (See for instance references
\cite{bobrovsky82,bobrovsky,katzur,steinberg,yaesh,zeitouni}).

\begin{acknowledgement}
The author thanks the International Center for Pure and Applied Mathematics
(ICPAM) and all its partners for have granted the first author with the Ibni
Oumar Mahamat Saleh Prize 2014 that has partially financed a research stay of
three months at the Mathematics Institute of Marseille (I2M). The basis of
this work has been essentially realized during that stay. The author also
thanks professor Etienne PARDOUX for its precious advices.
\end{acknowledgement}

\end{document}